\documentclass[journal,twoside,web]{IEEEtran}

% \usepackage{showframe}

%%% CITATION PACKAGES %%%
\usepackage{cite}

%%% HYPERLINKS PACKAGES %%%
\usepackage[colorlinks=true,linkcolor=black]{hyperref}
\usepackage{url}

%%% FONT PACKAGES %%%
\usepackage{textcomp} % text companion fonts.
% \usepackage{xcolor} % text colors.
% \usepackage{widetext}

%%% MATH PACKAGES %%%
\usepackage{bm} % bold math.

\usepackage{amsmath,amssymb,amsfonts}
\usepackage{mathtools} % extension package to amsmath.
\usepackage{amsthm} % theorems, proofs etc.
\usepackage{commath} % \od, \pd, \md commands.
\usepackage[bb=dsserif]{mathalpha} % fix for special math alphabets.

%%% SI UNITS PACKAGE %%%
\usepackage{siunitx}

%%% GRAPHICS PACKAGES %%%
\usepackage{graphicx}
% \usepackage{subfigure}
% \usepackage{epsfig} % for postscript graphics files.

%%% TABLE PACKAGES %%%
\usepackage{tabularx,ragged2e}
\newcolumntype{L}{>{\raggedright\arraybackslash}X} % left-aligned "X" column
\newcolumntype{C}{>{\Centering\arraybackslash}X} % centered "X" column
\newcolumntype{R}{>{\raggedleft\arraybackslash}X} % right-aligned "X" column
\usepackage{multirow}

%%% ALGORITHMS %%%
% \usepackage{algorithmic}
% \usepackage[ruled,linesnumbered]{algorithm}
\usepackage[ruled,linesnumbered]{algorithm2e}
\usepackage{algpseudocode}
% Program flow.
\SetKwProg{Function}{Function}{}{end}
\SetKwRepeat{Do}{do}{while}%
\SetKwInOut{Input}{Input}
\SetKwInOut{Output}{Output}
\SetKwInOut{Parameter}{Param}

\SetCommentSty{mycommentfont}
\SetKwComment{Comment}{$\triangleright$\ }{}
% Custom functions.
\SetKwFunction{optimalIncentive}{optimalIncentive}%
% \SetKwFunction{avgLoMPCSolution}{avgLoMPCSolution}%
% \SetKwFunction{errorLoMPCSolution}{errorLoMPCSolution}%
\SetKwFunction{LoMPCSolution}{LoMPCSolution}%
\SetKwFunction{incentiveUpdate}{updateIncentive}%
\SetKwFunction{incentiveRegularization}{regularizeIncentive}%
\SetKwFunction{solveIncentiveBiMPC}{solveIncentiveBiMPC}%
\SetKwFunction{solveRobustBiMPC}{solveRobustBiMPC}%

%%% OTHER PACKAGES %%%
\usepackage{float}
\usepackage[font={small}]{caption}
\usepackage{subcaption}

\usepackage{enumitem}

\usepackage{datetime}

\usepackage{balance} % for balancing the bibliography columns.

%%%
% \renewcommand{\qedsymbol}{$\blacksquare$}

\newtheorem{proposition}{Proposition}
\newtheorem{theorem}{Theorem}
\newtheorem{lemma}{Lemma}
\newtheorem{remark}{Remark}
\newtheorem{definition}{Definition}
\newtheorem{assumption}{Assumption}

% Sets.
\newcommand{\real}{\mathbb{R}}
\newcommand{\exreal}{\bar{\mathbb{R}}}

\newcommand{\mathset}[1]{\mathcal{#1}}

% Functions.
\newcommand{\indicator}[1]{\mathcal{I}_{#1}}

% Operators.
\DeclareMathOperator{\epi}{epi}
\DeclareMathOperator{\relint}{relint}
\DeclareMathOperator{\setint}{int}
\DeclareMathOperator{\setbd}{bd}
\DeclareMathOperator{\dom}{dom}
\DeclareMathOperator*{\argmin}{argmin} % No space, limits underneath in displays
\DeclareMathOperator*{\argmax}{argmax} % No space, limits underneath in displays
 % Limits underneath in displays

% Constants.
% \newcommand{\Nxi}{p} % State dimensions
% \newcommand{\Nx}{\rho}
% \newcommand{\Ny}{{\Nxi-\Nx}}
% \newcommand{\Nnu}{q} % Input dimensions
% \newcommand{\Nu}{\gamma}
% \newcommand{\Nw}{{\Nnu-\Nu}}
% \newcommand{\NA}{{\Nxi \times \Nxi}} % Matrix dimensions
% \newcommand{\NB}{{\Nxi \times \Nnu}}
% \newcommand{\NBone}{{\Nxi \times \Nu}}
% \newcommand{\NBtwo}{{\Nxi \times (\Nw)}}
% \newcommand{\Nlambda}{r} % Incentive dimension (for linear-convex incentives)

% \newcommand{\horizon}{N}

\newcommand{\horizon}{N} % Horizon lengths.
\newcommand{\Nx}{p} % State dimensions
\newcommand{\Ny}{\rho}

\newcommand{\Nu}{q} % Input dimensions
\newcommand{\Nw}{\gamma}

\newcommand{\Nlambda}{r} % Incentive dimension (for linear-convex incentives)

\newcommand{\constsc}{m}
\newcommand{\constls}{L}
\newcommand{\constlipschitz}{\bar{L}}

% Invert colors.

% Meta information.

% \newcommand{\todo}[1]{}

% \newcommand{\hidetext}[1]{{\color{cyan} #1}}
\newcommand{\hidetext}[1]{}

\newcommand{\changes}[1]{{\color{black} #1}}

% \invertbackgroundtext

%%% Headers
% \def\BibTeX{{\rm B\kern-.05em{\sc i\kern-.025em b}\kern-.08em
%     T\kern-.1667em\lower.7ex\hbox{E}\kern-.125emX}}
% \markboth{\journalname, VOL. XX, NO. XX, XXXX 2024}
% {Author \MakeLowercase{\textit{et al.}}: Paper title abbrev.}
\def\BibTeX{{\rm B\kern-.05em{\sc i\kern-.025em b}\kern-.08em
    T\kern-.1667em\lower.7ex\hbox{E}\kern-.125emX}}
\markboth{\hskip25pc}
{Thirugnanam \MakeLowercase{\textit{et al.}}: Dynamic Incentive Selection for Hierarchical Convex MPC}

\begin{document}

\title{
% Online Linear Incentive Design for Bilevel Model Predictive Control
% Online Linear Incentive Selection for Bilevel Model Predictive Control
Dynamic Incentive Selection for Hierarchical Convex Model Predictive Control
}

\author{
Akshay Thirugnanam and Koushil Sreenath
% \thanks{\todo{
% This paragraph of the first footnote will contain the date on 
% which you submitted your paper for review. It will also contain support 
% information, including sponsor and financial support acknowledgment. For 
% example, ``This work was supported in part by the U.S. Department of 
% Commerce under Grant BS123456.''}}
\thanks{
% Submitted to the editors February 5, 2025.
This work was supported in part through funding from the Tsinghua-Berkeley Shenzhen Institute (TBSI) program.\\
The authors are with the Department of Mechanical Engineering, UC Berkeley, CA 94720, USA. \tt\small\{akshay\_t, koushils\}@berkeley.edu}
\vspace{-15pt}
}

\maketitle

\begin{abstract}
In this paper, we discuss incentive design for hierarchical model predictive control (MPC) systems viewed as Stackelberg games.
We consider a hierarchical MPC formulation where, given a lower-level convex MPC (LoMPC), the upper-level system solves a bilevel MPC (BiMPC) subject to the constraint that the lower-level system inputs are optimal for the LoMPC.
Such hierarchical problems are challenging due to optimality constraints in the BiMPC formulation.
We analyze an incentive Stackelberg game variation of the problem, where the BiMPC provides additional incentives for the LoMPC cost function, which grants the BiMPC influence over the LoMPC inputs.
We show that for such problems, the BiMPC can be reformulated as a simpler optimization problem, and the optimal incentives can be iteratively computed without knowing the LoMPC cost function.
We extend our formulation for the case of multiple LoMPCs and propose an algorithm that finds bounded suboptimal solutions for the BiMPC.
We demonstrate our algorithm for a dynamic price control example, where an independent system operator (ISO) sets the electricity prices for electric vehicle (EV) charging with the goal of minimizing a social cost and satisfying electricity generation constraints.
Notably, our method scales well to large EV population sizes.
\end{abstract}

\begin{IEEEkeywords}
Model predictive control,
bilevel optimization,
incentive Stackelberg games,
convex analysis,
dynamic pricing,
electric vehicle charging.
\end{IEEEkeywords}

\section{Introduction}
\label{sec:introduction}

% Literature review: dual methods in optimization (check other hierarchical data-driven methods), vector optimization, and utopia points.

Complex control subsystems that interact with each other can be broadly classified into two categories, depending on the type of information flow: distributed and hierarchical.
% Controllers with a hierarchical structure have been studied in the context of nonlinear control using Lyapunov functions~\cite{nonlinear2002khalil, kokotovic1999singular}, contraction analysis-based methods for cascading systems (see~\cite{dey2023incremental}, for example).
In distributed systems, information is passed between local regulators, while in hierarchical systems, a high-level coordinating algorithm passes information to local regulators; see ~\cite{scattolini2009architectures} for classifications.
% The control subsystems in the distributed setting can be non-cooperative, with each subsystem optimizing for a different cost function, or cooperative, where the subsystems cooperate to minimize a global cost function.
% Likewise, in the hierarchical setting, the coordinating algorithm and the local regulators could be designed to achieve a high-level goal, or the coordinator could be developed independently of the local regulators; see ~\cite{scattolini2009architectures} for classifications.
Stackelberg games are leader-follower games where the leader announces its strategy to the followers, with the goal of minimizing the leader cost function~\cite{simaan1973stackelberg}.
We consider a hierarchical MPC problem where each lower-level system solves a convex MPC problem (LoMPC) to determine its control input.
The upper-level system solves a bilevel MPC problem (BiMPC), which selects the upper-level system inputs subject to the constraint that the lower-level system inputs are optimal for the LoMPC.
Such bilevel optimization problems can be viewed as Stackelberg games and are challenging due to the implicit optimality constraints.
In this paper, we analyze an incentive Stackelberg game variation, where the BiMPC provides additional incentives to the LoMPC that allow the BiMPC to steer the LoMPC inputs.

\begin{figure}%[!htp]
    \centering
    \includegraphics[width=0.99\columnwidth]{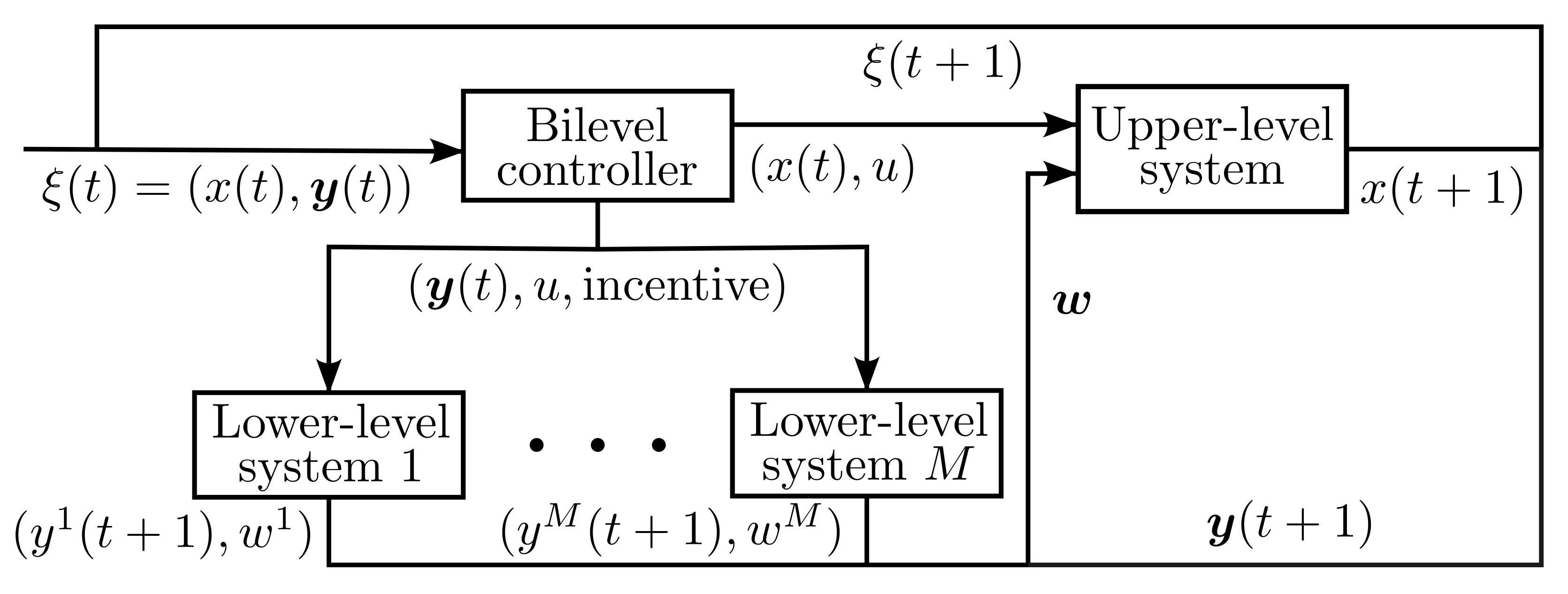}
    \caption{A flowchart depicting the incentive hierarchical MPC formulation.
    Given the current state $\xi(t) = (x(t), \bm{y}(t))$, the bilevel controller (the BiMPC) provides upper-level input $u$ and an incentive to the lower-level systems, which compute the lower-level input $\bm{w}$ using the LoMPCs.
    We assume that the BiMPC can query the lower-level system inputs for different incentives. % multiple times every time interval.
    The controller output $(u, \bm{w})$ is then used to update the upper-level system state.
    The BiMPC and the LoMPCs together comprise the hierarchical MPC problem.
    In the EV charging example considered in Secs.~\ref{subsec:motivating-example} and \ref{sec:numerical-examples}, the BiMPC is solved by an ISO to determine the amount of electricity generation, while the LoMPCs are solved by EVs to determine charging rates.
    The incentive provided by the ISO is the unit price of electricity.
    }
    \label{fig:hierarchical-mpc-formulation}
\end{figure}

\subsection{Related Work}
\label{subsec:related-work}

\subsubsection{Distributed MPC}
\label{subsubsec:distributed-mpc}

In distributed MPC, the information shared between the local regulators consists of predicted system states or control inputs~\cite{scattolini2009architectures,negenborn2014distributed}.
The local regulators can be cooperative~\cite{amigoni2007formal,venkat2008distributed}, or non-cooperative~\cite{betti2013distributed}.
Existing works also consider sparse communication topologies~\cite{negenborn2007efficient}.
Some standard solution methods for distributed MPCs are robust MPC techniques~\cite{mayne2005robust} (for non-cooperative settings) and operator splitting methods like ADMM~\cite{boyd2011distributed} (for cooperative settings).
Control theoretical properties such as stability have also been studied for distributed MPC~\cite{venkat2005stability}.

\subsubsection{Hierarchical MPC}
\label{subsubsec:hierarchical-mpc}

In hierarchical MPC, the high-level coordinator shares its predicted control signal with the local regulators, and the goal is to optimize a high-level cost function~\cite{scattolini2007hierarchical,findeisen1980control}.
Solution techniques for hierarchical MPC include ADMM~\cite{braun2018hierarchical,zhang2020improved}, multiparametric programming~\cite{gupta2023hierarchical}, waypoint tracking~\cite{koeln2018two,koeln2020vertical}, tube-MPC~\cite{kogel2023safe,raghuraman2022hierarchical,picasso2010mpc}, and aggregation methods~\cite{johansson2012distributed}.
Price coordination is another class of methods relevant to our work, which uses Lagrange multipliers to link constraints across various subsystems.
The prices are iteratively computed to ensure global consensus among all the participating subsystems~\cite{mesarovic2000theory, findeisen1980control, negenborn2007efficient, amigoni2007formal, tatara2007control, negenborn2007multi}.

\subsubsection{Incentive Stackelberg Games}
\label{subsubsec:incentive-Stackelberg-games}

% Stackelberg games are hierarchical settings where one player, the leader, announces their strategy to the other, the follower~\cite{simaan1973stackelberg}.
% The goal is to choose the best strategy for the leader.
% Stackelberg games differ from hierarchical MPC because the follower strategy is fixed (and possibly unstable), and only the leader strategy is analyzed~\cite{mintz2018control}.
Stackelberg games can be formulated as bilevel optimization problems, and optimal leader strategies can be obtained by reformulating the bilevel problem to a single-level problem~\cite{zugno2013bilevel,jamaludin2015bilevel,ouattara2018duality}.
The control theoretical properties of hierarchical MPC as a Stackelberg game are studied in~\cite{mintz2018control}.
A subclass of Stackelberg games is incentive Stacklberg games, where, in addition to the leader strategy, an incentive is provided to the follower by the leader~\cite{mukaidani2020incentive,ho1982control,li2002approach}.
The game is said to be \emph{incentive controllable} if the leader can find incentives such that, given the leader strategy, the optimal follower strategy coincides with the optimal team strategy~\cite{ho1982control} (also see \cite[Sec.~2]{mukaidani2020incentive}).
In this paper, we extend the discussion in \cite{mintz2018control} to incentive Stackelberg games.
We define and discuss sufficient conditions for incentive controllability from the perspective of MPC.

\subsubsection{Dynamic Pricing for EV Charging}
\label{subsubsec:dynamic-pricing-for-ev-charging}

We consider a dynamic price control example for electric vehicle (EV) charging to demonstrate the algorithm proposed in this paper.
Dynamic price control is one approach to reconcile the mismatch between electricity supply and demand at different times of the day and can be formulated as a Stackelberg game~\cite{zugno2013bilevel,mintz2018control,ma2013decentralized,zou2017efficient}.
Dynamic price control aims to reduce the spikes in electricity generation by shifting the electricity demand to different times of the day.
Ideally, the electricity consumption pattern obtained using dynamic pricing achieves demand valley-filling~\cite{ma2013decentralized,zou2017efficient}.

\hidetext{
As a remark, another field of literature that uses the leader-follower connections discussed in this paper for optimal control is Hierarchical Reinforcement Learning (HRL)~\cite{pateria2021hierarchical}.
In HRL, the leader RL controller computes a leader control and a desired state for the follower RL controller.
The method proposed in this paper is similar to this approach in the sense that the incentives provided by the leader MPC define a desired input sequence for the follower MPC.
Existing literature studies the suboptimality of HRL for different timescales of the subsystems~\cite{nachum2018near}, using concepts similar to incentive controllability.
HRL has been applied for the control of hierarchical systems, for example, for quadrotor control~\cite{zhao2021hierarchical, Li2021ASH}.
}

% This paper considers a hierarchical MPC problem as an incentive Stackelberg game.
% To fix the terminology for the rest of the paper, following~\cite{mintz2018control}, the leader MPC is called the bilevel MPC (or BiMPC), and the follower MPC is called the low-level MPC (or LoMPC).
% The cost function of the LoMPC is assumed to be fixed and unknown to the BiMPC, and a known incentive cost is added to the LoMPC.
% First, the BiMPC computes an optimal input sequence.
% Then, the BiMPC sends the LoMPC an incentive and the BiMPC optimal inputs.
% The LoMPC computes its optimal inputs using the BiMPC inputs and the incentive.
% The first control inputs from the BiMPC and LoMPC optimal input sequences are applied to the system, and the process is repeated for subsequent timesteps.

\subsection{Contributions}
\label{subsec:contribution}

The contributions of our paper are as follows:
(a) We extend the discussion in \cite{mintz2018control} on hierarchical MPC as Stackelberg games to incentive Stackelberg games.
For the hierarchical MPC problem, we define incentive controllability and team-optimal solutions.
An incentive that achieves the team-optimal solution is defined as an optimal incentive.
We show that for the case of a single LoMPC, the
% under weak assumptions on the LoMPC cost and the incentive structure,
optimal linear-convex incentives can be found iteratively without knowing the LoMPC cost function.
Our analysis relies on tools from convex analysis and duality theory.
\hidetext{Our algorithm is similar to the dual ascent algorithm with augmented Lagrangians for convex optimization problems~\cite[Sec.~2.3]{boyd2011distributed}; we make this connection concrete in the later sections.}
(b) For the case of multiple LoMPCs, we compute linear-convex incentives that achieve the team-optimal solution up to an error bound.
We provide a single-level robust MPC reformulation of the BiMPC, whose optimal solution has bounded error compared to the team-optimal solution.
(c) We demonstrate our algorithm for dynamic price control for EV charging.
Our algorithm scales well with the EV population size and does not require the EV cost function.

\subsection{Paper Structure}
\label{subsec:paper-structure}

The paper is structured as follows:
Sec.~\ref{sec:background-and-problem-formulation} provides background on convex analysis and motivates and states the hierarchical MPC problem.
Sec.~\ref{sec:linear-incentives-for-hierarchical-mpc} defines linear incentives and incentive controllability for the hierarchical MPC and provides an iterative algorithm to calculate the optimal incentives.
Sec.~\ref{sec:linear-convex-incentives} extends the analysis in Sec.~\ref{sec:linear-incentives-for-hierarchical-mpc} to linear-convex incentives.
Sec.~\ref{sec:multiple-lompcs} considers the case where the hierarchical MPC problem with linear-convex incentives consists of multiple LoMPCs and provides a robust MPC formulation. % that is boundedly suboptimal compared to the BiMPC.
A dynamic price control example for EV charging is considered in Sec.~\ref{sec:numerical-examples} and conclusions are presented in Sec.~\ref{sec:conclusions}.

\subsection{Notations}
\label{subsec:notations}

The following notations are used throughout the paper:
%
% \subsubsection{Analysis}
% \label{subsubsec:analyis}
$\exreal = \real \cup \{-\infty, +\infty\}$ is the set of extended real numbers.
The Minkowski sum of two sets $\mathset{A}, \mathset{B} \subset \real^n$ is $\mathset{A} + \mathset{B} = \{a+b: a \in \mathset{A}, b \in \mathset{B}\}$.
% \hidetext{$\psd^n$ and $\mathpd^n$ are the sets of positive semidefinite and positive definite matrices in $\real^{n \times n}$ respectively.}
$\langle \cdot, \cdot \rangle$ represents an arbitrary inner product on the vector space considered, and $\lVert \cdot \rVert$ is the corresponding norm.
Note that the transpose of a linear map $A: \real^n \rightarrow \real^m$, $A^\top$, depends on the inner products defined on the domain $\real^n$ and co-domain $\real^m$.
$\dom{f}$ is the domain of the function $f$. % and $\epi{f}$ is the epigraph of $f$.
For a set $\mathset{C}$, $\relint{\mathset{C}}$ is the relative interior of $\mathset{C}$ \cite[Sec.~2.1.3]{boyd2004convex}.
%, i.e., the interior of $\mathset{C}$ with respect to the subspace topology of the affine hull of $\mathset{C}$.
$\indicator{\mathset{C}}: \real^n \rightarrow \exreal$ is the indicator function of a set $\mathset{C} \subset \real^n$, where $\indicator{\mathset{C}}(x) = 0 \ \forall x \in \mathset{C}$ and $\indicator{\mathset{C}}(x) = \infty \ \forall x \notin \mathset{C}$.
%
% \subsubsection{Indexing}
We also define $\langle \horizon \rangle := \{0, ..., \horizon-1\}$ and $[\horizon] := \{0, ..., \horizon\}$.
% \hidetext{The Kronecker product of $A \in \real^{m_1 \times n_1}$ and $B \in \real^{m_2 \times n_2}$ is denoted as $A \otimes B \in \real^{(m_1 + m_2) \times (n_1 + n_2)}$, with $[A {\otimes} B]_{m_2i_1 + i_2, n_2j_1 + j_2} = [A]_{i_1, j_1} [B]_{i_2, j_2}, \ \forall i_1 \in \langle m_1 \rangle, i_2 \in \langle m_2 \rangle, j_1 \in \langle n_1 \rangle, j_2 \in \langle n_2 \rangle$.}
For a function $\phi: \real^n \rightarrow \real^r$ and $x \in \dom{\phi}$, $D\phi(x): \real^n \rightarrow\real^r$ is the derivative of $\phi$.
For a real-valued function $f: \real^n \rightarrow \real$, $\nabla f$ is the gradient of $f$ (when $f$ is differentiable) and $\partial f$ is the subdifferential of $f$ (see Sec.~\ref{subsubsec:subdifferential-of-a-convex-function}).
Subscripts such as $D_x \phi, \partial_x f$ denote the differentiation variable.

% \subsubsection{Convex Analysis} (see Sec.~\ref{subsec:convex-analysis})
% For any convex function $f$, $\partial{f}(x)$, the set of subgradients of $f$ at $x$, is the subdifferential of $f$ at $x$.
% \changes{For any convex function $f$, the set of subgradients of $f$ at $x$ is called the subdifferential of $f$ at $x$ and denoted by $\partial{f}(x)$.}
% $\dom{f}$ is the domain of the function $f$. % and $\epi{f}$ is the epigraph of $f$.
% For a set $\mathset{C}$, $\relint{\mathset{C}}$ is the relative interior of $\mathset{C}$ \cite[Sec.~2.1.3]{boyd2004convex}.
% %, i.e., the interior of $\mathset{C}$ with respect to the subspace topology of the affine hull of $\mathset{C}$.
% $\indicator{\mathset{C}}: \real^n \rightarrow \exreal$ is the indicator function of a set $\mathset{C} \subset \real^n$, where $\indicator{\mathset{C}}(x) = 0 \ \forall x \in \mathset{C}$ and $\indicator{\mathset{C}}(x) = \infty \ \forall x \notin \mathset{C}$.
% \hidetext{$\proj_x$ is the projection operator, mapping a set in $(x, y)$ coordinates onto the $x$ coordinates.}
% For a closed convex cone $\mathset{K}$, the relation $a \preceq_\mathset{K} b$ holds if and only if $b-a \in \mathset{K}$.

\section{Background and Problem Formulation}
\label{sec:background-and-problem-formulation}

%% TODO:
% - Biographies.
% - Headers.
% - Index Terms.
% - Footers, funding information, date of submission.
% - index terms, keywords.
% - change logo
% - change link color and algorithm comment color.
% - Word limit for abstract.
% 
% - Code: add readme with figure.
% - arXiv version.

%% Checklist:
% - Change online to dynamic in the title and elsewhere.
% - Check spelling in American English.
% - Use `robust BiMPC' instead of `robust MPC' (introduction, arXiv).
% - Check if acknowledgments are needed.
% - Section headings should be capitalized, but theorem names should not (arXiv).
% - Use 'incentive hierarchical MPC' instead of 'incentivized hierarchical MPC' (introduction, arXiv).
% - Use Assumptions, Secs, etc (introduction, arXiv).
% - Check punctuation of equations (and text above) and text alignment below equations (arXiv).
% - Team-optimal should be hyphenated.

% - Check dimensions of variables (arXiv).
% - Use := appropriately (arXiv).

% - Use singular instead of plural. For example, incentive instead of incentives, input instead of inputs, state instead of states.
% - Functions with cdot instead of variables.

\subsection{Convex Analysis}
\label{subsec:convex-analysis}

For a function $f: \real^n \rightarrow \exreal$, the epigraph of $f$ is defined as $\epi{f} = \{(x, t) \in \real^{n+1}: f(x) \leq t, x \in \dom{f}\}$.
% Note that $\epi{f} \subset \real^{n+1}$ even though $f$ takes values in $\exreal$.
Note that $\epi{f} \subset \real^{n+1}$.
A function $f: \real^n \rightarrow \real \cup \{\infty\}$ is a closed proper convex function if $\epi{f}$ is a closed, nonempty, and convex set respectively~\cite[Sec.~4, Sec.~7]{rockafellar1997convex}.
% For a convex function $f: \real^n \rightarrow \real$ with $\dom{f} \neq \real^n$, we can define its extended-value extension $\tilde{f}: \real^n \rightarrow \real \cup \{\infty\}$ as  $\tilde{f}(x) = \infty \ \forall x \notin \dom{f}$ and $\tilde{f}(x) = f(x)$ otherwise~\cite[Sec.~3.1.2]{boyd2004convex}.
% The function $f$ is convex if and only if its extended-value extension $\tilde{f}$ is convex.
% We use $f$ interchangeably with $\tilde{f}$ throughout the paper, emphasizing the distinction when necessary.

\subsubsection{Subdifferential of a Convex Function}
\label{subsubsec:subdifferential-of-a-convex-function}

The subdifferential of a convex function $f: \real^n \rightarrow \exreal$ at $x$ is~\cite[Sec.~23]{rockafellar1997convex}
\begin{equation}
\label{eq:subdifferential-definition}
    \partial{f}(x) = \{v \in \real^n: f(z) \geq f(x) + \langle v, z-x\rangle, \ \forall z\}.
\end{equation}
% \hidetext{$\partial{f}(x)$ defines the set of linear functionals that bound $f$ from below, and touch $f$ at $x$.}
$\partial{f}$ is called the subdifferential of $f$.
For all $x$, $\partial{f}(x)$ is a closed, convex set~\cite[Sec.~23]{rockafellar1997convex}.
A proper convex function $f$ is subdifferentiable on $\relint(\dom{f})$, and $\partial{f}(x)$ is non-empty and bounded if and only if $x \in \setint(\dom{f})$~\cite[Th.~23.4]{rockafellar1997convex}.
% In this paper, we assume that proper convex functions are subdifferentiable on their entire domain.

\subsubsection{Conjugate of a Convex Function}
\label{subsubsec:conjugate-of-a-convex-function}

For a convex function $f$, its conjugate $f^*: \real^n \rightarrow \exreal$ is defined as~\cite[Sec.~12]{rockafellar1997convex}
\begin{equation}
\label{eq:conv-analysis-conjugate}
    f^*(v) = \textstyle{\sup}_x \ (\langle x, v\rangle - f(x)).
\end{equation}
For a proper convex function $f$, $f^*$ is a closed proper convex function~\cite[Thm.~12.2]{rockafellar1997convex}.
For a closed proper convex function $f$, $f^{**} = f$~\cite[Cor.~12.2.1]{rockafellar1997convex}, \changes{and the subdifferential of $f^*$ is given by~\cite[Thm.~23.5]{rockafellar1997convex}}
\begin{equation}
\label{eq:conv-analysis-conjugate-subgrad}
    \partial{f^*}(v) = \textstyle{\argmax}_x \ (\langle x, v\rangle - f(x)).
\end{equation}
% The dual norm $\lVert \cdot \rVert_*$ corresponding to a norm $\lVert \cdot \rVert$ is defined as $\lVert y\rVert_* = \sup \{y^\top x: \lVert x\rVert \leq 1\}$ \cite[App.~A.1.6]{boyd2004convex}.

\subsubsection{Strong Convexity and Lipschitz Smoothness}
\label{subsubsec:strong-convexity-and-lipschitz-smoothness}

% A function $f: \real^n \rightarrow \exreal$ is strongly convex with modulus $\constsc > 0$, if $\dom{f}$ is convex and $\forall \ x_1, x_2 \in \dom{f}, \alpha \in (0, 1)$,
% \begin{equation}
% \begin{split}
%     f(\alpha x_1 + (1-\alpha)x_2) \leq & \ \alpha f(x_1) + (1-\alpha)f(x_2) \\
%     & \ - \frac{\constsc \alpha (1-\alpha)}{2} \lVert x_1 - x_2 \rVert^2.
% \end{split}
% \end{equation}
% % \hidetext{i.e., at each $x$, $f$ can be bound from below by a quadratic function with hessian $\constsc I$, that touches $f$ at $x$.}
% $f$ is strongly convex with modulus $\constsc$ if and only if $f - (\constsc/2)\lVert \cdot \rVert^2$ is convex~\cite[Prop.~IV.1.1.2]{hiriart1996convex}.
% A proper function $f: \real^n \rightarrow \real \cup \{\infty\}$ is strongly convex with modulus $\constsc > 0$, if $\dom{f}$ is convex and $f - (\constsc/2)\lVert \cdot \rVert^2$ is convex~\cite[Prop.~IV.1.1.2]{hiriart1996convex}.
A proper function $f: \real^n \rightarrow \real \cup \{\infty\}$ is strongly convex with modulus $\constsc > 0$ \changes{(abbreviated as $\constsc$-strongly convex)}, if $\dom{f}$ is convex and for all $x \in \dom{f}$ and $v \in \partial f(x)$,
\begin{equation}
    f(z) \geq f(x) + \langle v, z - x\rangle + (\constsc/2) \lVert z - x\rVert^2, \quad \forall z.
\end{equation}
\changes{A function $f$ is $\constsc$-strongly convex if and only if $f - (\constsc/2)\lVert \cdot\rVert^2$ is convex~\cite[Prop.~IV.1.1.2]{hiriart1996convex}.}
Strongly convex functions with closed domains have unique minima.
So by \eqref{eq:conv-analysis-conjugate-subgrad}, the conjugate of a strongly convex function is differentiable (see Prop.~\ref{prop:duality} below).
A differentiable function $f: \real^n \rightarrow \exreal$ is Lipschitz smooth with constant $\constls > 0$ \changes{(abbreviated as $\constls$-smooth)} if $\nabla f$ is Lipschitz continuous with constant $\constls$ on $\dom{f}$, i.e.,
\begin{equation}
    \lVert \nabla f(x^1) - \nabla f(x^2) \rVert \leq \constls \lVert x^1 - x^2 \rVert, \ \forall x^1, x^2 \in \dom{f}.
\end{equation}

There is a duality between strongly convex functions and Lipschitz smooth convex functions as follows. %(also see~\cite{zhou2018fenchel}).
\begin{proposition}~\cite[Thm.~X.4.2.1, Thm.~X.4.2.2]{hiriart1993convex}
\label{prop:duality}
    If a proper function $f: \real^n \rightarrow \real \cup \{\infty\}$ is $\constsc$-strongly convex, then $\dom{f^*} = \real^n$ and $f^*$ is $(1/\constsc)$-smooth.
    % If $f$ is convex and $\constls$-smooth with $\dom{f} = \real^n$, then $f^*$ is $(1/\constls)$-strongly convex on any convex subset of $\dom(\partial f^*)$.
\end{proposition}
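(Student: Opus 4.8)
The plan is to prove the two assertions in turn, $\dom{f^*} = \real^n$ and $(1/\constsc)$-smoothness of $f^*$, with a single tool driving both: $\constsc$-strong convexity of $f$ provides, around any point $x_0$ at which $\partial f$ is nonempty (for instance any $x_0 \in \relint(\dom{f})$, which exists by \cite[Th.~23.4]{rockafellar1997convex}), a global quadratic minorant $f(z) \geq f(x_0) + \langle v_0, z - x_0\rangle + (\constsc/2)\lVert z - x_0\rVert^2$ for $v_0 \in \partial f(x_0)$. For the first assertion, fix $v \in \real^n$ and substitute this minorant into the definition \eqref{eq:conv-analysis-conjugate} of the conjugate: $f^*(v) \leq \sup_x \big(\langle x, v\rangle - f(x_0) - \langle v_0, x - x_0\rangle - (\constsc/2)\lVert x - x_0\rVert^2\big)$. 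The right-hand side is the supremum of a strictly concave quadratic in $x$, hence finite (and in fact attained), so $f^*(v) < \infty$ for all $v$, i.e.\ $\dom{f^*} = \real^n$; properness and closedness of $f^*$ are already recorded via \cite[Thm.~12.2]{rockafellar1997convex}.

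For smoothness, observe that for each $v$ the map $x \mapsto \langle x, v\rangle - f(x)$ is $\constsc$-strongly concave on the convex set $\dom{f}$, so it has a \emph{unique} maximizer; by \eqref{eq:conv-analysis-conjugate-subgrad} this means $\partial{f^*}(v)$ is a singleton for every $v \in \dom{f^*} = \real^n$. A finite convex function whose subdifferential is a singleton at every point is differentiable there, so $f^*$ is differentiable with $\nabla f^*(v)$ equal to that unique maximizer; equivalently, $v \in \partial f(\nabla f^*(v))$ by the conjugate–subgradient correspondence. Now take $v_1, v_2 \in \real^n$, set $x_i := \nabla f^*(v_i)$ so that $v_i \in \partial f(x_i)$, and apply the strong convexity inequality of the proposition's hypothesis twice — once at $x_1$ with $v_1$ evaluated at $z = x_2$, once at $x_2$ with $v_2$ evaluated at $z = x_1$. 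Adding the two, the function values cancel and the quadratic terms combine to give $\langle v_1 - v_2, x_1 - x_2\rangle \geq \constsc \lVert x_1 - x_2\rVert^2$. By Cauchy--Schwarz, $\lVert v_1 - v_2\rVert\, \lVert x_1 - x_2\rVert \geq \constsc \lVert x_1 - x_2\rVert^2$, hence $\lVert \nabla f^*(v_1) - \nabla f^*(v_2)\rVert = \lVert x_1 - x_2\rVert \leq (1/\constsc)\lVert v_1 - v_2\rVert$, which is exactly $(1/\constsc)$-smoothness.

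The step I expect to need the most care is the passage from ``$\partial f^*$ is a singleton everywhere'' to genuine differentiability of $f^*$, together with the invocation of \eqref{eq:conv-analysis-conjugate-subgrad}: that identity is stated for \emph{closed} proper convex $f$, so if $f$ is not assumed lower semicontinuous one should first replace it by its closure, noting this leaves $f^*$ unchanged and preserves $\constsc$-strong convexity. Once differentiability and the subgradient identification $v_i \in \partial f(x_i)$ are in hand, the remaining computation — adding two strong-convexity inequalities and applying Cauchy--Schwarz — is routine.
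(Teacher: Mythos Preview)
Your argument is correct. The paper does not supply its own proof of this proposition; it is quoted as a known result from \cite[Thm.~X.4.2.1, Thm.~X.4.2.2]{hiriart1993convex}, so there is nothing to compare against beyond noting that your route---quadratic minorant for $\dom f^*=\real^n$, uniqueness of the argmax in \eqref{eq:conv-analysis-conjugate-subgrad} for differentiability, then the strong-monotonicity inequality plus Cauchy--Schwarz for the Lipschitz bound---is the standard one and matches the cited source in spirit. Your closing caveat about passing to the closure of $f$ (which preserves $\constsc$-strong convexity since $(\constsc/2)\lVert\cdot\rVert^2$ is continuous, and leaves $f^*$ unchanged) is exactly the right patch to justify invoking \eqref{eq:conv-analysis-conjugate-subgrad} and to guarantee that the supremum in \eqref{eq:conv-analysis-conjugate} is attained.
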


\subsubsection{Dual Cones}
\label{subsubsec:dual-cones}

Let $\mathset{K} \subset \real^n$ be a closed convex cone.
Then, $\mathset{K}$ defines a relation $\preceq_\mathset{K}$ on $\real^n$ as follows:
$x^1 \preceq_\mathset{K} x^2$ if and only if $x^2-x^1 \in \mathset{K}$~\cite[Sec.~2.4]{boyd2004convex}.
\hidetext{Note that $\preceq_\mathset{K}$ is not a partial order on $\real^n$ (it is transitive and reflexive but might not be antisymmetric).}
The dual cone $\mathset{K}^*$ of a cone $\mathset{K}$ is defined as~\cite[Sec.~2.6]{boyd2004convex}
\begin{equation}
\label{eq:dual-cone-def}
    \mathset{K}^* = \{v: \langle v, x\rangle \geq 0, \forall x \in \mathset{K}\}.
\end{equation}
The dual cone $\mathset{K}^*$ depends on the inner product chosen on $\real^n$.
For any cone $\mathset{K}$, $\mathset{K}^*$ is a closed convex cone. % ~\cite[Ex.~2.31]{boyd2004convex}.

\subsubsection{Convex Vector Functions}
\label{subsubsec:convex-vector-functions}

Let $\mathset{K} \subset \real^m$ be a closed convex cone.
A function $f: \real^n \rightarrow \real^m$ is called $\mathset{K}$-convex if $\dom{f}$ is convex and for all $x^1, x^2 \in \dom{f}$ and $\alpha \in [0, 1]$,
\begin{equation}
\label{eq:k-convex-function}
    f(\alpha x^1 + (1-\alpha) x^2) \preceq_\mathset{K} \alpha f(x^1) + (1-\alpha) f(x^2).
\end{equation}
For any $\mathset{K}$-convex function $f$ and $\lambda \succeq_{\mathset{K}^*} 0$, $\langle \lambda, f(\cdot) \rangle$ is a convex function~\cite[Sec.~3.6.2]{boyd2004convex}.

In the subsequent sections, we will use convex analysis to compute optimal incentives for the hierarchical MPC problem.
Next, we motivate and describe the hierarchical MPC problem.

\subsection{Motivating Example}
\label{subsec:motivating-example}

As a motivating example for the hierarchical MPC formulation in the next subsection, we consider the charging of a population of EVs.
Given the unit price of electricity, each EV determines its electricity consumption by considering the total price of electricity, battery degradation, and current state of charge (SoC).
For the same electricity price, each EV can have different consumption depending on its SoC.
An independent system operator (ISO) determines the amount of electricity generated, the amount stored in a storage battery,
and the electricity price.
The ISO's goal is to minimize a social cost, which comprises the cost of electricity generation and the cost of not meeting the EV consumption demand.
In particular, the ISO's cost only considers the average EV consumption demand (computed using current EV SoCs); for the same average consumption, individual consumption may depend on the EV SoC.
Similarly, the ISO determines electricity generation based only on the average consumption.

Such a hierarchical problem is an incentive Stackelberg game (see Sec.~\ref{subsubsec:incentive-Stackelberg-games}), where the leader (the ISO) provides incentives (electricity prices) to the followers (the EV population) to minimize the social cost.
The following subsection provides a general hierarchical MPC formulation applicable to the above example.
In Secs.~\ref{sec:linear-incentives-for-hierarchical-mpc}-\ref{sec:multiple-lompcs}, we develop an algorithm to solve the hierarchical MPC problem, and in Sec.~\ref{sec:numerical-examples}, we apply our method to the EV charging example.
We will occasionally refer to the EV charging example to provide intuition.

\subsection{Hierarchical MPC Formulation}
\label{subsec:hierarchical-formulation}

We consider a hierarchical control system consisting of multiple lower-level systems and an upper-level system as a generalization of the dynamics in \cite{mintz2018control}.
Let $\mathset{M}$ be an index set denoting the set of lower-level systems, with $M := |\mathset{M}|$.
For $i \in \mathset{M}$, the $i$-th lower-level system has state $y^i \in \real^\Ny$ and %inputs $w^i \in \real^\Nw$ and $u \in \real^\Nu$.
\changes{input $(u, w^i) \in \real^\Nu \times \real^\Nw$}.
We define $\bm{w} := (w^1, ..., w^{M})$.
The upper-level system has state $x \in \real^\Nx$ and input $\nu := (u, \bm{w})$.
The hierarchical system has coupled linear dynamics given by,
\begin{equation}
\label{eq:coupled-dynamics}
\begin{split}
    x_{k+1} & = A^u x_k + B^u_1 u_k + B^u_2 w_k, \\
    y^i_{k+1} & = A^l y^i_k + B^l_1 u_k + B^l_2 w^i_k, \quad i \in \mathset{M}, \\
    w_k & := \frac{1}{M} \sum_{i \in \mathset{M}} w^i_k,
\end{split}
\end{equation}
\changes{where the superscripts `u' and `l' refer to the upper and lower-level systems, respectively.
Variable subscripts represent time, and superscripts are used for the index set $\mathset{M}$.
The superscript $(\cdot)^*$ denotes an optimal value, while the superscript $(\cdot)^{(k)}$ denotes the $k$-th iterate of a sequence.}
% where $A^u \in \real^{\Nx \times \Nx}$, $B^u_1 \in \real^{\Nx \times \Nu}$, $B^u_2 \in \real^{\Nx \times \Nw}$, $A^l \in \real^{\Ny \times \Ny}$, $B^l_1 \in \real^{\Ny \times \Nu}$, and $B^l_2 \in \real^{\Ny \times \Nw}$.
We define $\bm{y} := (y^1, ..., y^{M})$ as the joint state of all the lower-level systems, and $\xi := (x, \bm{y})$ as the state of the entire system.
Note that the upper-level system dynamics depend on the average $w_k$, similar to the EV charging example in Sec.~\ref{subsec:motivating-example}.

\subsubsection{Lower-level MPC Problem}
\label{subsubsec:lower-level-mpc-problem}

%For $i \in \mathset{M}$, 
The $i$-th lower-level MPC (LoMPC) problem with horizon $\horizon$ optimizes the input sequence $w^i_{0:\horizon-1} = (w^i_0, ..., w^i_{\horizon-1}) \in \real^{\Nw\horizon}$, given the initial state $\xi_0$ and the input sequence $u_{0:\horizon-1} = (u_0, ..., u_{\horizon-1}) \in \real^{\Nu\horizon}$.
The symbols $\xi$, $x$, $\bm{y}$, $\nu$, $u$, and $\bm{w}$ are overloaded to also represent $\xi_{0:\horizon}$, $x_{0:\horizon}$, $\bm{y}_{0:\horizon}$, $\nu_{0:\horizon-1}$, $u_{0:\horizon-1}$, and $\bm{w}_{0:\horizon-1}$ respectively.
The variables used in the hierarchical MPC are tabulated in Tab.~\ref{tab:hierarchical-mpc-variable-definitions}.
The $i$-th LoMPC problem is given by,
\begin{subequations}
\label{eq:ith-lompc}
\begin{alignat}{6}
    % (y^{i*}, w^{i*})(\xi_0, u) \in \ & \argmin_{y^i, w^i} \ g^i(y^i, w^i; \xi_0, u), \span\span\span\span\span\span \label{subeq:ith-lompc-cost} \\
    % & \ \text{s.t} \quad
    % & \ \text{s.t} \quad
    \min_{y^i, w^i} \ & g^i(y^i, w^i; \xi_0, u), \span\span\span\span\span\span \label{subeq:ith-lompc-cost} \\
    \text{s.t} \quad &
    && y^i_{k+1}   &&= A^l y^i_k + B^l_1 u_k + B^l_2 w^i_k, \ \ && k \in \langle \horizon \rangle, \label{subeq:ith-lompc-dynamics} \\
    &&& y^i_k \in \mathset{Y}, \ w^i_k \in \mathset{W}, \span\span \ && k \in \langle \horizon \rangle, \label{subeq:ith-lompc-stage-constraints} \\
    &&& y^i_N \in \mathset{Y}_\Omega, \span\span\span\span\span\span \label{subeq:ith-lompc-terminal-constraints}
\end{alignat}
\end{subequations}
where $\mathset{Y}, \ \mathset{Y}_\Omega\subset \real^\Ny$ and $\mathset{W} \subset \real^\Nw$ are closed convex sets, and $\langle \horizon \rangle := \{0, ..., \horizon-1\}$.
The function $g^i$ is parametrized by the input $u$ from the bilevel MPC and $\xi_0$.
The constraints in \eqref{eq:ith-lompc} can be added to the cost using the indicator function (see Sec.~\ref{subsec:notations}) to obtain a reduced LoMPC problem as follows:
\begin{equation}
\label{eq:reduced-ith-lompc}
w^{i*}(\xi_0, u) := \textstyle{\argmin}_{w^i} \ g^i(w^i; \xi_0, u),
\end{equation}
where the symbol $g^i$ is overloaded to represent the new cost function, where the constraints from \eqref{eq:ith-lompc} are added to the cost \changes{using the indicator function}.
We will work with the simple formulation of the cost function, \eqref{eq:reduced-ith-lompc}, for the theory and use the full formulation, \eqref{eq:ith-lompc}, for the implementation.
The assumptions on $g^i$ are as follows:
\begin{assumption} (Properties of $g^i$)
\label{assum:properties-of-gi}
    $g^i(\cdot \, ; \xi_0, u)$ is a closed proper function and $\constsc$-strongly convex, for all $(\xi_0, u)$.
    Further, $g^i(\cdot \: ; \xi_0, u)$ is subdifferentiable on its domain for each $(\xi_0, u)$.
\end{assumption}
Since $g^i(\cdot \,; \xi_0, u)$ is closed and strongly convex, a unique optimum for $w^i$ exists, given \eqref{eq:ith-lompc} is feasible (see Sec.~\ref{subsubsec:strong-convexity-and-lipschitz-smoothness}).

For the EV charging example in Secs.~\ref{subsec:motivating-example} and \ref{sec:numerical-examples}, $y^i_k$ is the SoC and $w^i_k$ is the charging rate of EV $i$ at time $k$.
The LoMPC cost function $g^i$ comprises tracking costs for maximum SoC and costs for high charging rates.

\subsubsection{Bilevel MPC Problem}
\label{subsubsec:bilevel-mpc-problem}

Given the initial state $\xi_0$, the bilevel MPC (BiMPC) problem optimizes the input sequence $\nu = (u, \bm{w})$, subject to the constraint that $w^i$ is obtained as the optimal solution of the LoMPC problem \eqref{eq:reduced-ith-lompc}.
In other words, the BiMPC accounts for the fact that each LoMPC solves an optimization problem to determine its input.
The BiMPC problem is given by,
\begin{subequations}
\label{eq:bimpc}
\begin{alignat}{6}
    % (\xi^*, \nu^*)(\xi_0) \in \ & \argmin_{\xi, \nu=(u,w)} f(\xi, \nu; \xi_0), \span\span\span\span\span\span \label{subeq:bimpc-cost} \\
    % & \ \text{s.t} \quad
    \min_{x, \nu=(u,\bm{w})} \ & f(x, u, w; \xi_0), \span\span\span\span\span\span \label{subeq:bimpc-cost} \\
    \text{s.t} \quad &
    && w^i = \textstyle{\argmin}_{w^i} \ g^i(w^i; \xi_0, u), && i \in \mathset{M}, \label{subeq:bimpc-w-optimality} \\
    &&& x_{k+1} = A^u x_k + B^u_1 u_k + B^u_2 w_k, \ \ && k \in \langle \horizon \rangle, \label{subeq:bimpc-dynamics} \\ 
    &&& w_k = \frac{1}{M} \sum_{i \in \mathset{M}} w^i_k, \ \ && k \in \langle \horizon \rangle, \span\span\span\span \\
    &&& x_k \in \mathset{X}, \ u_k \in \mathset{U}, \ && k \in \langle \horizon \rangle, \label{subeq:bimpc-stage-constraints} \\
    &&& x_N \in \mathset{X}_\Omega, \span\span\span\span\span\span \label{subeq:bimpc-terminal-constraints}
\end{alignat}
\end{subequations}
where $\mathset{X}, \ \mathset{X}_\Omega \subset \real^\Nx$ and $\mathset{U} \subset \real^\Nu$ are closed convex sets, and $f$ is jointly continuous in $(x, u, w)$.
Note that the cost function $f$ depends on the average $w = (w_1, ..., w_{N-1})$.
% While no particular assumptions on \eqref{eq:bimpc}, other than the existence of a minimum, are needed, we consider $f$ to be a convex function and $\mathset{X}$, $\mathset{X}_\Omega$, $\mathset{U}$ to be convex sets for the numerical examples (also nonlinear $x$-dynamics).
Similar to the LoMPC problem, the constraints \eqref{subeq:bimpc-dynamics}, \eqref{subeq:bimpc-stage-constraints}, and \eqref{subeq:bimpc-terminal-constraints} can be added to the cost \eqref{subeq:bimpc-cost} using the indicator function to obtain the reduced BiMPC problem as follows:
\begin{subequations}
\label{eq:reduced-bimpc}
\begin{alignat}{6}
    \hspace{-5pt} \nu^*(\xi_0) \in \ && \argmin_{\nu=(u, \bm{w})} \ & f(u, w; \xi_0), \span\span\span\span \label{subeq:reduced-bimpc-cost} \\
    && \ \text{s.t} \quad &
    w^i &&= \textstyle{\argmin}_{w^i} \, g^i(w^i; \xi_0, u), \, && i \in \mathset{M}, \hspace{-5pt} \label{subeq:reduced-bimpc-w-optimality} \\
    &&& w_k &&= \frac{1}{M} \sum_{i \in \mathset{M}} w^i_k, && \hspace{-5pt} k \in \langle N\rangle. \hspace{-5pt}
\end{alignat}
\end{subequations}
The optimization problems \eqref{eq:reduced-ith-lompc} and \eqref{eq:reduced-bimpc} define the hierarchical MPC problem.
At each time step, the current state $\xi(t)$ is provided to the BiMPC \eqref{eq:reduced-bimpc}, and the first input $\nu^*_0$ from the optimal input sequence $\nu^*$ is applied to get the next state.
Throughout the paper, we assume that \eqref{eq:reduced-bimpc} is feasible and has an optimal solution for all $\xi_0$.

For the EV charging example in Secs.~\ref{subsec:motivating-example} and \ref{sec:numerical-examples}, $x_k$ is the SoC of the storage battery controlled by the ISO at time $k$ and $u_k$ is the amount of electricity generated.
The dynamics \eqref{subeq:bimpc-dynamics} corresponds to energy balance, and the cost function $f$ is a social cost comprising electricity generation cost and EV electricity demand cost.

Dividing the problem of control selection for the hierarchical system into a multi-level problem such as \eqref{eq:reduced-bimpc} can result in a loss of controllability or stabilizability properties of the overall hierarchical system if the cost functions $g^i$ are selected inappropriately~\cite{mintz2018control}.
% Intuitively, the lower-level MPC problem \eqref{eq:reduced-ith-lompc} enforces a constraint on the control $\nu$, given by \eqref{subeq:reduced-bimpc-w-optimality}, which restricts it to a subset of the original control space.
% If the cost function $g^i$ is selected inappropriately, the reduced control space can cause a loss of controllability or stabilizability.
% Further, solution techniques for hierarchical MPC problems, such as those given by \eqref{eq:reduced-ith-lompc} and \eqref{eq:reduced-bimpc}, are often restricted to specific problem structures and require knowledge of the cost function $g^i$ of the LoMPC.

\begin{table}[tbp]
% \footnotesize
\setlength\extrarowheight{2pt}
\caption{Hierarchical MPC variable definitions}\label{tab:hierarchical-mpc-variable-definitions}
\begin{tabularx}{\columnwidth}{|c|c|L|}
\hline
Variable & Domain & Definition \\ \hline
$(y^i_k, w^i_k, u_k)$ & $\real^{\Ny+\Nw+\Nu}$ & State and input of $i$-th lower-level system at time $k$ \\
$(y^i, w^i)$ & $\real^{\Ny(\horizon+1)+\Nw\horizon}$ & $\bigl((y^i_0, ..., y^i_\horizon), (w^i_0, ..., w^i_{\horizon-1})\bigr)$ \\
$(\bm{y}, \bm{w})$ & $\real^{\Ny(\horizon+1) M+\Nw \horizon M}$ & $\bigl((y^1, ..., y^M), (w^1, ..., w^M)\bigr)$ \\
$(w, w_k)$ & $\real^{\Nw\horizon+\Nw}$ & Averages: $\bigl(1/M \, \Sigma_i w^i, 1/M \, \Sigma_i w^i_k\bigr)$ \\ \hline
$(x_k, u_k, \bm{w}_k)$ & $\real^{\Nx+\Nu+\Nw M}$ & State and input of upper-level system at time $k$ \\
$(x, u)$ & $\real^{\Nx(\horizon+1)+\Nu\horizon}$ & $\bigl((x_0, ..., x_\horizon), (u_0, ..., u_{\horizon-1})\bigr)$ \\ \hline
$\xi$ & $\real^{(\Nx+\Ny M)(\horizon+1)}$ & $(x, \bm{y})$, state of the full system \\
$\nu$ & $\real^{(\Nu+\Nw M)\horizon}$ & $(u, \bm{w})$, input of the full system \\
\hline
\end{tabularx}
\end{table}

The hierarchical MPC problem given by \eqref{eq:reduced-ith-lompc} and \eqref{eq:reduced-bimpc} is a Stackelberg game, where the BiMPC \eqref{eq:reduced-bimpc} is the leader, and the LoMPCs \eqref{eq:reduced-ith-lompc} are the followers.
As discussed in Sec.~\ref{subsubsec:incentive-Stackelberg-games}, we consider the incentive Stackelberg game variation, where the BiMPC provides an additional incentive to the LoMPCs.
Incentive variables impose penalties (or rewards) on the LoMPCs in addition to the cost function $g^i$ and thus grant the BiMPC influence over the LoMPC output.
Fig.~\ref{fig:hierarchical-mpc-formulation} depicts a flowchart of the incentive hierarchical MPC problem.
We make the following assumption about the incentive hierarchical MPC problem, formally defined in Sec.~\ref{subsec:linear-incentives-and-incentive-controllability}.

\begin{assumption} (Privacy of LoMPCs)
\label{assum:privacy-of-lompc}
    The cost functions $g^i$ are unknown to the BiMPC, apart from $\dom{g^i}$.
    The BiMPC can query the optimal solution $w^{i*}$ from the $i$-th LoMPC for any given $(\xi_0, u)$ and incentive.
\end{assumption}

In the following two sections, we first consider the case of a single LoMPC (i.e., when $M = 1$).
In Sec.~\ref{sec:linear-incentives-for-hierarchical-mpc}, we define a linear incentive structure and the notion of incentive controllability.
We prove that an optimal incentive for the incentive hierarchical MPC can be computed iteratively, even when the LoMPC cost function is unknown.
In Sec.~\ref{sec:linear-convex-incentives}, we extend the formulation in Sec.~\ref{sec:linear-incentives-for-hierarchical-mpc} to linear-convex incentives.
In Sec.~\ref{sec:multiple-lompcs}, we consider the case of multiple LoMPCs (i.e., when $M > 1$) with linear-convex incentives and propose a robust formulation for the incentive hierarchical MPC.

\section{Linear Incentives for Hierarchical MPC}
\label{sec:linear-incentives-for-hierarchical-mpc}

In this section, we propose an incentive hierarchical MPC that adds a linear incentive term to the LoMPC cost function.
For the EV charging example, the total electricity price is a linear incentive for the EVs, as explained in the following subsection.
We consider the case of a single LoMPC (i.e., when $M = 1$) and denote the state, input, and cost function of the LoMPC by $y$, $w$, and $g$, respectively.
We define the notion of incentive controllability and show that the incentive hierarchical MPC satisfies this property.
We also provide an iterative method to compute an optimal incentive without knowing the LoMPC cost function $g$.

\subsection{Linear Incentives and Incentive Controllability}
\label{subsec:linear-incentives-and-incentive-controllability}

Let $g(w; \xi_0, u, \lambda)$ be the incentive LoMPC cost function given by,
\begin{equation}
\label{eq:incentivized-lompc-cost}
    g(w; \xi_0, u, \lambda) := g(w; \xi_0, u) + \langle \lambda, w\rangle,
\end{equation}
where $\lambda \in \real^{\Nw\horizon}$ is the incentive to the LoMPC \changes{and $\langle \lambda, w\rangle$ is the linear incentive term.}
For the EV charging example in Secs.~\ref{subsec:motivating-example} and \ref{sec:numerical-examples}, $\lambda$ is the unit price of electricity, and $\langle \lambda, w\rangle$ is the total price paid by the EV.
The LoMPC problem \eqref{eq:reduced-ith-lompc} with an incentive is as follows:
\begin{equation}
\label{eq:incentivized-lompc}
w^*(\xi_0, u, \lambda) := \textstyle{\argmin}_{w} \ (g(w; \xi_0, u) + \langle \lambda, w\rangle),
% \begin{split}
%     w^*(\xi_0, u, \lambda) & = \argmin_w \ g(w; \xi_0, u, \lambda), \\
%     & = \argmin_{w} \ g(w; \xi_0, u) + \langle \lambda, w\rangle,
% \end{split}
\end{equation}
where the BiMPC problem provides both $u$ and $\lambda$ as inputs to the LoMPC.
As discussed in Sec.~\ref{subsubsec:bilevel-mpc-problem}, incentives impose additional penalties or rewards on the LoMPC and can be used to influence $w^*$.
By Assumption~\ref{assum:properties-of-gi}, the cost function of the LoMPC problem \eqref{eq:incentivized-lompc} is a closed proper function, $\constsc$-strongly convex, and subdifferentiable for all values of $\lambda$.
So, there is a unique solution to \eqref{eq:incentivized-lompc} for all values of $\lambda$.
The BiMPC problem \eqref{eq:reduced-bimpc} with an incentive is as follows:
\begin{subequations}
\label{eq:incentivized-bimpc}
\begin{alignat}{6}
    (\nu^*, \lambda^*)(\xi_0) \in \ && \argmin_{\nu=(u,w), \lambda} & f(\nu; \xi_0), \label{subeq:incentivized-bimpc-cost} \\
    && \ \text{s.t} \quad &
    w = \textstyle{\argmin}_w \ g(w; \xi_0, u, \lambda). && \label{subeq:incentivized-bimpc-w-optimality}
\end{alignat}
\end{subequations}
The BiMPC dynamically chooses the incentive $\lambda$ at each time step.
The optimization problems \eqref{eq:incentivized-lompc} and \eqref{eq:incentivized-bimpc} define the incentive hierarchical MPC problem.
The incentive BiMPC problem \eqref{eq:incentivized-bimpc} is nontrivial to solve due to the optimality constraint \eqref{subeq:incentivized-bimpc-w-optimality}.
\changes{
% Our approach solves \eqref{eq:incentivized-bimpc} by first computing an optimal $\nu^*$ and then iteratively computing an optimal $\lambda^*$.
% In this subsection, we prove that this two-step approach is equivalent to solving \eqref{eq:incentivized-bimpc}.
% The following subsection then discusses an iterative method to compute optimal incentives $\lambda^*$.
In the rest of the section, we discuss a solution method for \eqref{eq:incentivized-bimpc}.
}
%
%
%%% Option 1: State definitions first and then the theorem.
Following \cite[Sec.~7.4]{basar1998dynamic} and \cite[Def.~1]{mukaidani2020incentive}, we first define a team-optimal solution for the incentive hierarchical MPC problem as follows.
\begin{definition} (Team-optimal solution)
\label{def:team-optimal-solution}
For a given $\xi_0$, a team-optimal solution of the incentive hierarchical MPC is an optimal solution $\nu^* = (u^*, w^*)$ of the following problem:
\begin{subequations}
\label{eq:equivalent-bimpc}
\begin{alignat}{6}
    \nu^*(\xi_0) \in \ && \argmin_{\nu=(u,w)} \ & f(\nu; \xi_0), \label{subeq:equivalent-bimpc-cost} \\
    && \ \text{s.t} \quad &
    w \in \dom g(\cdot \; ; \xi_0, u). \label{subeq:equivalent-bimpc-w-feasibility}
\end{alignat}
\end{subequations}
\end{definition}
Recall from Sec.~\ref{subsubsec:lower-level-mpc-problem} that the constraint \eqref{subeq:equivalent-bimpc-w-feasibility} is equivalent to enforcing the state and input constraints \eqref{subeq:ith-lompc-dynamics}-\eqref{subeq:ith-lompc-terminal-constraints}. % (for the single LoMPC problem considered).
We assume that a team-optimal solution exists for all initial states $\xi_0$.
Next, we define incentive controllability (this definition is similar to the one in \cite[Sec.~7.4]{basar1998dynamic} and \cite{ho1982control}, adapted to the problem considered in this paper).
\begin{definition} (Incentive controllability)
\label{def:incentive-controllability}
The incentive hierarchical MPC is called incentive controllable if, for all $\xi_0$ and corresponding team-optimal solutions $\nu^*=(u^*, w^*)$, there exists an incentive $\lambda^*$ such that $w^*$ is optimal for \eqref{eq:incentivized-lompc} with $(u, \lambda) = (u^*, \lambda^*)$, i.e., $w^* = w^*(\xi_0, u^*, \lambda^*)$.
\end{definition}
The following result states that the incentive hierarchical MPC is incentive controllable and shows that \eqref{eq:incentivized-bimpc} can be solved using a two-step approach.

\begin{theorem} (Linear incentive controllability)
\label{thm:bimpc-equivalence}
Under Assumption~\ref{assum:properties-of-gi}, the incentive hierarchical MPC problem, \eqref{eq:incentivized-lompc} and \eqref{eq:incentivized-bimpc}, is incentive controllable.
Further, the incentive BiMPC \eqref{eq:incentivized-bimpc} is equivalent to \eqref{eq:equivalent-bimpc}.
For any team-optimal solution $\nu^* = (u^*, w^*)$ and $-\lambda^* \in \partial_w g(w^*; \xi_0, u^*)$, $(\nu^*, \lambda^*)$ is an optimal solution of \eqref{eq:incentivized-bimpc}.
\end{theorem}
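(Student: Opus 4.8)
The plan is to reduce the whole statement to a single convex first-order optimality condition, using the fact that adding the linear incentive $\langle\lambda,w\rangle$ changes neither $\dom g(\cdot\,;\xi_0,u)$ nor its $\constsc$-strong convexity. First I would fix $\xi_0$ and a team-optimal solution $\nu^*=(u^*,w^*)$; by feasibility in \eqref{eq:equivalent-bimpc} we have $w^*\in\dom g(\cdot\,;\xi_0,u^*)$, and since Assumption~\ref{assum:properties-of-gi} guarantees that $g(\cdot\,;\xi_0,u^*)$ is subdifferentiable on its domain, the set $\partial_w g(w^*;\xi_0,u^*)$ is nonempty, so we may choose $\lambda^*$ with $-\lambda^*\in\partial_w g(w^*;\xi_0,u^*)$.

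Next I would verify that this $\lambda^*$ makes $w^*$ optimal for the incentive LoMPC \eqref{eq:incentivized-lompc} at $(u,\lambda)=(u^*,\lambda^*)$. Since $g(\cdot\,;\xi_0,u^*)+\langle\lambda^*,\cdot\rangle$ is convex, $w^*$ minimizes it iff $g(w;\xi_0,u^*)+\langle\lambda^*,w\rangle\ge g(w^*;\xi_0,u^*)+\langle\lambda^*,w^*\rangle$ for all $w$, which rearranges to $g(w;\xi_0,u^*)\ge g(w^*;\xi_0,u^*)+\langle-\lambda^*,w-w^*\rangle$ for all $w$; by the definition \eqref{eq:subdifferential-definition} of the subdifferential this is precisely $-\lambda^*\in\partial_w g(w^*;\xi_0,u^*)$, which holds by construction. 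Strong convexity makes the minimizer unique, so $w^*=w^*(\xi_0,u^*,\lambda^*)$, which is exactly incentive controllability (Definition~\ref{def:incentive-controllability}).

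For the equivalence of \eqref{eq:incentivized-bimpc} and \eqref{eq:equivalent-bimpc} I would compare optimal values, noting that both costs depend only on $\nu$. For one inequality, any $(\nu,\lambda)$ feasible for \eqref{eq:incentivized-bimpc} has $w=\argmin_w g(w;\xi_0,u,\lambda)\in\dom g(\cdot\,;\xi_0,u,\lambda)=\dom g(\cdot\,;\xi_0,u)$, so $\nu$ is feasible for \eqref{eq:equivalent-bimpc}; hence the value of \eqref{eq:incentivized-bimpc} is at least that of \eqref{eq:equivalent-bimpc}. For the reverse, the team-optimal $\nu^*$ together with the $\lambda^*$ from the previous step is feasible for \eqref{eq:incentivized-bimpc} with cost $f(\nu^*;\xi_0)$, so the value of \eqref{eq:incentivized-bimpc} is at most that of \eqref{eq:equivalent-bimpc}. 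The two values coincide and $(\nu^*,\lambda^*)$ attains it, which simultaneously proves the equivalence and the final assertion that $(\nu^*,\lambda^*)$ solves \eqref{eq:incentivized-bimpc}.

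The argument is mostly routine; the step I expect to need the most care is the bookkeeping around the reduced LoMPC formulation \eqref{eq:reduced-ith-lompc} --- ensuring that ``$w^*$ team-optimal'' (hence satisfying all LoMPC state and input constraints, which have been folded into $g$ via indicator functions) really does place $w^*$ in $\dom g(\cdot\,;\xi_0,u^*)$, where the subdifferential is nonempty, and that adding the everywhere-finite term $\langle\lambda,\cdot\rangle$ leaves the domain, strong convexity, and subdifferentiability intact. No property of $g$ beyond Assumptions~\ref{assum:properties-of-gi} and \ref{assum:privacy-of-lompc} enters the proof.
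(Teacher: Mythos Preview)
Your proposal is correct and follows essentially the same approach as the paper: both hinge on the first-order optimality condition $0\in\partial_w g(w^*;\xi_0,u^*)+\{\lambda^*\}$, and both establish equivalence by showing each problem's feasible set embeds in the other with identical cost. The only cosmetic differences are that you spell out the subgradient inequality directly (the paper cites \cite[Thm.~23.5]{rockafellar1997convex}) and that you compare optimal values rather than showing the full feasibility correspondence; also, Assumption~\ref{assum:privacy-of-lompc} plays no role here and can be dropped from your closing remark.
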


\begin{proof}
Under Assumption~\ref{assum:properties-of-gi}, $g(\cdot \,; \xi_0, u, \lambda)$ is a closed proper function, $\constsc$-strongly convex, and subdifferentiable.
Let $\nu^*=(u^*,w^*)$ be a team-optimal solution for some $\xi_0$.
Then, $w^*$ (uniquely) minimizes $g(w; \xi_0, u^*, \lambda)$ if and only if $-\lambda \in \partial_w g(w^*; \xi_0, u^*) \neq \emptyset$~\cite[Thm.~23.5]{rockafellar1997convex}.
This means that if $-\lambda^* \in \partial_w g(w^*; \xi_0, u^*)$ then $w^*$ is optimal for \eqref{eq:incentivized-lompc} with $(u, \lambda) = (u^*, \lambda^*)$.
Thus, the incentive hierarchical MPC is incentive controllable.

Next, we show the equivalence of \eqref{eq:incentivized-bimpc} and \eqref{eq:equivalent-bimpc}.
For any feasible solution $(\nu, \lambda)$ of \eqref{eq:incentivized-bimpc}, $\nu$ is also feasible for \eqref{eq:equivalent-bimpc} and has the same cost.
Similarly, if $\nu=(u, w)$ is feasible for \eqref{eq:equivalent-bimpc} and $-\lambda \in \partial_w g(w; \xi_0, u)$, then $(\nu, \lambda)$ is feasible for \eqref{eq:incentivized-bimpc} and has the same cost.
Thus, \eqref{eq:incentivized-bimpc} and \eqref{eq:equivalent-bimpc} are equivalent.
\end{proof}

Thm.~\ref{thm:bimpc-equivalence} shows that \eqref{eq:incentivized-bimpc} can be solved by first finding a team-optimal solution $\nu^*=(u^*, w^*)$ using \eqref{eq:equivalent-bimpc} and then choosing $\lambda^* \in -\partial_w g(w^*; \xi_0, u^*)$ as an optimal incentive for \eqref{eq:incentivized-bimpc}.
\hidetext{
\begin{remark} (Incentive controllability for general incentives)
\label{rem:incentive-controllability-for-general-incentives}
In the next section, we consider general incentives of the form $\langle \lambda, \phi(w)\rangle$, where $\lambda \succeq_{\mathset{K}^*} 0$ and $\phi$ is a $\mathset{K}$-convex function for a closed convex cone $\mathset{K}$ (see Sec.~\ref{subsubsec:convex-vector-functions}).
For such incentives, incentive controllability does not necessarily hold.
We provide a counterexample in Appendix~\ref{app:incentive-controllability-for-general-incentives-a-counterexample}.
% This motivates the use of linear incentives.
\end{remark}
}
However, computing such an optimal incentive $\lambda^*$ for a given $w^*$ requires the explicit knowledge of $g$.
In the following subsection, we show that optimal incentives can be computed iteratively without knowing $g$ or its subdifferential.

\subsection{Iterative Method for Optimal Incentives}
\label{subsec:iterative-method}

To simplify notation, we assume that the team-optimal solution $\nu^*$ is available after solving the optimization problem \eqref{eq:equivalent-bimpc} and we suppress the variables $(\xi_0, u^*)$ in the cost function $g$, and write it as $g(w; \lambda) = g(w) + \langle \lambda, w\rangle$.
Given $w^*$, the goal is to find $\lambda^*$ such that $w^* = \argmin_w g(w; \lambda^*)$.
Such an optimal incentive $\lambda^*$ exists from Thm.~\ref{thm:bimpc-equivalence}, which shows that $\partial g$ can be used to find $\lambda^*$.
However, $\partial g$ is unknown to the BiMPC (see Assumption~\ref{assum:privacy-of-lompc}).
Instead, we can query the optimal solution of the LoMPC \eqref{eq:incentivized-lompc} for any incentive $\lambda$, which is computed as $w^*(\lambda) = \argmin_w g(w; \lambda)$.
The following result shows that access to the LoMPC as a black-box solver is sufficient to iteratively compute incentives $\lambda^{(k)}$, such that $\argmin_w g(w; \lambda^{(k)}) \rightarrow w^*$ as $k \rightarrow \infty$.

\begin{theorem} (Iterative method for linear incentives)
\label{thm:iterative-method}
% Let Assumptions~\ref{assum:properties-of-gi} and \ref{assum:privacy-of-lompc} hold, and $w^* \in \dom{g}$.
Let Assumption~\ref{assum:properties-of-gi} hold, and $w^* \in \dom{g}$.
Let $\lambda^{(k)}$, $w^{(k)}$ be iterates satisfying
\begin{subequations}
\label{eq:iterative-method-const-step}
\begin{align}
    w^{(k)} & = w^*(\lambda^{(k)}) = \textstyle{\argmin}_w \ (g(w) + \langle \lambda^{(k)}, w\rangle), \label{subeq:iterative-method-const-step-gradient} \\
    \lambda^{(k+1)} & = \lambda^{(k)} + \constsc (w^{(k)} - w^*), \label{subeq:iterative-method-const-step-flow}
\end{align}
\end{subequations}
where $\constsc$ is the modulus of strong convexity of $g$.
Then, $w^{(k)} \rightarrow w^*$ %with a convergence rate of $O(1/\sqrt{\constsc k})$.
with the rate given by $\lVert w^{(k)} - w^*\rVert = O(1/\sqrt{\constsc k})$.
\end{theorem}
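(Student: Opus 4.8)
The plan is to recognize the recursion \eqref{eq:iterative-method-const-step} as plain gradient descent on the conjugate $g^*$, with a step size that coincides with the inverse of its smoothness modulus, and then invoke the standard convergence theory for smooth convex minimization. \textbf{First I would set up the dual reformulation.} Since $w^* \in \dom g$ and, by Assumption~\ref{assum:properties-of-gi}, $g$ is subdifferentiable on its domain, $\partial g(w^*) \neq \emptyset$; fix $\lambda^*$ with $-\lambda^* \in \partial g(w^*)$, so that $w^*$ minimizes $g(\cdot) + \langle \lambda^*, \cdot\rangle$. More generally, the first-order optimality condition for \eqref{subeq:iterative-method-const-step-gradient} reads $0 \in \partial g(w^{(k)}) + \lambda^{(k)}$, i.e.\ $-\lambda^{(k)} \in \partial g(w^{(k)})$, which by \eqref{eq:conv-analysis-conjugate-subgrad} is equivalent to $w^{(k)} \in \partial g^*(-\lambda^{(k)})$. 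Because $g$ is $\constsc$-strongly convex, Prop.~\ref{prop:duality} gives $\dom g^* = \real^{\Nw\horizon}$ and that $g^*$ is $(1/\constsc)$-smooth; in particular $g^*$ is differentiable everywhere, $w^{(k)} = \nabla g^*(-\lambda^{(k)})$, and likewise $w^* = \nabla g^*(-\lambda^*)$. In other words, the LoMPC black box in Assumption~\ref{assum:privacy-of-lompc} is, up to the sign of $\lambda$, just an evaluation of $\nabla g^*$.

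\textbf{Next I would identify the iteration as gradient descent.} Put $v^{(k)} := -\lambda^{(k)}$, $v^* := -\lambda^*$, and $h(v) := g^*(v) - \langle w^*, v\rangle$. Then $h$ is convex and $(1/\constsc)$-smooth, $\nabla h(v) = \nabla g^*(v) - w^*$, the point $v^*$ minimizes $h$ (as $\nabla h(v^*) = 0$), and crucially $\nabla h(v^{(k)}) = w^{(k)} - w^*$. Negating \eqref{subeq:iterative-method-const-step-flow} turns the update into
\[
v^{(k+1)} = v^{(k)} - \constsc\,(w^{(k)} - w^*) = v^{(k)} - \constsc\,\nabla h(v^{(k)}),
\]
which is exactly gradient descent on $h$ with constant step size $\constsc = 1/L$, where $L = 1/\constsc$ is the smoothness constant of $h$ — precisely the borderline step size for which vanilla gradient descent still converges.

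\textbf{Then I would invoke the standard convergence estimates.} For gradient descent on an $L$-smooth convex function with step $1/L$ one has: (i) $\lVert v^{(k)} - v^*\rVert$ is non-increasing; (ii) the $O(L \lVert v^{(0)} - v^*\rVert^2 / k)$ bound on $h(v^{(k)}) - h(v^*)$; and (iii) the gradient norms $\lVert \nabla h(v^{(k)})\rVert$ are non-increasing (a short co-coercivity computation). Combining (ii) with the elementary smooth-convex inequality $\lVert \nabla h(v^{(k)})\rVert^2 \le 2L\,(h(v^{(k)}) - h(v^*))$ and substituting $\nabla h(v^{(k)}) = w^{(k)} - w^*$, $L = 1/\constsc$, $\lVert v^{(0)} - v^*\rVert = \lVert \lambda^{(0)} - \lambda^*\rVert$, gives the stated rate $\lVert w^{(k)} - w^*\rVert = O(1/\sqrt{\constsc k})$; part (iii) ensures this holds for the running iterate, not only the best one. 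A self-contained variant bypasses the conjugate: strong convexity of $g$ applied at the subgradient pairs $(w^{(k)}, -\lambda^{(k)})$ and $(w^*, -\lambda^*)$ yields $\langle \lambda^* - \lambda^{(k)}, w^{(k)} - w^*\rangle \ge \constsc \lVert w^{(k)} - w^*\rVert^2$; expanding $\lVert \lambda^{(k+1)} - \lambda^*\rVert^2$ via \eqref{subeq:iterative-method-const-step-flow} then gives $\lVert \lambda^{(k+1)} - \lambda^*\rVert^2 \le \lVert \lambda^{(k)} - \lambda^*\rVert^2 - \constsc^2 \lVert w^{(k)} - w^*\rVert^2$, and telescoping shows $\sum_k \lVert w^{(k)} - w^*\rVert^2 < \infty$ (hence convergence) with $\min_{j<k} \lVert w^{(j)} - w^*\rVert^2 \le \lVert \lambda^{(0)} - \lambda^*\rVert^2/(\constsc^2 k)$; monotonicity from (iii) promotes this to the per-iterate bound.

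\textbf{The main obstacle} is conceptual rather than computational: realizing that the query map $\lambda \mapsto w^*(\lambda)$ is $-\nabla g^*(-\lambda)$, so that the innocuous-looking update \eqref{subeq:iterative-method-const-step-flow} is gradient descent on the smooth surrogate $h$, whose smoothness constant $1/\constsc$ is exactly the reciprocal of the prescribed step size. Once the dual picture is in place the rate is textbook; the only residual care needed is obtaining a last-iterate (rather than best-iterate) guarantee, for which the monotonicity of gradient-descent gradient norms is the appropriate tool, and verifying that the strong-convexity modulus $\constsc$ is indeed admissible as a step size — which is exactly where the choice $\constsc$ in \eqref{subeq:iterative-method-const-step-flow} comes from.
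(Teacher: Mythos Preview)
Your proposal is correct and follows essentially the same approach as the paper: the paper defines $\tilde{g}^*(\lambda) = \min_w(g(w) + \langle \lambda, w\rangle) - \langle w^*, \lambda\rangle$ (which is just $-h(-\lambda)$ in your notation), observes via Prop.~\ref{prop:duality} that it is $(1/\constsc)$-smooth concave with gradient $w^*(\lambda) - w^*$, recognizes \eqref{eq:iterative-method-const-step} as gradient ascent with step $\constsc$, and cites Nesterov for the $O(1/\sqrt{\constsc k})$ gradient-norm rate. Your additional self-contained telescoping argument and the care taken with last-iterate versus best-iterate bounds are more explicit than the paper's treatment, but the core dual-gradient-descent idea is identical.
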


\begin{proof}
\changes{We define a dual function $\bar{g}^*$ corresponding to the closed proper and $\constsc$-strongly convex function $g$ as follows:
\begin{equation}
\label{eq:iterative-method-proof-dual-function}
    \bar{g}^*(\lambda) = \textstyle{\min}_w \ (g(w) + \langle \lambda, w\rangle).
\end{equation}
Note that $\bar{g}^*(\lambda) = -g^*(-\lambda)$, where $g^*$ is the conjugate of $g$ (see Sec.~\ref{subsubsec:conjugate-of-a-convex-function}).
So, by Prop.~\ref{prop:duality}, $\bar{g}^*$ is a $(1/\constsc)$-smooth concave function and $\dom(\bar{g}^*) = \real^{\Nw\horizon}$.
Now consider the function $\tilde{g}^*(\lambda) := \bar{g}^*(\lambda) - \langle w^*, \lambda \rangle$.
$\tilde{g}^*$ is concave and $(1/\constsc)$-smooth.
The gradient of $\tilde{g}^*$ is given by \eqref{eq:conv-analysis-conjugate-subgrad} as
\begin{equation*}
    \nabla \tilde{g}^*(\lambda) = \nabla \bar{g}^*(\lambda) - w^* = \textstyle{\argmin}_w \ (g(w) + \langle \lambda, w\rangle) - w^*.
\end{equation*}
So, if $\lambda$ is a critical point of $\tilde{g}^*$, then it must satisfy
\begin{equation*}
    \nabla \tilde{g}^*(\lambda) = 0 \Rightarrow w^* = \textstyle{\argmin}_w \ (g(w) + \langle \lambda, w\rangle).
\end{equation*}
In other words, $w^* = \argmin_w \ g(w; \lambda)$,
meaning $\lambda$ is an optimal incentive to achieve the solution $w^*$.
At least one critical point of $\tilde{g}^*$ exists because $g$ is assumed to be subdifferentiable (Assumption~\ref{assum:properties-of-gi}) and $w^* \in \dom{g}$ (it is obtained from \eqref{eq:equivalent-bimpc}).

Gradient ascent on $\tilde{g}^*$ with the constant step size of $\constsc$ converges because $\tilde{g}^*$ is $(1/\constsc)$-smooth~\cite[Sec.~1.2.3]{nesterov2018lectures}.
Gradient ascent for $\tilde{g}^*$ is of the form:
\begin{align*}
    \lambda^{(k+1)} = \lambda^{(k)} + m\nabla \tilde{g}^*(\lambda^{(k)}) = \lambda^{(k)} + m(w^*(\lambda^{(k)}) - w^*).
\end{align*}
Then, replacing $w^*(\lambda^{(k)})$ with $w^{(k)}$ in the above equation using \eqref{subeq:iterative-method-const-step-gradient}, we obtain \eqref{eq:iterative-method-const-step}.
Since $(w^{(k)}-w^*)$ is the gradient of $\tilde{g}^*$, it converges to zero with %a convergence rate of $O(1/\sqrt{\constsc k})$
the rate given by $\lVert w^{(k)} - w^*\rVert = O(1/\sqrt{\constsc k})$~\cite[Eq.~1.2.22]{nesterov2018lectures}.}
\unskip
\end{proof}

% \hidetext{
% \begin{remark}
% A result similar to Thm.~\ref{thm:iterative-method} can be obtained using consensus ADMM~\cite{boyd2011distributed}.
% % The proof for a more general example is given in the following subsection.
% The statement and proof of this result are provided in Appendix~\ref{app:iterative-method-consensus-admm}, Lem.~\ref{lem:iterative-method-consensus-admm}.
% \end{remark}
% }

\hidetext{
Although Thm.~\ref{thm:iterative-method} shows that the $w$ iterates converge to the desired optimal solution $w^*$, the $\lambda$ iterates might not converge.
Next, we provide a sufficient condition for $g$ such that there exists an optimal incentive $\lambda^*$ for any $w^* \in \dom g$ that is uniformly bounded.
This allows us to place constraints on the $\lambda$ iterates in \eqref{subeq:iterative-method-const-step-flow} to prevent them from becoming unbounded.

\begin{assumption} (Lipschitz continuity of $g$)
\label{assum:lipschitz-continuity-g}
    $\dom{g}$ has a nonempty interior and $g$ is Lipschitz continuous on its domain with constant $\constlipschitz$, i.e.,
    \begin{equation*}
    \label{eq:assum-lipschitz-continuity-g}
        |g(w^1) - g(w^2)| \leq \constlipschitz \lVert w^1 - w^2\rVert, \quad \forall w^1, w^2, \in \dom{g}.
    \end{equation*}
\end{assumption}

Under Assumptions~\ref{assum:properties-of-gi} and \ref{assum:lipschitz-continuity-g}, we have the following result, which provides bounds on optimal incentives for any $w^* \in \dom{g}$.
As a side note, if $g$ is Lipschitz continuous and $\constsc$-strongly convex on its domain, then $\dom{g}$ is bounded.

\begin{lemma} (Bounded optimal incentive)
\label{lem:bounded-optimal-incentive}
Let Assumptions~\ref{assum:properties-of-gi} and \ref{assum:lipschitz-continuity-g} hold.
Then, for any $w^* \in \dom{g}$, $\exists \lambda^*$ such that $\lVert \lambda^* \rVert \leq \constlipschitz$ and $w^* = \argmin_w g(w; \lambda^*)$.
\end{lemma}

\begin{proof}
By Thm.~\ref{thm:bimpc-equivalence}, we have to show that for any $w^* \in \dom{g}$, $\exists \lambda^*$ such that $\lVert \lambda^* \rVert \leq \constlipschitz$ and $\lambda^* \in \partial g(w^*)$.
First, we show that for any $w^* \in \setint(\dom{g}) \neq \emptyset$, $\partial g(w^*) \subset \{\lambda: \lVert \lambda \rVert \leq \constlipschitz\}$.
Assume otherwise; then $\exists w^* \in \setint(\dom{g})$ and $\lambda' \in \partial g(w^*)$ such that $\lVert \lambda' \rVert > \constlipschitz$.
Then
%since $w^* \in \setint(\dom{g})$ and
using \eqref{eq:subdifferential-definition}, for small $\epsilon > 0$,
\begin{equation*}
    g(w^* + \epsilon \lambda'/\lVert \lambda'\rVert) \geq g(w^*) + \langle \lambda', \epsilon \lambda'/\lVert \lambda'\rVert \rangle = g(w^*) + \epsilon \lVert \lambda' \rVert.
\end{equation*}
By the Lipschitz continuity assumption \eqref{eq:assum-lipschitz-continuity-g}, we have
\begin{equation*}
    g(w^* + \epsilon \lambda'/\lVert \lambda'\rVert) - g(w^*) \leq \constlipschitz \epsilon < \epsilon \lVert \lambda' \rVert,
\end{equation*}
which is a contradiction.
So, $\forall w^* \in \setint(\dom{g})$, $\partial g(w^*) \subset \{\lambda: \lVert \lambda \rVert \leq \constlipschitz\}$.

Next, for $w^* \in \setbd(\dom{g})$, the boundary of $\dom{g}$, let $\{w^k\} \rightarrow w^*$ and $\lambda^k \in \partial g(w^k)$, where $w^k \in \setint(\dom{g}) \ \forall k$.
Since $\lVert \lambda^k \rVert \leq \constlipschitz \ \forall k$, by Bolzano-Weierstrass theorem~\cite[Thm.~2.42]{rudin1964principles}, there exists a limit point $\lambda^*$ of $\{\lambda^k\}$ with $\lVert \lambda^* \rVert \leq \constlipschitz$.
Finally, since the graph of $\partial g$ is a closed set~\cite[Thm.~24.4]{rockafellar1997convex}, $\lambda^* \in \partial g(w^*)$.
\end{proof}

% \begin{remark}
% Lem.~\ref{lem:bounded-optimal-incentive} allows us to consider only the incentives in $\{\lambda: \lVert \lambda \rVert \leq \constlipschitz\}$ for the iterative method, Alg.~\ref{alg:iterative-method}.
% This is also useful for other gradient-based algorithms, such as cutting plane schemes \cite{hiriart1993convex}.
% % Lem.~\ref{lem:bounded-optimal-incentive} is important because although Thm.~\ref{thm:iterative-method} guarantees convergence of iterates $w^k$ to $w^*$, there may be some iterations where the error $\lVert w^k - w^*\rVert$ does not decrease.
% % This happens when $w^k$ is a kink point of $g$, i.e. when $\partial_w g(w^k)$ is not a singleton set.
% % If the BiMPC instance \eqref{eq:equivalent-bimpc} has imperfect knowledge about $\dom{g}$, then it might happen that $w^* \notin \dom{g}$.
% % Then, Lem.~\ref{lem:bounded-optimal-incentive} allows us to prematurely terminate the iterative method when $\lVert \lambda^k \rVert > \constlipschitz$.
% \end{remark}
}

\begin{table*}[tbp]
\footnotesize
\setlength\extrarowheight{2pt}
\caption{Properties of the dual functions and main results}\label{tab:properties-of-the-dual-functions-and-main-results}
\begin{tabularx}{\textwidth}{|l||C|C|C|}
\hline
% Header
\multicolumn{1}{|c||}{\multirow{2}{*}{Property}} & \multicolumn{2}{c|}{Single LoMPC ($M = 1$)} & \multicolumn{1}{c|}{Multiple LoMPCs ($M > 1$)} \\
\cline{2-4}
& \multicolumn{1}{c|}{Linear incentives (Sec.~\ref{sec:linear-incentives-for-hierarchical-mpc})} & \multicolumn{1}{c|}{Linear-convex incentives (Sec.~\ref{sec:linear-convex-incentives})} & \multicolumn{1}{c|}{Linear-convex incentives (Sec.~\ref{sec:multiple-lompcs})} \\
\hline
% Gradient of the dual function
Gradient of the dual function $\bar{g}^*$ & Eq.~\eqref{eq:conv-analysis-conjugate-subgrad} & Lem.~\ref{lem:linear-convex-incentive-conjugate-grad} & $-$ \\
% Lipschitz smoothness
Lipschitz smoothness of $\bar{g}^*$ & Prop.~\ref{prop:duality} & Lem.~\ref{lem:linear-convex-lipschitz-smooth-dual} & $-$ \\
\hline
% Incentive controllability
Incentive controllability (Def.~\ref{def:incentive-controllability}) & Thm.~\ref{thm:bimpc-equivalence} & Assumption~\ref{assum:linear-convex-incentive-controllability} & Lem.~\ref{lem:bounded-incentive-controllability} \\
% Iterative method for optimal incentives
Iterative method for optimal incentives & Thm.~\ref{thm:iterative-method} & Thm.~\ref{thm:linear-convex-incentives-iterative-method} & Thm.~\ref{thm:linear-convex-incentives-iterative-method} \\
% BiMPC reformulation
Incentive BiMPC reformulation & Def.~\ref{def:team-optimal-solution} & Def.~\ref{def:team-optimal-solution} & Thm.~\ref{thm:robust-bimpc-formulation} \\
\hline
\end{tabularx}
\end{table*}

To summarize the results in this section, we have shown that adding a linear incentive to the LoMPC cost results in incentive controllability and allows the BiMPC to select an optimal incentive to achieve the team-optimal solution.
Moreover, optimal incentives can be calculated iteratively without knowing the LoMPC cost.
In the next section, we build upon the definitions and results in this section and consider a linear-convex incentive structure, which is a generalization of the linear incentive. % $\langle \lambda, w\rangle$.
% We show several results analogous to those for linear incentives.
% In particular, 
We show that an analog of the iterative method, Thm.~\ref{thm:iterative-method}, can be used for linear-convex incentives.
% In Sec.~\ref{sec:multiple-lompcs}, we use the framework developed in Sec.~\ref{sec:linear-convex-incentives} for the case where the hierarchical MPC problem consists of multiple LoMPCs with nonidentical cost functions.

\section{Linear-Convex Incentives}
\label{sec:linear-convex-incentives}

In this section, we consider linear-convex incentives of the form $\langle \lambda, \phi(w)\rangle$, which is linear in $\lambda$ and convex in $w$.
We consider the case of a single LoMPC (i.e., when $M = 1$) and assume that $\phi$ is known to the BiMPC.
The results in this section follow the general outline for the linear incentive case in Sec.~\ref{sec:linear-incentives-for-hierarchical-mpc}.
First, we state assumptions on the function $\phi$ and then define a dual function $\bar{g}^*$ in $\lambda$, similar to the conjugate function $g^*$.
We then prove a duality result similar to Prop.~\ref{prop:duality}.
Finally, assuming incentive controllability, we derive an iterative method for linear-convex incentives.
% This allows us to use gradient-based optimization algorithms to solve the incentive hierarchical MPC problem.
% \todo{Finally, we show that our algorithm is similar to the dual ascent method with augmented Lagrangian for constrained optimization.}
The results in this section are a generalization of the results in Sec.~\ref{sec:linear-incentives-for-hierarchical-mpc}, see Tab.~\ref{tab:properties-of-the-dual-functions-and-main-results} for a comparison.

A linear-convex incentive $\langle \lambda, \phi(w) \rangle$ is a generalization of the linear incentive $\langle \lambda, w\rangle$.
For the EV charging example, linear-convex incentives can be used for nonlinear electricity pricing.
Linear-convex incentives can also be used to encode the fact that unit electricity prices are nonnegative, as we will show in Sec.~\ref{sec:numerical-examples}.
The LoMPC problem \eqref{eq:reduced-ith-lompc} with a linear-convex incentive structure is as follows:
\begin{equation}
\label{eq:linear-convex-incentivized-lompc}
w^*(\xi_0, u, \lambda) := \textstyle{\argmin}_{w} \ g(w; \xi_0, u) + \langle \lambda, \phi(w) \rangle.
\end{equation}

% We define concave-convex incentives as follows.
% \begin{definition}(Concave-Convex Incentives)
% \label{def:convex-concave-incentive}
%     A concave-convex incentive $\phi: \Lambda \times \real^{(\Nw)\horizon} \rightarrow \exreal$ is a continuous function, where $\Lambda$ is a closed convex set with nonempty interior.
%     For each $\lambda \in \Lambda$, $\phi(\lambda, \cdot)$ is a convex function with $\dom{\phi(\lambda, \cdot)} = \real^{(\Nw)\horizon}$.
%     For each $w \in \real^{(\Nw)\horizon}$, $-\phi(\cdot, w)$ is a closed proper convex function and subdifferentiable at each $\lambda \in \Lambda$.
% \end{definition}
Similar to Sec.~\ref{subsec:iterative-method}, we assume that the team-optimal solution $\nu^*$ is available after solving the optimization problem \eqref{eq:equivalent-bimpc} and drop the dependence of $g$ on $(\xi_0, u^*)$ for the rest of this section.
We define a linear-convex incentive as follows.
\begin{definition}(Linear-convex incentive)
\label{def:linear-convex-incentive}
    Let $\mathset{K} \subset \real^\Nlambda$ be a closed convex cone and $\phi: \real^{\Nw\horizon} \rightarrow \real^\Nlambda$ be a continuously differentiable $\mathset{K}$-convex function (see Sec.~\ref{subsubsec:convex-vector-functions}).
    Then the function $\bar{\phi}: \mathset{K}^* \times \real^{\Nw\horizon} \rightarrow \real$, with $\bar{\phi}(\lambda, w) := \langle \lambda, \phi(w) \rangle$ is called a linear-convex incentive.
\end{definition}

\begin{remark}
\label{rem:linear-convex-incentive-componentwise}
    If $\phi$ is a vector function that is convex in each of its components, then $\phi$ is $\mathset{K}$-convex for the cone of nonnegative vectors, $\mathset{K} = \real^{\Nlambda}_+$.
    $\mathset{K}$ is self-dual, i.e., $\mathset{K}^* = \mathset{K} = \real^{\Nlambda}_+$.
\end{remark}

Since $\phi$ is a $\mathset{K}$-convex function, $\langle \lambda, \phi(\cdot) \rangle$ is a convex function for any given $\lambda \succeq_{\mathset{K}^*} 0$ (see Sec.~\ref{subsubsec:convex-vector-functions}).
So, for each $\lambda \succeq_{\mathset{K}^*} 0$, $\langle \lambda, \phi(\cdot) \rangle$ is a closed proper convex function and differentiable at each $w \in \real^{\Nw\horizon}$.
For the EV charging example in Sec.~\ref{subsec:motivating-example}, we choose $\mathset{K}$ as the set of nonnegative vectors $\real^{\Nlambda}_+$.
This corresponds to the fact that the unit price of electricity cannot be negative.
This nonnegativity constraint motivates the linear-convex incentives used in this section.

% \todo{Remark on differentiability of $\phi$.}

Following the outline of Sec.~\ref{sec:linear-incentives-for-hierarchical-mpc}, we define a dual function $\bar{g}^*: \mathset{K}^* \rightarrow \exreal$, similar to \eqref{eq:conv-analysis-conjugate}, as follows:
\begin{equation}
\label{eq:linear-convex-dual}
    \bar{g}^*(\lambda) := \textstyle{\min}_w \ (g(w) + \langle \lambda, \phi(w) \rangle).
\end{equation}
Note that $\bar{g}^*$ is similar to the dual function in \eqref{eq:iterative-method-proof-dual-function}.
Tab.~\ref{tab:definitions-for-symbols-used-in-the-paper} tabulates the symbols in the paper, which are used consistently through Secs.~\ref{sec:linear-incentives-for-hierarchical-mpc}-\ref{sec:multiple-lompcs}, and can be used as a quick reference.
The minimum in $w$ exists for each $\lambda$ because $\langle \lambda, \phi(\cdot)\rangle + g$ is $\constsc$-strongly convex.
Now we prove a result on the differentiability of $\bar{g}^*$, similar to \eqref{eq:conv-analysis-conjugate-subgrad}.

% \begin{lemma}
% \label{lem:convex-concave-incentive-conjugate-subgrad}
% Under Assumptions \ref{assum:properties-of-g} and \ref{assum:convex-concave-incentive}, $\bar{g}^*$ is a proper convex function with $\dom{\bar{g}^*} = \Lambda$.
% Also, for each $\lambda \in \setint(\Lambda)$,
% \begin{equation}
% \label{eq:convex-concave-incentive-conjugate-subgrad}
%     \partial \bar{g}^*(\lambda) = \partial_\lambda (-\phi)(\lambda, w^*(\lambda)), 
% \end{equation}
% where $w^*(\lambda)$ is the unique maximum of \eqref{eq:convex-concave-conjugate},
% \begin{equation*}
%     w^*(\lambda) = \argmax_w \{-\phi(\lambda, w) - g(w)\}.
% \end{equation*}
% \end{lemma}

% \begin{proof}
% Since $-\phi(\cdot, w)$ is a convex function for each $w$ and $\bar{g}^*$ is the pointwise supremum of convex functions, $\bar{g}^*$ is convex~\cite[Sec.~3.2.3]{boyd2004convex}.
% Moreover, since $g + \phi(\lambda, \cdot)$ is $\constsc$-strongly convex (by Assumption~\ref{assum:properties-of-g}) with a closed convex domain for each $\lambda$, \eqref{eq:convex-concave-conjugate} has a unique maximum and $\bar{g}^*(\lambda) < \infty$ for each $\lambda$.
% So, $\dom{\bar{g}^*} = \Lambda$.

% Since $\bar{g}^*$ is a proper convex function, it is subdifferentiable on the interior of its domain, $\setint(\Lambda)$.
% That the subdifferential of $\bar{g}^*$ is given by \eqref{eq:convex-concave-incentive-conjugate-subgrad} follows directly from \cite[Prop.~A.22]{bertsekas1971control} and the fact that \eqref{eq:conv-analysis-conjugate} has a unique maximum.
% \end{proof}

\begin{lemma} (Convexity and gradient of $-\bar{g}^*$)
\label{lem:linear-convex-incentive-conjugate-grad}
Under Assumption~\ref{assum:properties-of-gi}, Def.~\ref{def:linear-convex-incentive}, and \eqref{eq:linear-convex-dual}, $-\bar{g}^*$ is a proper convex function with $\dom{\bar{g}^*} = \mathset{K}^*$.
$-\bar{g}^*$ is differentiable for each $\lambda \in \mathset{K}^*$, with
\begin{equation}
\label{eq:linear-convex-incentive-conjugate-grad}
    \nabla (-\bar{g}^*)(\lambda) = -\phi(w^*(\lambda)), 
\end{equation}
where $w^*(\lambda)$ is the unique minimum of \eqref{eq:linear-convex-dual}, i.e.,
\begin{equation}
\label{eq:linear-convex-incentive-conjugate-grad-w}
    w^*(\lambda) = \textstyle{\argmin}_w \ (g(w) + \langle \lambda, \phi(w)\rangle).
\end{equation}
Further, $w^*(\cdot)$ is a continuous function on $\mathset{K}^*$.
\end{lemma}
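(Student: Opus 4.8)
The plan is to establish the three claims---domain of $\bar{g}^*$, differentiability with the stated gradient, and continuity of $w^*(\cdot)$---by reducing everything to a convex-analytic statement about the ``partial conjugate'' in the variable $\lambda$. The key observation is that the map $\lambda \mapsto \langle \lambda, \phi(w)\rangle$ is linear for each fixed $w$, so $-\bar{g}^*(\lambda) = \sup_w\,(\langle \lambda, -\phi(w)\rangle - g(w))$ is a pointwise supremum of affine functions of $\lambda$, hence a closed convex function of $\lambda$; this is exactly the conjugate of $h(\mu) := \inf\{g(w) : \phi(w) = \mu\}$ (or, more carefully, a support-type representation), restricted to the cone $\mathset{K}^*$ on which $\langle \lambda, \phi(\cdot)\rangle$ is genuinely convex so that the $\min$ in \eqref{eq:linear-convex-dual} is attained. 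So first I would record that $-\bar{g}^*$ is proper convex on $\mathset{K}^*$: properness because $g$ is proper and $w^*(\lambda)$ is attained (by $\constsc$-strong convexity of $g + \langle\lambda,\phi(\cdot)\rangle$ for $\lambda \in \mathset{K}^*$, as already noted after \eqref{eq:linear-convex-dual}), and $\dom \bar g^* = \mathset{K}^*$ because for $\lambda \notin \mathset{K}^*$ the function $\langle \lambda, \phi(\cdot)\rangle$ need not be convex and we simply declare $\bar g^* = -\infty$ there, consistent with Def.~\ref{def:linear-convex-incentive} which only defines $\bar\phi$ on $\mathset{K}^* \times \real^{\Nw\horizon}$.

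Next, for the gradient formula \eqref{eq:linear-convex-incentive-conjugate-grad}, the main tool is Danskin-type differentiation of a value function, or equivalently \eqref{eq:conv-analysis-conjugate-subgrad} applied in the $\lambda$ variable. Write $F(\lambda, w) := g(w) + \langle \lambda, \phi(w)\rangle$. For fixed $\lambda \in \mathset{K}^*$, $F(\lambda,\cdot)$ is $\constsc$-strongly convex and closed, so the minimizer $w^*(\lambda)$ is unique; the strong convexity modulus is uniform in $\lambda$. I would show $-\bar{g}^*$ is differentiable at any $\lambda_0 \in \mathset{K}^*$ with $\nabla(-\bar g^*)(\lambda_0) = -\phi(w^*(\lambda_0))$ by the standard argument: $-\phi(w^*(\lambda_0))$ is a subgradient of $-\bar g^*$ at $\lambda_0$ (immediate from $-\bar g^*(\lambda) \le \langle \lambda, -\phi(w^*(\lambda_0))\rangle - g(w^*(\lambda_0)) = -\bar g^*(\lambda_0) + \langle \lambda - \lambda_0, -\phi(w^*(\lambda_0))\rangle$), and the subgradient is unique because a convex function is differentiable at a point iff its subdifferential there is a singleton, which follows once we know $w^*(\cdot)$ is continuous (so that any limit of difference quotients picks out the same $\phi(w^*(\lambda_0))$). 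Alternatively, and perhaps cleaner, I would invoke \cite[Thm.~23.5]{rockafellar1997convex} or the conjugate-subgradient relation \eqref{eq:conv-analysis-conjugate-subgrad}: $\partial(-\bar g^*)(\lambda) = \{-\phi(w) : w \in \argmin_w F(\lambda, w)\} = \{-\phi(w^*(\lambda))\}$, a singleton, hence differentiability.

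The continuity of $w^*(\cdot)$ on $\mathset{K}^*$ I would prove directly from uniform strong convexity, and I expect this to be the main obstacle, since $g$ is merely closed proper (possibly extended-valued, with nontrivial domain from the folded-in constraints) rather than, say, continuous or finite-valued. Fix $\lambda_0 \in \mathset{K}^*$ and a sequence $\lambda_n \to \lambda_0$ in $\mathset{K}^*$. Using $\constsc$-strong convexity of $F(\lambda_n, \cdot)$ at its minimizer $w^*(\lambda_n)$ evaluated at $w^*(\lambda_0)$, and of $F(\lambda_0,\cdot)$ at $w^*(\lambda_0)$ evaluated at $w^*(\lambda_n)$, and adding the two inequalities, the $g$-terms cancel and one obtains $\constsc \lVert w^*(\lambda_n) - w^*(\lambda_0)\rVert^2 \le \langle \lambda_n - \lambda_0,\, \phi(w^*(\lambda_0)) - \phi(w^*(\lambda_n))\rangle$. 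If the $w^*(\lambda_n)$ stay in a bounded region, continuity of $\phi$ bounds the right side by $C\lVert \lambda_n - \lambda_0\rVert \to 0$ and we are done (in fact with a local Hölder-$\tfrac12$ modulus). The remaining point is boundedness of $\{w^*(\lambda_n)\}$: this follows because $\bar g^*$ is continuous on $\mathset{K}^*$ near $\lambda_0$ (a finite convex function is locally bounded), so $g(w^*(\lambda_n)) = \bar g^*(\lambda_n) - \langle \lambda_n, \phi(w^*(\lambda_n))\rangle$ is controlled, and combining with a coercivity estimate from $\constsc$-strong convexity of $g$ itself (i.e. $g(w) \ge g(\tilde w) + \langle v, w - \tilde w\rangle + \tfrac{\constsc}{2}\lVert w - \tilde w\rVert^2$ for any $\tilde w \in \dom g$, $v \in \partial g(\tilde w)$, which exists by Assumption~\ref{assum:properties-of-gi}) forces $\lVert w^*(\lambda_n)\rVert$ to stay bounded. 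Assembling these three pieces---properness/domain, the one-line subgradient computation upgraded to differentiability, and the strong-convexity continuity estimate with its boundedness lemma---completes the proof.
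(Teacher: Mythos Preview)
Your proposal is essentially correct and parallels the paper's proof closely for the convexity/domain and differentiability parts: the paper also obtains convexity of $-\bar g^*$ as a pointwise supremum of affine functions, and invokes Danskin's theorem to identify $\partial(-\bar g^*)(\lambda)$ with $\{-\phi(\bar w):\bar w\in\argmin_w F(\lambda,w)\}$, which is a singleton by strong convexity. Your two alternatives for differentiability (direct subgradient plus continuity of $w^*$, or the Danskin-style argmax characterization) are both valid; the paper takes the second route.

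The continuity argument is where you genuinely diverge from the paper. The paper proves continuity by (i) a coercivity bound showing $\{w^{(k)}\}$ is bounded, then (ii) extracting a subsequential limit $\bar w$ and sandwiching $\bar g^*(\bar\lambda)\le g(\bar w)+\langle\bar\lambda,\phi(\bar w)\rangle\le\cdots\le\bar g^*(\bar\lambda)$ to conclude $\bar w=w^*(\bar\lambda)$. Your three-point strong-convexity inequality
\[
\constsc\,\lVert w^*(\lambda_n)-w^*(\lambda_0)\rVert^2 \le \langle\lambda_n-\lambda_0,\,\phi(w^*(\lambda_0))-\phi(w^*(\lambda_n))\rangle
\]
is more direct and yields a local H\"older modulus, which is a nice bonus.

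One localized gap: your boundedness step is circular as written. You say $g(w^*(\lambda_n))=\bar g^*(\lambda_n)-\langle\lambda_n,\phi(w^*(\lambda_n))\rangle$ is ``controlled,'' but $\langle\lambda_n,\phi(w^*(\lambda_n))\rangle$ is not yet known to be bounded---that is exactly what boundedness of $w^*(\lambda_n)$ would give you. The fix (and this is what the paper does) is to work with the full objective: since $\lambda_n\in\mathset{K}^*$, the convex function $\langle\lambda_n,\phi(\cdot)\rangle$ is affinely minorized at any fixed $w'$ by $\langle\lambda_n,\phi(w')+D\phi(w')[\,\cdot\,-w']\rangle$, so $F(\lambda_n,\cdot)$ is uniformly coercive for bounded $\lambda_n$; combined with $F(\lambda_n,w^*(\lambda_n))=\bar g^*(\lambda_n)\le F(\lambda_n,w^*(\lambda_0))$ bounded, this forces $\{w^*(\lambda_n)\}$ to be bounded. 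With that correction your argument goes through.
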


\begin{proof}
The proof is provided in Appendix~\ref{subapp:proof-linear-convex-incentive-conjugate-grad}.
\end{proof}

Lem.~\ref{lem:linear-convex-incentive-conjugate-grad} shows that $-\bar{g}^*$ is a continuously differentiable convex function, with gradient given by $\nabla \bar{g}^*(\lambda) = \phi(w^*(\lambda))$.
% Since, for a given incentive $\lambda$, the LoMPC \eqref{eq:linear-convex-incentivized-lompc} computes the optimal solution $w^*(\lambda)$ of \eqref{eq:linear-convex-dual}, Lem.~\ref{lem:linear-convex-incentive-conjugate-grad} allows us to compute the gradient of $\bar{g}^*$ at $\lambda$ without knowing $g$ (see Assumption~\ref{assum:privacy-of-lompc}).
Similar to Prop.~\ref{prop:duality}, the following result shows that $-\bar{g}^*$ is a Lipschitz smooth function.
We use this result in Thm.~\ref{thm:linear-convex-incentives-iterative-method} to show the convergence of an iterative method for calculating optimal incentives, similar to Thm.~\ref{thm:iterative-method} (see Tab.~\ref{tab:properties-of-the-dual-functions-and-main-results}).

\begin{lemma} (Lipschitz smoothness of $-\bar{g}^*$)
\label{lem:linear-convex-lipschitz-smooth-dual}
    Under Assumption~\ref{assum:properties-of-gi}, Def.~\ref{def:linear-convex-incentive}, and \eqref{eq:linear-convex-dual}, $-\bar{g}^*$ satisfies the following inequality for any $\lambda, \bar{\lambda} \in \mathset{K}^*$:
    \begin{equation}
    \label{eq:linear-convex-lipschitz-smooth}
    \begin{split}
        \hspace{-5pt} -\bar{g}^*(\lambda) \leq -\bar{g}^*(\lambda; \bar{\lambda}) := & -\bar{g}^*(\bar{\lambda}) + \langle \lambda - \bar{\lambda}, -\phi(\bar{w})\rangle \\
        & + 1/(2\constsc) \lVert D \phi(\bar{w})^\top (\lambda - \bar{\lambda}) \rVert^2,
    \end{split}
    \end{equation}
    where $\bar{w} = w^*(\bar{\lambda}) = \argmin_w \{g(w) + \langle \bar{\lambda}, \phi(w)\rangle\}$, $\constsc$ is the modulus of strong convexity of $g$, \changes{and $D\phi$ is the Jacobian of the vector function $\phi$}.
   In particular, when restricted to any compact domain, $-\bar{g}^*$ is Lipschitz smooth.
\end{lemma}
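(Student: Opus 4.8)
The plan is to establish the descent-lemma-type inequality \eqref{eq:linear-convex-lipschitz-smooth} by combining the strong convexity of $g + \langle \lambda, \phi(\cdot)\rangle$ with a first-order expansion of $\bar{g}^*$, and then derive Lipschitz smoothness on compact sets as a corollary. First I would fix $\lambda, \bar\lambda \in \mathset{K}^*$ and write $\bar w = w^*(\bar\lambda)$, $w = w^*(\lambda)$. By Lemma~\ref{lem:linear-convex-incentive-conjugate-grad}, $-\bar g^*$ is differentiable with $\nabla(-\bar g^*)(\bar\lambda) = -\phi(\bar w)$, so the right-hand side of \eqref{eq:linear-convex-lipschitz-smooth} is exactly the first-order Taylor model at $\bar\lambda$ plus a quadratic correction term; the content of the lemma is that this quadratic model is a global upper bound. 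The natural route is to bound $-\bar g^*(\lambda)$ directly from the definition \eqref{eq:linear-convex-dual}: since $w$ is the minimizer at $\lambda$, we have $-\bar g^*(\lambda) = -\bigl(g(w) + \langle \lambda, \phi(w)\rangle\bigr)$, while by suboptimality of $\bar w$ at $\lambda$, $-\bar g^*(\lambda) \le -\bigl(g(\bar w) + \langle \lambda, \phi(\bar w)\rangle\bigr)$ — but this crude bound is not tight enough, so instead I would work with the optimality of $w$ and exploit strong convexity of the objective at $\bar\lambda$ evaluated at $w$.

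The key computation: let $h_{\bar\lambda}(v) := g(v) + \langle \bar\lambda, \phi(v)\rangle$, which is $\constsc$-strongly convex with minimizer $\bar w$ and minimum value $\bar g^*(\bar\lambda)$. Strong convexity gives $h_{\bar\lambda}(w) \ge \bar g^*(\bar\lambda) + (\constsc/2)\|w - \bar w\|^2$. On the other hand, $-\bar g^*(\lambda) = -g(w) - \langle \lambda, \phi(w)\rangle = -h_{\bar\lambda}(w) - \langle \lambda - \bar\lambda, \phi(w)\rangle$. Combining, $-\bar g^*(\lambda) \le -\bar g^*(\bar\lambda) - (\constsc/2)\|w - \bar w\|^2 - \langle \lambda - \bar\lambda, \phi(w)\rangle$. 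Now I would expand $\phi(w) = \phi(\bar w) + D\phi(\bar w)(w - \bar w) + o(\|w-\bar w\|)$; handling the remainder rigorously is the delicate point, but the structure is that $-\langle \lambda - \bar\lambda, \phi(w)\rangle \approx -\langle \lambda - \bar\lambda, \phi(\bar w)\rangle - \langle D\phi(\bar w)^\top(\lambda - \bar\lambda), w - \bar w\rangle$, and then Young's inequality $-\langle D\phi(\bar w)^\top(\lambda-\bar\lambda), w-\bar w\rangle \le (1/(2\constsc))\|D\phi(\bar w)^\top(\lambda - \bar\lambda)\|^2 + (\constsc/2)\|w - \bar w\|^2$ cancels the $-(\constsc/2)\|w-\bar w\|^2$ term, leaving precisely \eqref{eq:linear-convex-lipschitz-smooth}. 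To avoid the first-order remainder issue I would instead use a cleaner argument: apply the strong convexity bound in the other direction by also using that $w^*(\lambda)$ minimizes $h_\lambda$, yielding $h_\lambda(\bar w) \ge \bar g^*(\lambda) + (\constsc/2)\|w - \bar w\|^2$, i.e. $g(\bar w) + \langle\lambda,\phi(\bar w)\rangle \ge \bar g^*(\lambda) + (\constsc/2)\|w-\bar w\|^2$. Rearranging this gives $-\bar g^*(\lambda) \le -g(\bar w) - \langle \lambda, \phi(\bar w)\rangle - (\constsc/2)\|w - \bar w\|^2 = -\bar g^*(\bar\lambda) - \langle \lambda - \bar\lambda, \phi(\bar w)\rangle - (\constsc/2)\|w-\bar w\|^2$, which is even better than needed — it shows the quadratic term is a bound without any correction. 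Wait: that can't be right for the smoothness statement, so the genuine work is bounding $\|w - \bar w\|$ from below in terms of $\|D\phi(\bar w)^\top(\lambda - \bar\lambda)\|$, using the first-order optimality conditions $-D\phi(w)^\top \lambda \in \partial g(w)$ and $-D\phi(\bar w)^\top\bar\lambda \in \partial g(\bar w)$ together with strong monotonicity of $\partial g$: $\langle (-D\phi(w)^\top\lambda) - (-D\phi(\bar w)^\top\bar\lambda), w - \bar w\rangle \ge \constsc\|w-\bar w\|^2$. This is the estimate I would push through.

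The main obstacle is controlling the cross term arising from the difference $D\phi(w)^\top\lambda - D\phi(\bar w)^\top\bar\lambda$: the Jacobian is evaluated at two different points $w$ and $\bar w$, so one cannot immediately factor out $(\lambda - \bar\lambda)$. The resolution — and the reason the final sentence restricts to compact domains — is that on a compact set $w^*(\cdot)$ has bounded range (by continuity from Lemma~\ref{lem:linear-convex-incentive-conjugate-grad}), hence $\phi$, $D\phi$ are bounded and $D\phi$ is Lipschitz there, so the error term $\|(D\phi(w) - D\phi(\bar w))^\top\bar\lambda\|$ is $O(\|w - \bar w\|)$ with a constant depending on the compact set; absorbing this into the strong-convexity term (possibly shrinking the effective modulus) yields $\|w - \bar w\| \le C\|D\phi(\bar w)^\top(\lambda - \bar\lambda)\|$ locally, and feeding that back into the inequality derived above gives \eqref{eq:linear-convex-lipschitz-smooth} with $\constsc$ replaced by a local constant, and a two-sided application (swapping $\lambda \leftrightarrow \bar\lambda$) then gives a Lipschitz bound on $\nabla(-\bar g^*) = -\phi\circ w^*$. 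I expect the clean inequality \eqref{eq:linear-convex-lipschitz-smooth} with the literal constant $1/(2\constsc)$ to follow from the Young's-inequality cancellation sketched above once the first-order expansion of $\phi$ is handled carefully, and I would present that as the main argument, deferring the genuinely technical Jacobian-Lipschitz estimate to the compact-domain corollary — likely with the full details pushed to an appendix, as was done for Lemma~\ref{lem:linear-convex-incentive-conjugate-grad}.
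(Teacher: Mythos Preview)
Your proposal circles the right ingredients but never lands on a proof of the exact inequality \eqref{eq:linear-convex-lipschitz-smooth}. The three attempts all have issues: attempt~2 (Taylor-expand $\phi(w)$ around $\bar w$) correctly runs into a remainder term, since $\langle \lambda - \bar\lambda, \phi(\cdot)\rangle$ need not be convex when $\lambda - \bar\lambda \notin \mathset{K}^*$; attempt~3 has the inequality in the wrong direction (from $h_\lambda(\bar w) \ge \bar g^*(\lambda) + (\constsc/2)\|w-\bar w\|^2$ you get $-\bar g^*(\lambda) \ge -\bar g^*(\bar\lambda) - \langle \lambda-\bar\lambda,\phi(\bar w)\rangle + (\constsc/2)\|w-\bar w\|^2$, a \emph{lower} bound on $-\bar g^*$, which just re-proves convexity); attempt~4 via strong monotonicity of $\partial g$ introduces the Jacobian difference $D\phi(w)-D\phi(\bar w)$ and forces you into local Lipschitz estimates, which will not deliver the global inequality with the clean constant $1/(2\constsc)$.

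The missing idea is to never introduce $w^*(\lambda)$ at all. Instead, bound $g(w) + \langle \lambda,\phi(w)\rangle$ from below \emph{pointwise in $w$} by combining two exact first-order inequalities: strong convexity of $g$ alone at $\bar w$ with the specific subgradient $v = -D\phi(\bar w)^\top\bar\lambda \in \partial g(\bar w)$ (this choice comes from the optimality condition for $\bar w$), and convexity of $\langle \lambda, \phi(\cdot)\rangle$ (valid with no remainder since $\lambda \in \mathset{K}^*$). Adding these gives, for all $w$,
\[
g(w) + \langle \lambda, \phi(w)\rangle \ \ge\ \bar g^*(\bar\lambda) + \langle \lambda - \bar\lambda, \phi(\bar w)\rangle + \langle D\phi(\bar w)^\top(\lambda-\bar\lambda), w-\bar w\rangle + \tfrac{\constsc}{2}\|w-\bar w\|^2,
\]
because the linear terms $\langle v, w-\bar w\rangle$ and $\langle D\phi(\bar w)^\top\lambda, w-\bar w\rangle$ combine to $\langle D\phi(\bar w)^\top(\lambda-\bar\lambda), w-\bar w\rangle$. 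Now minimize both sides over $w$: the left side gives $\bar g^*(\lambda)$, and the right side is a quadratic in $w-\bar w$ whose minimum is $-\tfrac{1}{2\constsc}\|D\phi(\bar w)^\top(\lambda-\bar\lambda)\|^2$. This yields \eqref{eq:linear-convex-lipschitz-smooth} exactly. The compact-set Lipschitz smoothness then follows immediately since $w^*(\Lambda)$ is compact and $\|D\phi\|$ is bounded there; no Jacobian-Lipschitz estimate is needed.
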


\begin{proof}
The proof is provided in Appendix~\ref{subapp:proof-linear-convex-lipschitz-smooth-dual}.
\end{proof}

\begin{table}[tbp]
\footnotesize
\setlength\extrarowheight{2pt}
\caption{Definitions for symbols used in the paper}\label{tab:definitions-for-symbols-used-in-the-paper}
\begin{tabularx}{\columnwidth}{|c|ll|L|}
\hline
% Header
Symbol & \multicolumn{2}{|l|}{Definition} & Meaning \\
\hline
% Cost function
$g^i(w^i; \xi_0, u)$ & Sec.~\ref{sec:linear-incentives-for-hierarchical-mpc}: & \eqref{eq:reduced-ith-lompc} & LoMPC cost (also $g^i(w^i)$) \\
\hline
% Incentive cost function and optimal solution
$g^i(w^i; \xi_0, u, \lambda)$ & Sec.~\ref{sec:linear-incentives-for-hierarchical-mpc}: & \eqref{eq:incentivized-lompc-cost}, \eqref{eq:incentivized-lompc} & Incentive LoMPC cost and \\[-2pt] 
$w^{i*}(\xi_0, u, \lambda)$ & Sec.~\ref{sec:linear-convex-incentives}: & \eqref{eq:linear-convex-incentivized-lompc} & optimal solution (also \\[-1.5pt]
& Sec.~\ref{sec:multiple-lompcs}: & \eqref{eq:parametric-form-gi} & $g^i(w^i; \lambda)$ and $w^{i*}(\lambda)$) \\
\hline
% Robust BiMPC solution
$(u^*, \hat{w}^*)(\xi_0)$ & Sec.~\ref{sec:multiple-lompcs}: & \eqref{eq:robust-bimpc} & Team-optimal solution \\[-2pt]
& & & for robust BiMPC \\
\hline
% Concave dual functions
$\bar{g}^*$, $\tilde{g}^*$ & Sec.~\ref{sec:linear-incentives-for-hierarchical-mpc}: & \eqref{eq:iterative-method-proof-dual-function} & Concave dual functions \\[-2pt]
& Sec.~\ref{sec:linear-convex-incentives}: & \eqref{eq:linear-convex-dual}, \eqref{eq:linear-convex-dual-transformed} & (related to the conjugate \\[-2pt]
& Sec.~\ref{sec:multiple-lompcs}: & \eqref{eq:bounded-incentive-controllability-conjugate-g} & function $g^*$, \eqref{eq:conv-analysis-conjugate}) \\
\hline
% Concave quadratic under-approximations
$\bar{g}^*(\cdot \,; \bar{\lambda})$ & Sec.~\ref{sec:linear-convex-incentives}: & \eqref{eq:linear-convex-lipschitz-smooth}, \eqref{eq:linear-convex-dual-transformed-lower-bound} & Concave quadratic \\[-1.5pt]
$(\tilde{g}^*(\cdot \,; \bar{\lambda}))$ & & & under-approximation of $\bar{g}^*$ ($\tilde{g}^*$) at $\bar{\lambda}$ \\
\hline
\end{tabularx}
\end{table}

Lem.~\ref{lem:linear-convex-lipschitz-smooth-dual} shows that the dual function $-\bar{g}^*$ can be majorized, i.e., bounded from above, by a quadratic function $-\bar{g}^*(\cdot \: ; \bar{\lambda})$ anchored at $\bar{\lambda}$.
Lems.~\ref{lem:linear-convex-incentive-conjugate-grad} and \ref{lem:linear-convex-lipschitz-smooth-dual} discuss the properties of the dual function $\bar{g}^*$.
In the rest of the section, we consider the incentive controllability property of the incentive hierarchical MPC and show that an iterative method can be used to compute an optimal incentive.
\hidetext{As noted in Rem.~\ref{rem:incentive-controllability-for-general-incentives}, incentive controllability does not necessarily hold for linear-convex incentives. (remove text ahead)}
Unlike for linear incentives, incentive controllability does not necessarily hold for linear-convex incentives.
From Def.~\ref{def:incentive-controllability}, incentive controllability holds if for all $w^* \in \dom{g}$, $\exists \lambda^* \in \mathset{K}^*$ such that $w^* = \argmin_w \{g(w) + \langle \lambda^*, \phi(w)\rangle\}$.
Using the optimality condition for $g(w) + \langle \lambda^*, \phi(w)\rangle$~\cite[Thm.~23.8]{rockafellar1997convex}, $w^*$ must satisfy
\begin{equation}
\label{eq:linear-convex-incentive-controllability-condition}
    -D \phi(w^*)^\top \lambda^* \in \partial g(w^*),
\end{equation}
for some $\lambda^* \in \mathset{K}^*$.
Using the above condition, we make the following assumption on the incentive structure.

\begin{assumption} (Linear-convex incentive controllability)
\label{assum:linear-convex-incentive-controllability}
$\phi$ satisfies the following properties:
\begin{enumerate}[leftmargin=*]
    %\item $\phi$ is injective, i.e., $\phi(w_1) = \phi(w_2) \Rightarrow w_1 = w_2$.
    % \item $\phi(w_1) \preceq_{\mathset{K}} \phi(w_2) \Rightarrow \phi(w_1) = \phi(w_2), \ \forall w_1, w_2$.
    \item $\phi(w^1) \preceq_{\mathset{K}} \phi(w^2) \Rightarrow w^1 = w^2, \ \forall w^1, w^2$.

    \item $\forall w^* \in \dom{g}$, $\exists \lambda^* \in \mathset{K}^*$, such that \eqref{eq:linear-convex-incentive-controllability-condition} holds.
\end{enumerate}
\end{assumption}

Following the discussion above, incentive controllability holds for linear-convex incentives when Assumption~\ref{assum:linear-convex-incentive-controllability} holds.
In the following remark, we identify a family of functions $\phi$ that satisfy Assumption~\ref{assum:linear-convex-incentive-controllability}.

\begin{remark}
\label{rem:linear-convex-incentive-controllability}
One class of linear-convex incentives that always satisfies Assumption~\ref{assum:linear-convex-incentive-controllability} is of the form $\langle \lambda_l, w\rangle + \langle \lambda_c, \phi_c(w)\rangle$, where $\phi_c$ is continuously differentiable and $\mathset{K}_c$-convex for some closed convex cone $\mathset{K}_c$, $\lambda_l \in \real^{\Nw\horizon}$, $\lambda_c \in (\mathset{K}_c)^*$, $\lambda = (\lambda_l, \lambda_c)$, and $\phi(w) = (w, \phi_c(w))$.
Here, $\mathset{K} = \{\mathbb{0}_{\Nw\horizon}\} \times \mathset{K}_c$, and $\mathset{K}^* = \real^{\Nw\horizon} \times (\mathset{K}_c)^*$.
Consequently, adding a linear-convex incentive to a linear incentive retains the incentive controllability property, Def.~\ref{def:incentive-controllability}.
\end{remark}

We make the following observation about the dual function $-\bar{g}^*$.
Given $w^* \in \dom{g}$, we define the function $\tilde{g}^*$ as follows:
\begin{equation}
\label{eq:linear-convex-dual-transformed}
    \tilde{g}^*(\lambda) := \bar{g}^*(\lambda) - \langle\lambda, \phi(w^*) \rangle.
\end{equation}
If $\lambda$ is a critical point of $\tilde{g}^*$, then by Lem.~\ref{lem:linear-convex-incentive-conjugate-grad},
\begin{equation*}
    \nabla \tilde{g}^*(\lambda) = 0 \ \Rightarrow \ \phi(w^*(\lambda)) - \phi(w^*) = 0.
\end{equation*}
By Assumption~\ref{assum:linear-convex-incentive-controllability}, this implies that $w^*(\lambda) = w^*$, i.e., $\lambda$ is an optimal incentive for $w^*$.
Thus, we claim that minimizing $-\tilde{g}^*$ results in an optimal incentive.
By Lem.~\ref{lem:linear-convex-lipschitz-smooth-dual}, for any $\lambda, \bar{\lambda} \in \mathset{K}^*$,
\begin{equation}
\label{eq:linear-convex-dual-transformed-lower-bound}
    -\tilde{g}^*(\lambda) \leq -\tilde{g}^*(\lambda \, ; \bar{\lambda}) := -\bar{g}^*(\lambda \, ; \bar{\lambda}) + \langle \lambda, \phi(w^*) \rangle,
\end{equation}
where $\bar{g}^*(\cdot \, ; \bar{\lambda})$ is defined in \eqref{eq:linear-convex-lipschitz-smooth} (see Tab.~\ref{tab:definitions-for-symbols-used-in-the-paper} for a summary).
So, $-\tilde{g}^*$ is majorized by the quadratic function $-\tilde{g}^*(\cdot \, ;\bar{\lambda})$.
Thus, to minimize $-\tilde{g}^*$, we can follow an iterative method where we minimize the majorization $-\tilde{g}^*(\cdot \,; \lambda^{(k)})$ at the iterate $\lambda^{(k)}$ to obtain the next iterate $\lambda^{(k+1)}$.
The following theorem, one of this paper's main results, formalizes the above discussion and guarantees the convergence of the majorization-minimization procedure.

\begin{theorem} (Iterative method for linear-convex incentives)
\label{thm:linear-convex-incentives-iterative-method}
Let Assumptions~\ref{assum:properties-of-gi} and \ref{assum:linear-convex-incentive-controllability} hold, and $w^* \in \dom{g}$.
Let $\lambda^{(k)}$ and $w^{(k)}$ be iterates satisfying
\begin{subequations}
\label{eq:linear-convex-incentives-iterative-method-const-step}
%%
% \begin{align}
%     w^{k} & = w^*(\lambda^k) = \argmin_w (g(w) + \langle \lambda^k, \phi(w)\rangle), \label{subeq:linear-convex-incentives-iterative-method-const-step-gradient} \\
%     \lambda^{k+1} & \in \begin{aligned}[t]
%     \argmin_{\lambda \in \mathset{K}^*} \bigl( & 1/(2\constsc) \lVert D\phi(w^k)^\top (\lambda - \lambda^k) \rVert^2 \\
%     & + \epsilon \lVert \lambda - \lambda^k \rVert^2 - \langle \lambda, \phi(w^k) - \phi(w^*) \rangle \bigr),
%     \end{aligned} \label{subeq:linear-convex-incentives-iterative-method-const-step-flow}
% \end{align}
%%
% \begin{alignat}{4}
%     w^{(k)} & = w^*(\lambda^{(k)}) = \textstyle{\argmin}_w \ (g(w) + \langle \lambda^{(k)}, \phi(w)\rangle), \span\span \label{subeq:linear-convex-incentives-iterative-method-const-step-gradient} \\
%     % \lambda^{k+1} & = \argmin_{\lambda \in \mathset{K}^*} \ \bigl( \ && 1/(2\constsc) \lVert D\phi(w^k)^\top (\lambda - \lambda^k) \rVert^2 \label{subeq:linear-convex-incentives-iterative-method-const-step-flow} \\
%     % &&& + \epsilon_k \lVert \lambda - \lambda^k \rVert^2 - \langle \lambda, \phi(w^k) - \phi(w^*) \rangle \ \bigr), \nonumber
%     \lambda^{(k+1)} & = \textstyle{\argmin}_{\lambda \in \mathset{K}^*} \ \bigl( -\tilde{g}^*(\lambda; \lambda^{(k)}) + \epsilon_k \lVert \lambda - \lambda^{(k)} \rVert^2 \bigr), \label{subeq:linear-convex-incentives-iterative-method-const-step-flow}
% \end{alignat}
%%
\begin{flalign}
    & w^{(k)} = w^*(\lambda^{(k)}) = \textstyle{\argmin}_w \, (g(w) + \langle \lambda^{(k)}, \phi(w)\rangle), \span\span \hspace{-5pt} \label{subeq:linear-convex-incentives-iterative-method-const-step-gradient} \\
    & \lambda^{(k+1)} = \textstyle{\argmin}_{\lambda \in \mathset{K}^*} \, \bigl(\epsilon^{(k)} \lVert \lambda - \lambda^{(k)} \rVert^2 -\tilde{g}^*(\lambda; \lambda^{(k)}) \bigr), \hspace{-15pt} && \label{subeq:linear-convex-incentives-iterative-method-const-step-flow}
\end{flalign}
\end{subequations}
where $\tilde{g}^*(\lambda; \lambda^{(k)}) := \bar{g}^*(\lambda; \lambda^{(k)}) - \langle \lambda, \phi(w^*)\rangle$, $\bar{g}^*(\cdot \, ; \lambda^{(k)})$ is the quadratic function defined in \eqref{eq:linear-convex-lipschitz-smooth}, and $\epsilon^{(k)} > 0$ is a bounded regularization weight.
If $\{\lambda^{(k)} \}$ is a bounded sequence, then $\lim_{k \rightarrow \infty} w^{(k)} = w^*$.

At each iteration, there is a guaranteed dual cost decrease:
\begin{align}
\label{eq:linear-convex-incentives-iterative-method-dual-cost-decrease}
     \tilde{g}^*(\lambda^{(k)}) - \tilde{g}^*(\lambda^{(k+1)}) \leq & -\tilde{g}^*(\lambda^{(k+1)}; \lambda^{(k)}) + \tilde{g}^*(\lambda^{(k)}; \lambda^{(k)}) \nonumber \\
    & + \epsilon^{(k)} \lVert \lambda^{(k+1)} - \lambda^{(k)}\rVert^2,
\end{align}
where $\tilde{g}^*(\lambda) := \bar{g}^*(\lambda) - \langle \lambda, \phi(w^*)\rangle$.

For any optimal incentive $\lambda^*$, and $\bar{\lambda} \in \mathset{K}^*$ such that $D\phi(w^*)^\top (\bar{\lambda} - \lambda^*) = 0$, $\bar{\lambda}$ is also an optimal incentive.
\end{theorem}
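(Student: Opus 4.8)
The plan is to read the update \eqref{subeq:linear-convex-incentives-iterative-method-const-step-flow} as a proximally regularized majorization–minimization (MM) scheme for minimizing the convex dual cost $-\tilde{g}^*$ over $\mathset{K}^*$, and to exploit the two structural facts delivered by Lem.~\ref{lem:linear-convex-lipschitz-smooth-dual}: the quadratic surrogate $-\tilde{g}^*(\cdot\,;\bar\lambda)$ \emph{majorizes} $-\tilde{g}^*$ (this is \eqref{eq:linear-convex-dual-transformed-lower-bound}) and is \emph{tangent} to it at the anchor, i.e. $-\tilde{g}^*(\bar\lambda;\bar\lambda)=-\tilde{g}^*(\bar\lambda)$ with $\nabla_\lambda(-\tilde{g}^*)(\bar\lambda;\bar\lambda)=\nabla(-\tilde{g}^*)(\bar\lambda)$ (both obtained by evaluating \eqref{eq:linear-convex-lipschitz-smooth} and Lem.~\ref{lem:linear-convex-incentive-conjugate-grad} at $\lambda=\bar\lambda$). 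With this in hand I would first dispatch the dual-cost-decrease bound \eqref{eq:linear-convex-incentives-iterative-method-dual-cost-decrease}, which is immediate: splitting
\[ \tilde{g}^*(\lambda^{(k)})-\tilde{g}^*(\lambda^{(k+1)}) = \bigl[\tilde{g}^*(\lambda^{(k+1)};\lambda^{(k)})-\tilde{g}^*(\lambda^{(k+1)})\bigr] + \bigl[\tilde{g}^*(\lambda^{(k)};\lambda^{(k)})-\tilde{g}^*(\lambda^{(k+1)};\lambda^{(k)})\bigr] + \bigl[\tilde{g}^*(\lambda^{(k)})-\tilde{g}^*(\lambda^{(k)};\lambda^{(k)})\bigr], \]
the first bracket is $\le 0$ by majorization and the third vanishes by tangency, so the left side is bounded by the middle term; adding the nonnegative $\epsilon^{(k)}\lVert\lambda^{(k+1)}-\lambda^{(k)}\rVert^2$ to the right only weakens the inequality and yields \eqref{eq:linear-convex-incentives-iterative-method-dual-cost-decrease}.

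Next I would establish the genuine monotone progress needed for convergence. Using the minimality of $\lambda^{(k+1)}$ for the regularized surrogate (with $\lambda^{(k)}\in\mathset{K}^*$ feasible), evaluating the objective at $\lambda^{(k)}$, and combining with tangency and majorization, I obtain the ascent $\tilde{g}^*(\lambda^{(k+1)})\ge \tilde{g}^*(\lambda^{(k)})+\epsilon^{(k)}\lVert\lambda^{(k+1)}-\lambda^{(k)}\rVert^2$, so $-\tilde{g}^*$ is nonincreasing. Boundedness below is free: plugging $w=w^*$ into the minimum \eqref{eq:linear-convex-dual} gives $\bar{g}^*(\lambda)\le g(w^*)+\langle\lambda,\phi(w^*)\rangle$, hence $-\tilde{g}^*(\lambda)=-\bar{g}^*(\lambda)+\langle\lambda,\phi(w^*)\rangle\ge -g(w^*)$ for all $\lambda$. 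Thus $\{-\tilde{g}^*(\lambda^{(k)})\}$ converges. Moreover, Assumption~\ref{assum:linear-convex-incentive-controllability}(2) supplies $\lambda^*\in\mathset{K}^*$ with $w^*(\lambda^*)=w^*$, so by Lem.~\ref{lem:linear-convex-incentive-conjugate-grad} $\nabla(-\tilde{g}^*)(\lambda^*)=\phi(w^*)-\phi(w^*(\lambda^*))=0$; convexity of $-\tilde{g}^*$ then makes $\lambda^*$ a global minimizer over $\mathset{K}^*$ with value exactly $-g(w^*)$, so $\min_{\mathset{K}^*}(-\tilde{g}^*)=-g(w^*)$.

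The crux is to show the bounded iterates accumulate only at minimizers. Picking a limit point $\bar\lambda\in\mathset{K}^*$ (closed) along $\lambda^{(k_j)}\to\bar\lambda$, and passing to a further subsequence so that $\epsilon^{(k_j)}\to\bar\epsilon\in[0,\infty)$ (here only the upper boundedness of $\epsilon^{(k)}$ is used), I would start from the surrogate minimality combined with majorization, which gives for every $\lambda\in\mathset{K}^*$ that $-\tilde{g}^*(\lambda^{(k+1)})\le \epsilon^{(k)}\lVert\lambda-\lambda^{(k)}\rVert^2-\tilde{g}^*(\lambda;\lambda^{(k)})$. Continuity of $w^*(\cdot)$, $\phi$, $D\phi$, and $\bar{g}^*$ (Lem.~\ref{lem:linear-convex-incentive-conjugate-grad}, Def.~\ref{def:linear-convex-incentive}) lets every quantity anchored at $\lambda^{(k_j)}$ converge to its value at $\bar\lambda$, and since $-\tilde{g}^*(\lambda^{(k_j)})$ and $-\tilde{g}^*(\lambda^{(k_j+1)})$ both converge to $-\tilde{g}^*(\bar\lambda)$, the limit reads $-\tilde{g}^*(\bar\lambda)\le \bar\epsilon\lVert\lambda-\bar\lambda\rVert^2-\tilde{g}^*(\lambda;\bar\lambda)$ for all $\lambda\in\mathset{K}^*$. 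Equality holds at $\lambda=\bar\lambda$ by tangency, so $\bar\lambda$ minimizes the right side; its first-order condition, using the gradient match at the anchor, reduces to $\langle\nabla(-\tilde{g}^*)(\bar\lambda),\lambda-\bar\lambda\rangle\ge0$ for all $\lambda\in\mathset{K}^*$, i.e. $\bar\lambda$ is stationary, hence a global minimizer of the convex problem. Therefore $-\tilde{g}^*(\bar\lambda)=-g(w^*)$; unfolding $\bar{g}^*(\bar\lambda)=g(w^*(\bar\lambda))+\langle\bar\lambda,\phi(w^*(\bar\lambda))\rangle$ and rearranging shows $w^*$ attains the minimum in $\min_w\,(g(w)+\langle\bar\lambda,\phi(w)\rangle)$, and uniqueness of the strongly convex minimizer forces $w^*(\bar\lambda)=w^*$. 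Finally, since $w^{(k)}=w^*(\lambda^{(k)})$ with $w^*(\cdot)$ continuous and every subsequential limit of the bounded $\{\lambda^{(k)}\}$ maps under $w^*(\cdot)$ to $w^*$, a contradiction argument yields $w^{(k)}\to w^*$. The main obstacle here is precisely this step: the surrogate curvature $1/(2\constsc)\lVert D\phi(\bar w)^\top(\cdot)\rVert^2$ is only positive semidefinite, so strong convexity of the surrogate is unavailable; I must instead lean on the proximal regularization to keep the subproblems well posed and extract stationarity in the limit, and I crucially need Assumption~\ref{assum:linear-convex-incentive-controllability}(2) to pin the optimal value to $-g(w^*)$ so that ``global minimizer'' translates back to ``returns $w^*$''.

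The last assertion is direct. If $\lambda^*$ is an optimal incentive then $w^*=\argmin_w\,(g(w)+\langle\lambda^*,\phi(w)\rangle)$, equivalently $-D\phi(w^*)^\top\lambda^*\in\partial g(w^*)$ by the optimality condition \eqref{eq:linear-convex-incentive-controllability-condition}. For $\bar\lambda\in\mathset{K}^*$ with $D\phi(w^*)^\top(\bar\lambda-\lambda^*)=0$ we have $D\phi(w^*)^\top\bar\lambda=D\phi(w^*)^\top\lambda^*$, so $-D\phi(w^*)^\top\bar\lambda\in\partial g(w^*)$ as well; since $\bar\lambda\in\mathset{K}^*$ keeps $g(w)+\langle\bar\lambda,\phi(w)\rangle$ convex, this condition certifies $w^*(\bar\lambda)=w^*$, i.e. $\bar\lambda$ is also an optimal incentive.
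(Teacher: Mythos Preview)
Your proposal is correct and follows the same overall majorization--minimization architecture as the paper, but differs in two places in ways worth noting. First, for the step ``limit points of $\{\lambda^{(k)}\}$ are stationary for $-\tilde{g}^*$ on $\mathset{K}^*$,'' the paper observes that the regularized surrogates $-\hat{g}^*(\cdot;\lambda^{(k)})$ are Lipschitz smooth with a constant independent of $k$ (by Lem.~\ref{lem:linear-convex-lipschitz-smooth-dual} and boundedness of $\{\lambda^{(k)}\}$, $\{\epsilon^{(k)}\}$) and then invokes a general descent-method convergence result \cite[Prop.~2.3.2]{bertsekas1997nonlinear}; you instead pass to subsequential limits directly in the surrogate inequality, using continuity of $w^*(\cdot)$, $\phi$, $D\phi$ and the monotone convergence of $\{-\tilde{g}^*(\lambda^{(k)})\}$, which is more self-contained and avoids the slight awkwardness of fitting \eqref{subeq:linear-convex-incentives-iterative-method-const-step-flow} into a textbook gradient-descent template. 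Second, to go from ``$\bar\lambda$ minimizes $-\tilde{g}^*$ over $\mathset{K}^*$'' to ``$w^*(\bar\lambda)=w^*$,'' the paper reads the cone optimality condition as $\nabla(-\tilde{g}^*)(\bar\lambda)=\phi(w^*)-\phi(w^*(\bar\lambda))\in\mathset{K}$ and then invokes Assumption~\ref{assum:linear-convex-incentive-controllability}(1); your route pins the optimal value to $-g(w^*)$ via Assumption~\ref{assum:linear-convex-incentive-controllability}(2) and concludes by uniqueness of the strongly convex minimizer, so you never use part~(1) of Assumption~\ref{assum:linear-convex-incentive-controllability}. Likewise, for the final assertion the paper uses the quadratic sandwich $-\tilde{g}^*(\lambda^*)\le -\tilde{g}^*(\bar\lambda)\le -\tilde{g}^*(\lambda^*)+\frac{1}{2\constsc}\lVert D\phi(w^*)^\top(\bar\lambda-\lambda^*)\rVert^2$ from Lem.~\ref{lem:linear-convex-lipschitz-smooth-dual}, whereas you verify the first-order condition $-D\phi(w^*)^\top\bar\lambda\in\partial g(w^*)$ directly; both are valid.
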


\begin{proof}
The proof is provided in Appendix~\ref{subapp:proof-linear-convex-incentives-iterative-method}.
\end{proof}

Similar to the iterative method for linear incentives, Thm.~\ref{thm:iterative-method}, the sequence $\{\lambda^{(k)}\}$ obtained from \eqref{eq:linear-convex-incentives-iterative-method-const-step} need not converge.
However, given an optimal incentive $\lambda^*$ (obtained as a limit point of $\{\lambda^{(k)}\}$), Thm.~\ref{thm:linear-convex-incentives-iterative-method} provides us a method to find other optimal incentives.
If $\bar{\lambda} \in \mathset{K}^*$ is such that $D\phi(w^*)^\top (\bar{\lambda} - \lambda^*) = 0$, then $\bar{\lambda}$ is also an optimal incentive.
As noted in the following remark, we can use this property to find an optimal incentive that minimizes some cost function.

\begin{remark} (Optimal incentive regularization)
\label{rem:optimal-incentive-regularization}
Let $\lambda^*$ be any optimal incentive obtained using ~\eqref{eq:linear-convex-incentives-iterative-method-const-step} and $c \in \real^{\Nlambda}$.
Then any solution $\bar{\lambda}$ to the conic program
\begin{equation}
\label{eq:optimal-incentive-regularization}
\begin{split}
    \min_{\lambda \in \mathset{K}^*} \ & \langle \lambda, c \rangle, \\
    \text{s.t} \quad & D\phi(w^*(\lambda^*))^\top (\lambda - \lambda^*) = 0,
\end{split}
\end{equation}
is also an optimal incentive.
Eq.~\eqref{eq:optimal-incentive-regularization} can be used to regularize along the set of optimal incentives.
For instance, in the EV charging example considered in Sec.~\ref{subsec:motivating-example}, the incentive $\lambda$ denotes the unit price of electricity, and we choose $c = \phi(w^*(\lambda^*))$ to minimize the total price paid by the EVs.
\end{remark}

\begin{remark}
Following Rem.~\ref{rem:linear-convex-incentive-controllability}, if we choose $\phi(w) = w$ and $\mathset{K} = \{\mathbb{0}_{\Nw\horizon}\}$, the linear-convex incentive $\langle \lambda, \phi(w)\rangle$ is the same as the linear incentive $\langle \lambda, w\rangle$.
In this case, Lem.~\ref{lem:linear-convex-incentive-conjugate-grad}, Lem.~\ref{lem:linear-convex-lipschitz-smooth-dual}, and Thm.~\ref{thm:linear-convex-incentives-iterative-method} are equivalent to \eqref{eq:conv-analysis-conjugate-subgrad}, Prop.~\ref{prop:duality}, and Thm.~\ref{thm:iterative-method} respectively (see Tab.~\ref{tab:properties-of-the-dual-functions-and-main-results} for a comparison).
\end{remark}

\hidetext{
TODO: subsection on equivalence between dual ascent and our algorithm, from the perspective of incentive compatibility
}

To summarize the results in this section, we first define a linear-convex incentive structure in Def.~\ref{def:linear-convex-incentive} and a corresponding dual function in \eqref{eq:linear-convex-dual}.
We prove the convexity properties of the dual function in Lems.~\ref{lem:linear-convex-incentive-conjugate-grad} and \ref{lem:linear-convex-lipschitz-smooth-dual}.
Next, we show that the hierarchical MPC problem, with the incentive LoMPC \eqref{eq:linear-convex-incentivized-lompc}, is incentive controllable when Assumption~\ref{assum:linear-convex-incentive-controllability} holds.
Finally, Thm.~\ref{thm:linear-convex-incentives-iterative-method} shows that for any team-optimal solution $\nu^* = (u^*, w^*)$, an optimal incentive $\lambda^*$ can be found iteratively, without the knowledge of the LoMPC cost function $g$.

In the following section, we consider the hierarchical MPC problem for the case of multiple LoMPCs (i.e., when $M > 1$), as defined in Sec.~\ref{subsec:hierarchical-formulation}.
We use the tools developed in Secs.~\ref{sec:linear-incentives-for-hierarchical-mpc} and \ref{sec:linear-convex-incentives} to propose an algorithm that solves a particular case of the incentive hierarchical MPC problem.

% \todo{add result about suboptimality}

\section{Multiple Lower-level MPCs}
\label{sec:multiple-lompcs}

\subsection{Robust BiMPC Formulation}
\label{subsec:robust-bimpc-formulation}

In this section, we consider the hierarchical MPC problem \eqref{eq:reduced-ith-lompc} and \eqref{eq:reduced-bimpc} for the case of multiple LoMPCs.
Recall that the LoMPC problems are indexed by the set $\mathset{M}$ (with $M = |\mathset{M}|$).
First, we define the incentive BiMPC problem as an extension to \eqref{eq:reduced-bimpc} as follows (see Tab.~\ref{tab:hierarchical-mpc-variable-definitions} for variable definitions):
{
% \small
\begin{subequations}
\label{eq:multiple-lompcs-incentivized-bimpc}
\begin{alignat}{6}
    (\nu^*, \lambda^*)(\xi_0) \in \ && \argmin_{\substack{\nu=(u, \bm{w}), \\ \lambda \in \mathset{K}^*}} \ & f(u, w; \xi_0), \span\span\span\span\span \label{subeq:multiple-lompcs-incentivized-bimpc-cost} \\
    && \ \text{s.t} \quad &
    w^i &&= w^{i*}(\xi_0, u, \lambda), \ i \in \mathset{M}, \label{subeq:multiple-lompcs-incentivized-bimpc-w-optimality} \\
    &&& w &&= \frac{1}{M} \sum_{i \in \mathset{M}} w^i,
\end{alignat}
\end{subequations}
}%
where $w^{i*}(\xi_0, u, \lambda) := \argmin_{w^i} g^i(w^i; \xi_0, u, \lambda)$.
From \eqref{subeq:multiple-lompcs-incentivized-bimpc-w-optimality}, we note that all LoMPCs are provided the same incentive $\lambda \in \mathset{K}^*$.
We assume an affine parametric form for the incentive LoMPC cost functions $g^i$:
\begin{equation}
\label{eq:parametric-form-gi}
\begin{split}
    g^i(w^i; \xi_0, u, \lambda) := & \ g(w^i; \xi_0, u) + \langle \lambda, \phi(w^i)\rangle \\
    & + \langle \theta^i(\xi_0, u), w^i\rangle,
\end{split}
\end{equation}
where $\theta^i(\xi_0, u)$ is a parameter for the $i$-th LoMPC satisfying $\lVert \theta^i(\xi_0, u) \rVert \leq \bar{\theta}(\xi_0) \ \forall i \in \mathset{M}$, and $\langle \lambda, \phi(w)\rangle$ is a linear-convex incentive (see Def.~\ref{def:linear-convex-incentive}) with a $\mathset{K}$-convex function $\phi$ and $\lambda \in \mathset{K}^*$.
It can be shown that all LQR-like LoMPC problems have a parametric cost as given in \eqref{eq:parametric-form-gi} (see Rem.~\ref{rem:multiple-lompcs-lompc-lqr}).
In particular, the EV charging costs generally considered in literature can be written in the form \eqref{eq:parametric-form-gi} (see Sec.~\ref{subsec:ev-lompc-formulation}).
Note that for the single LoMPC case considered in Secs.~\ref{sec:linear-incentives-for-hierarchical-mpc} and \ref{sec:linear-convex-incentives}, we can choose $\theta^i \equiv 0$.

\begin{assumption}
\label{assum:multiple-lompcs-properties-of-gi}
For all $i \in \mathset{M}$, $g^i$ has the parametric form given by \eqref{eq:parametric-form-gi}, where the cost function $g$ satisfies Assumption~\ref{assum:properties-of-gi} and $\lVert \theta^i(\xi_0, u) \rVert \leq \bar{\theta}(\xi_0) \ \forall i \in \mathset{M}$.
\end{assumption}

\begin{remark}
\label{rem:multiple-lompcs-no-state-constraints}
From Sec.~\ref{subsec:hierarchical-formulation}, the cost function $g^i$ includes the state and input constraints for the $i$-th LoMPC.
If Assumption~\ref{assum:multiple-lompcs-properties-of-gi} holds, then, from \eqref{eq:parametric-form-gi}, $\dom{g^i}(\cdot \, ; \xi_0, u) = \dom{g^j}(\cdot \, ; \xi_0, u)$ for all $i, j \in \mathset{M}$ and $(\xi_0, u)$.
Therefore, to satisfy Assumption~\ref{assum:multiple-lompcs-properties-of-gi}, the LoMPCs in $\mathset{M}$ must not have any state constraints.
\end{remark}

Given Rem.~\ref{rem:multiple-lompcs-no-state-constraints}, the following remark identifies a class of LoMPCs for which Assumption~\ref{assum:multiple-lompcs-properties-of-gi} holds.

\begin{remark}
\label{rem:multiple-lompcs-lompc-lqr}
One class of LoMPC problems that always satisfy Assumption~\ref{assum:multiple-lompcs-properties-of-gi} is input-constrained LQRs with quadratic state cost and strongly convex input cost.
Any such LQR problem can be expressed in batch form as \cite[Sec.~8.2]{borrelli2017predictive}
\begin{equation}
\label{eq:input-constrained-lqr}
    \textstyle{\min}_{w^i \in \mathset{W}^\horizon} \bigl( (w^i)^\top Q w^i + \textstyle{\sum}_k \; g_w(w^i_k) + (y^i_0)^\top R w^i \bigr),
\end{equation}
where $Q \succeq 0$, $\mathset{W} \subset \real^\Nw$ is a closed convex set \changes{denoting the input constraints}, and $g_w: \real^\Nw \rightarrow \real$ is a strongly convex function.
In this case, $\theta^i(\xi_0, u) = R^\top y^i_0$.
The EV LoMPC considered in Sec.~\ref{sec:numerical-examples} is one such example.
\end{remark}

For the rest of the section, we suppress the dependence of $\theta$ and $\bar{\theta}$ on $(\xi_0, u)$.
Since all LoMPCs are provided the same incentive $\lambda \in \real^\Nlambda$ (from \eqref{subeq:multiple-lompcs-incentivized-bimpc-w-optimality}), incentive controllability is not guaranteed for the BiMPC problem \eqref{eq:multiple-lompcs-incentivized-bimpc} with multiple LoMPCs.
However, we can show that under Assumption~\ref{assum:multiple-lompcs-properties-of-gi}, incentive controllability holds with bounded error.

\begin{lemma} (Bounded incentive controllability)
\label{lem:bounded-incentive-controllability}
Let Assumptions~\ref{assum:linear-convex-incentive-controllability} and \ref{assum:multiple-lompcs-properties-of-gi} hold.
Given any $\xi_0$, let $(u^*, \hat{w}^*) \in \dom f$ with $\hat{w}^* \in \dom g$,
% Let $(u^*, \hat{w}^*)$ be an optimal solution of the following optimization,
% \begin{subequations}
% \label{eq:bounded-incentive-controllability-equivalence}
% \begin{alignat}{6}
%     (u^*, \hat{w}^*)(\xi_0) \in \ && \argmin_{\nu=(u, \hat{w})} \ & f(\nu; \xi_0), \label{subeq:bounded-incentive-controllability-equivalence-cost} \\
%     && \ \text{s.t} \quad &
%     \hat{w} \in \dom g(\cdot \; ; \xi_0, u), \label{subeq:bounded-incentive-controllability-equivalence-w-feasibility}
% \end{alignat}
% \end{subequations}
and $\lambda^* \in \mathset{K}^*$ be such that $-D\phi(\hat{w}^*)^\top\lambda^* \in \partial_w g(\hat{w}^*; \xi_0, u^*)$ (by Assumption~\ref{assum:linear-convex-incentive-controllability}).
% $\lambda^*$ can be computed using the iterative method, Thm~\ref{thm:linear-convex-incentives-iterative-method}.
Let, for $i \in \mathset{M}$, $w^{i*}$ be the optimal solution of the $i$-th LoMPC with the incentive $\lambda^*$:
\begin{equation}
\label{eq:bounded-incentive-controllability-wi}
    \hspace{-5pt} w^{i*} = \textstyle{\argmin}_{w^i} ( g(w^i; \xi_0, u^*) + \langle \lambda^*, \phi(w^i)\rangle + \langle \theta^i, w^i\rangle ). \hspace{-4pt}
\end{equation}
Then, $w^* = (1/M) \sum_{i \in \mathset{M}} w^{i*} \in \dom{g}$ and
\begin{equation}
\label{eq:bounded-incentive-controllability-error-bound}
    \bigl\lVert \hat{w}^* - w^* \bigr\rVert \leq \bar{\theta}/\constsc.
\end{equation}
\end{lemma}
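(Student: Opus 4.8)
The plan is to reduce the statement to a standard quantitative stability estimate for minimizers of strongly convex functions under a linear perturbation of the cost. Fix $\xi_0$ and set $g_0(w) := g(w; \xi_0, u^*) + \langle \lambda^*, \phi(w)\rangle$. Since $g(\cdot\,;\xi_0,u^*)$ is closed, proper, and $\constsc$-strongly convex (Assumption~\ref{assum:properties-of-gi}) while $\langle\lambda^*,\phi(\cdot)\rangle$ is a finite-valued convex function (as $\lambda^*\in\mathset{K}^*$ and $\phi$ is $\mathset{K}$-convex, Sec.~\ref{subsubsec:convex-vector-functions}), $g_0$ is again closed, proper, and $\constsc$-strongly convex, and hence has a unique minimizer. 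I would first verify that this minimizer is exactly $\hat{w}^*$: because $\langle\lambda^*,\phi(\cdot)\rangle$ is differentiable everywhere with derivative $D\phi(\cdot)^\top\lambda^*$, the subdifferential sum rule gives $\partial g_0(\hat{w}^*) = \partial_w g(\hat{w}^*;\xi_0,u^*) + D\phi(\hat{w}^*)^\top\lambda^*$, so the hypothesis $-D\phi(\hat{w}^*)^\top\lambda^*\in\partial_w g(\hat{w}^*;\xi_0,u^*)$ yields $0\in\partial g_0(\hat{w}^*)$.

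Next, by the parametric form \eqref{eq:parametric-form-gi} and \eqref{eq:bounded-incentive-controllability-wi}, each $w^{i*}$ is the unique minimizer of the closed proper $\constsc$-strongly convex function $g_0 + \langle\theta^i,\cdot\rangle$, so $0\in\partial g_0(w^{i*})+\theta^i$, i.e. $-\theta^i\in\partial g_0(w^{i*})$. I would then add the two strong-convexity inequalities for $g_0$ anchored at $\hat{w}^*$ (with subgradient $0$) and at $w^{i*}$ (with subgradient $-\theta^i$), which gives the monotonicity estimate
\begin{equation*}
    \langle \theta^i,\, \hat{w}^* - w^{i*}\rangle \;\ge\; \constsc\,\lVert \hat{w}^* - w^{i*}\rVert^2 ;
\end{equation*}
Cauchy--Schwarz and $\lVert\theta^i\rVert\le\bar{\theta}$ then give $\lVert \hat{w}^* - w^{i*}\rVert \le \bar{\theta}/\constsc$ for every $i\in\mathset{M}$. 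Averaging and applying the triangle inequality,
\begin{equation*}
    \lVert \hat{w}^* - w^*\rVert = \Bigl\lVert \tfrac{1}{M}\textstyle\sum_{i\in\mathset{M}}(\hat{w}^* - w^{i*})\Bigr\rVert \le \tfrac{1}{M}\textstyle\sum_{i\in\mathset{M}} \lVert \hat{w}^* - w^{i*}\rVert \le \bar{\theta}/\constsc,
\end{equation*}
which is \eqref{eq:bounded-incentive-controllability-error-bound}.

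For the membership $w^*\in\dom g$: by Rem.~\ref{rem:multiple-lompcs-no-state-constraints}, $\dom g^i(\cdot\,;\xi_0,u^*) = \dom g(\cdot\,;\xi_0,u^*)$ for all $i\in\mathset{M}$, so each $w^{i*}\in\dom g$; since $g$ is strongly convex, $\dom g$ is convex, and therefore the average $w^* = \tfrac{1}{M}\sum_{i\in\mathset{M}} w^{i*}$ also lies in $\dom g$.

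The quantitative estimate itself is the routine strong-convexity argument; the main point needing care is the first step — cleanly justifying, via the subdifferential sum rule, that $\hat{w}^*$ is genuinely the minimizer of $g_0$ and that $w^{i*}$ satisfies $-\theta^i\in\partial g_0(w^{i*})$. This is legitimate here precisely because the perturbation $\langle\lambda^*,\phi(\cdot)\rangle$ is finite-valued and everywhere differentiable, and the same finiteness of $\phi$ and of $\langle\theta^i,\cdot\rangle$ is what underpins the domain identity of Rem.~\ref{rem:multiple-lompcs-no-state-constraints} used in the last step.
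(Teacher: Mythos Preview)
Your proof is correct. The paper takes a slightly different route to the per-follower bound $\lVert \hat{w}^* - w^{i*}\rVert \le \bar{\theta}/\constsc$: it defines the dual function $\bar{g}^*(\theta) = \min_w\bigl(g_0(w) + \langle\theta,w\rangle\bigr)$ (with $g_0$ exactly your $g_0$), notes that $\bar{g}^*(\theta) = -g_0^*(-\theta)$, and then invokes Prop.~\ref{prop:duality} to conclude that $\nabla\bar{g}^*$ is $(1/\constsc)$-Lipschitz; since $\hat{w}^* = \nabla\bar{g}^*(0)$ and $w^{i*} = \nabla\bar{g}^*(\theta^i)$ by \eqref{eq:conv-analysis-conjugate-subgrad}, the bound follows immediately. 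Your argument is the primal counterpart of the same fact --- you use the strong monotonicity of $\partial g_0$ directly rather than passing through the conjugate --- and is somewhat more elementary, since it does not appeal to Prop.~\ref{prop:duality}. The paper's detour, however, is deliberate: it keeps the proof in line with the dual-function viewpoint used throughout Secs.~\ref{sec:linear-incentives-for-hierarchical-mpc}--\ref{sec:multiple-lompcs} (cf.\ Tab.~\ref{tab:properties-of-the-dual-functions-and-main-results}). The averaging step and the $w^*\in\dom g$ argument are identical in both proofs.
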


\begin{proof}
Corresponding to the incentive LoMPC cost function $g^i$ in \eqref{eq:parametric-form-gi}, we define the dual function, $\bar{g}^*$, as follows:
\begin{equation}
\label{eq:bounded-incentive-controllability-conjugate-g}
    \bar{g}^*(\theta) := \textstyle{\min}_w \ ( g(w; \xi_0, u^*) + \langle \lambda^*, \phi(w) \rangle + \langle \theta, w\rangle ).
\end{equation}
Then, $\bar{g}^*(\theta) = -g^*(-\theta)$, where $g^*$ is the conjugate function (see Sec.~\ref{subsubsec:conjugate-of-a-convex-function}) of the $\constsc$-strongly convex function $g(\cdot \,; \xi_0, u^*) + \langle \lambda^*, \phi(\cdot) \rangle$.
So, by Prop.~\ref{prop:duality}, we have that $\bar{g}^*$ is a $(1/\constsc)$-smooth concave function.
By \eqref{eq:conv-analysis-conjugate-subgrad},
\begin{equation}
\label{eq:bounded-incentive-controllability-grad-g}
    \hspace{-4pt} \nabla \bar{g}^*(\theta) = \textstyle{\argmin}_w \, ( g(w; \xi_0, u^*) + \langle \lambda^*, \phi(w) \rangle + \langle \theta, w\rangle ).
\end{equation}
Since $-D\phi(\hat{w}^*)^\top\lambda^* \in \partial_w g(\hat{w}^*; \xi_0, u^*)$, we have from the optimality condition for $g(w; \xi_0, u^*) + \langle \lambda^*, \phi(w)\rangle$ that~\cite[Thm.~23.8]{rockafellar1997convex}
\begin{equation}
\label{eq:bounded-incentive-controllability-w-hat}
    \hat{w}^* = \textstyle{\argmin}_w \ ( g(w; \xi_0, u^*) + \langle \lambda^*, \phi(w) \rangle ).
\end{equation}
Using \eqref{eq:bounded-incentive-controllability-grad-g}, \eqref{eq:bounded-incentive-controllability-wi}, \eqref{eq:bounded-incentive-controllability-w-hat}, and the fact that $\bar{g}^*$ is $(1/\constsc)$-smooth, we get that
\begin{align*}
    \lVert \hat{w}^* - w^{i*}\rVert = \lVert \nabla \bar{g}^*(0) - \nabla \bar{g}^*(\theta^i) \rVert \leq \lVert \theta^i \rVert/\constsc \leq \bar{\theta}/\constsc.
\end{align*}
Finally, \eqref{eq:bounded-incentive-controllability-error-bound} follows from the triangle inequality.
Since $w^{i*} \in \dom{g} \ \forall i$ and $\dom{g}$ is convex, $w^* \in \dom{g}$.
\end{proof}

\hidetext{
\begin{remark}
TODO: In the absence of additional information about the LoMPC cost function $g$, the bound provided by \eqref{eq:bounded-incentive-controllability-error-bound} is tight.
To see this, we note that the error bound $\bar{\theta}/\constsc$ arises from the $(1/\constsc)$-Lipschitz smoothness of the dual function $\bar{g}^*$ in \eqref{eq:bounded-incentive-controllability-conjugate-g}.
We consider a $1$-dimensional example as follows:
Let $\theta^1 = -1$ and $\theta^i = 1$ for $2 \leq i \leq M$, with $\bar{\theta} = 1$ and $\lambda^* = 0$.
Let $g(w) = w^2 + \indicator{\{w \geq 0\}}$, where $\indicator{(\cdot)}$ is the indicator function (see Sec.~\ref{subsec:notations}), with $\constsc = 2$.
Then from \eqref{eq:bounded-incentive-controllability-w-hat}, $\hat{w}^* = 0$, and from \eqref{eq:bounded-incentive-controllability-wi}, $w^{1*} = 1/2$, $w$.
(See notebook for example).
\end{remark}
}

\hidetext{TODO: Remark about more complex parametric forms.}

% Replace w with \hat{w}^* if needed.
For any $w \in \dom{g}$, Lem.~\ref{lem:bounded-incentive-controllability} proves the existence of an incentive $\lambda^*$ such that the average LoMPC solution $w^*$ with incentive $\lambda^*$ has bounded error compared to $w$.
Moreover, from \eqref{eq:bounded-incentive-controllability-w-hat}, $\lambda^*$ is an optimal incentive corresponding to $w$, when $\theta = 0$.
Therefore, using Thm.~\ref{thm:linear-convex-incentives-iterative-method}, we can iteratively compute $\lambda^*$ given $(u^*, w)$.
We also note that the error bound given by \eqref{eq:bounded-incentive-controllability-error-bound} is independent of the number of LoMPCs $M$.

% The statement of Thm.~\ref{thm:bounded-incentive-controllability} can be interpreted in two steps.
% First, \eqref{eq:bounded-incentive-controllability-equivalence} defines the notion of a team-optimal solution $(u^*, \hat{w}^*)$ for the case of multiple LoMPCs, similar to Thm.~\ref{thm:bimpc-equivalence}.
% Next, Thm.~\ref{thm:bounded-incentive-controllability} proves the existence of an optimal incentive $\lambda^*$ that results in bounded error compared to $\hat{w}^*$.
% Moreover, from \eqref{eq:bounded-incentive-controllability-w-hat}, $\lambda^*$ is an optimal incentive corresponding to $\hat{w}^*$, when $\theta = 0$.
% Therefore, using Thm.~\ref{thm:linear-convex-incentives-iterative-method}, we can iteratively compute $\lambda^*$ given $(u^*, \hat{w}^*)$.

% To summarize Thm.~\ref{thm:bounded-incentive-controllability}, we first solve the optimization problem \eqref{eq:bounded-incentive-controllability-equivalence} to obtain $(u^*, \hat{w}^*)$.
% Then, using Thm.~\ref{thm:linear-convex-incentives-iterative-method} and the LoMPCs as black-box solvers, we get an optimal incentive $\lambda^*$.
% Thm.~\ref{thm:bounded-incentive-controllability} guarantees that the solution obtained from the LoMPCs with incentive $\lambda^*$ has bounded error compared to $\hat{w}^*$.
% Additionally, the error bound, given by \eqref{eq:bounded-incentive-controllability-error-bound}, is independent of the total number of LoMPCs $M$.

In the case of a single LoMPC, as considered in Sec.~\ref{sec:linear-incentives-for-hierarchical-mpc}, the incentive controllability property for linear incentives leads to the equivalent formulation \eqref{eq:equivalent-bimpc} of the incentive BiMPC \eqref{eq:incentivized-bimpc}.
Similarly, we can show that the bounded incentive controllability property, Lem.~\ref{lem:bounded-incentive-controllability}, leads to a robust formulation of the incentive BiMPC \eqref{eq:multiple-lompcs-incentivized-bimpc}.
By Lem.~\ref{lem:bounded-incentive-controllability}, for any $w \in \dom{g}$, we can find an incentive such that the average LoMPC solution $w^* \in \dom{g}$ has bounded error compared to $w$.
We must account for this error to ensure that $(u^*, w^*)$ is also feasible for the incentive BiMPC \eqref{eq:multiple-lompcs-incentivized-bimpc}.

% Given any $(u^*, \hat{w}^*)$ with $\hat{w}^* \in \dom{g}$, and $w^{i*}$ from \eqref{eq:bounded-incentive-controllability-wi}, we can obtain an error bound $\beta$ for the control input at time $t = 0$ from Thm.~\ref{thm:bounded-incentive-controllability} as,
% \begin{equation}
% \label{eq:w0-error-bound-beta}
% \begin{split}
%     \bigl| \hat{w}^*_0 - (1/M) \textstyle{\sum}_i \, w^{i*}_0\bigr| & = \bigl| (\hat{w}^* - (1/M) \textstyle{\sum}_i \, w^{i*})^\top e_1\bigr|, \\
%     & \leq \bigl\lVert \hat{w}^* - (1/M) \textstyle{\sum}_i \, w^{i*} \bigr\rVert \lVert e_1\rVert_*, \\
%     & \leq \frac{\bar{\theta}}{\constsc} \lVert e_1\rVert_* =: \beta,
% \end{split}
% \end{equation}
% where $e_1$ is the unit vector $[1, 0, ..., 0]^\top$, and $\lVert \cdot \rVert_*$ is the dual norm (see Sec.~\ref{subsubsec:conjugate-of-a-convex-function}).
Next, we describe a robust formulation of \eqref{eq:multiple-lompcs-incentivized-bimpc} using Lem.~\ref{lem:bounded-incentive-controllability}.
The following main result is a generalization of Thm.~\ref{thm:bimpc-equivalence} for the case of multiple LoMPCs (see Tab.~\ref{tab:properties-of-the-dual-functions-and-main-results}).

\begin{theorem} (Robust BiMPC formulation)
\label{thm:robust-bimpc-formulation}
Let Assumptions~\ref{assum:linear-convex-incentive-controllability} and \ref{assum:multiple-lompcs-properties-of-gi} hold.
For any $\xi_0$, let $(u^*, \hat{w}^*)$ be an optimal solution of the following optimization problem:
\begin{subequations}
\label{eq:robust-bimpc}
\begin{alignat}{6}
    (u^*, \hat{w}^*) \in \ && \argmin_{\nu=(u, \hat{w})} \ & f(\nu; \xi_0), \label{subeq:robust-bimpc-cost} \\
    && \ \text{s.t} \quad &
    \hat{w} \in \dom g(\cdot \; ; \xi_0, u), \label{subeq:robust-bimpc-w-g-feasibility} \\
    &&& \{(u, \hat{w})\} + \mathset{E} \subset \dom f(\cdot \; ; \xi_0), \label{subeq:robust-bimpc-w-f-feasibility}
\end{alignat}
\end{subequations}
where the addition between sets in \eqref{subeq:robust-bimpc-w-f-feasibility} is the Minkowski sum (see Sec.~\ref{subsec:notations}), and
\begin{equation}
\label{eq:robust-bimpc-error-set}
    \mathset{E} := \{(\mathbb{0}, w) \in \real^{\Nu\horizon} \times \real^{\Nw\horizon}: \lVert w \rVert \leq \bar{\theta}/\constsc\}.
\end{equation}
Then, for $\lambda^* \in \! \mathset{K}^*$ such that $-D\phi(\hat{w}^*)^\top\lambda^* \in \partial_w g(\hat{w}^*; \xi_0, u^*)$,
$(u^*, \bm{w}^*, \lambda^*)$ is feasible for the incentive BiMPC \eqref{eq:multiple-lompcs-incentivized-bimpc} and
\begin{equation*}
    \bigl\lVert \hat{w}^* - w^* \bigr\rVert \leq \bar{\theta}/\constsc,
\end{equation*}
where $\bm{w}^* = (w^{1*}, ..., w^{M*})$, $w^* = (1/M) \sum_{i \in \mathset{M}} w^{i*}$, and
\begin{equation*}
    w^{i*} = \textstyle{\argmin}_{w^i} ( g(w^i; \xi_0, u^*) + \langle \lambda^*, \phi(w^i) \rangle + \langle \theta^i, w^i \rangle ).
\end{equation*}
\end{theorem}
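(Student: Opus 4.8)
The plan is to obtain the statement as a near-immediate consequence of Lem.~\ref{lem:bounded-incentive-controllability}, with the robustification \eqref{subeq:robust-bimpc-w-f-feasibility} used only to transfer feasibility from $\hat{w}^*$ to the perturbed average $w^*$. First I would unpack what it means for $(u^*,\hat{w}^*)$ to be optimal --- hence feasible --- for \eqref{eq:robust-bimpc}: constraint \eqref{subeq:robust-bimpc-w-g-feasibility} gives $\hat{w}^* \in \dom g(\cdot\,;\xi_0,u^*)$, and since $(\mathbb{0},\mathbb{0}) \in \mathset{E}$, constraint \eqref{subeq:robust-bimpc-w-f-feasibility} gives $(u^*,\hat{w}^*) \in \dom f(\cdot\,;\xi_0)$. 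Together with condition (2) of Assumption~\ref{assum:linear-convex-incentive-controllability}, which supplies the asserted $\lambda^* \in \mathset{K}^*$ with $-D\phi(\hat{w}^*)^\top\lambda^* \in \partial_w g(\hat{w}^*;\xi_0,u^*)$, these are precisely the hypotheses of Lem.~\ref{lem:bounded-incentive-controllability}.

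Applying Lem.~\ref{lem:bounded-incentive-controllability} with this $(u^*,\hat{w}^*)$ and $\lambda^*$ then yields, under Assumption~\ref{assum:multiple-lompcs-properties-of-gi}, that each $w^{i*}$ in \eqref{eq:bounded-incentive-controllability-wi} is well-defined (the minimand is closed, proper and $\constsc$-strongly convex), that $w^* = (1/M)\sum_{i\in\mathset{M}} w^{i*} \in \dom g$, and that $\lVert \hat{w}^* - w^*\rVert \le \bar\theta/\constsc$ --- which is exactly the error bound claimed in the theorem. It then remains to check feasibility of $(u^*,\bm{w}^*,\lambda^*)$ for the incentive BiMPC \eqref{eq:multiple-lompcs-incentivized-bimpc}. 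The LoMPC-optimality constraints \eqref{subeq:multiple-lompcs-incentivized-bimpc-w-optimality} hold by construction, since $w^{i*} = w^{i*}(\xi_0,u^*,\lambda^*)$ is precisely the definition of $w^{i*}$; the averaging constraint holds by definition of $w^*$; and $\lambda^* \in \mathset{K}^*$ by hypothesis.

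The only point that genuinely invokes the robust constraint is that the objective $f(u^*,w^*;\xi_0)$ is finite, i.e.\ $(u^*,w^*) \in \dom f(\cdot\,;\xi_0)$. Here I would observe that $(\mathbb{0}, w^* - \hat{w}^*) \in \mathset{E}$, because $\lVert w^* - \hat{w}^*\rVert \le \bar\theta/\constsc$ by the previous step and $\mathset{E}$ is defined in \eqref{eq:robust-bimpc-error-set} to be exactly the $\bar\theta/\constsc$-ball in the $w$-coordinates; hence $(u^*,w^*) = (u^*,\hat{w}^*) + (\mathbb{0}, w^*-\hat{w}^*) \in \{(u^*,\hat{w}^*)\} + \mathset{E} \subset \dom f(\cdot\,;\xi_0)$ by \eqref{subeq:robust-bimpc-w-f-feasibility}. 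The main --- and essentially the only --- subtlety is recognizing that $\mathset{E}$ has been chosen so that the uncertainty set in \eqref{subeq:robust-bimpc-w-f-feasibility} dominates the perturbation $w^* - \hat{w}^*$ produced by Lem.~\ref{lem:bounded-incentive-controllability}; once that alignment is in place, the remainder is bookkeeping over which constraint of \eqref{eq:robust-bimpc} and \eqref{eq:multiple-lompcs-incentivized-bimpc} is being used at each step.
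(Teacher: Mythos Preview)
Your proposal is correct and follows essentially the same approach as the paper's own proof: use $(\mathbb{0},\mathbb{0})\in\mathset{E}$ together with \eqref{subeq:robust-bimpc-w-f-feasibility} to place $(u^*,\hat{w}^*)$ in $\dom f$, invoke Lem.~\ref{lem:bounded-incentive-controllability} to obtain $\lVert \hat{w}^*-w^*\rVert\le\bar\theta/\constsc$, and then use the robust constraint again to conclude $(u^*,w^*)\in\{(u^*,\hat{w}^*)\}+\mathset{E}\subset\dom f$. Your write-up is, if anything, more explicit than the paper's in verifying each constraint of \eqref{eq:multiple-lompcs-incentivized-bimpc} separately.
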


\begin{proof}
Since $\mathbb{0}_{(\Nu+\Nw)\horizon} \in \mathset{E}$ and $(u^*, \hat{w}^*)$ satisfies the constraint \eqref{subeq:robust-bimpc-w-f-feasibility}, $(u^*, \hat{w}^*) \in \dom f(\cdot \; ; \xi_0)$.
The assumptions of Lem.~\ref{lem:bounded-incentive-controllability} are satisfied, and so \eqref{eq:bounded-incentive-controllability-error-bound} holds.
Thus, by \eqref{eq:bounded-incentive-controllability-error-bound} and \eqref{subeq:robust-bimpc-w-f-feasibility},
\begin{equation*}
    (u^*, w^*) \in (u^*, \hat{w}^*) + \mathset{E} \subset \dom f(\cdot \; ; \xi_0).
\end{equation*}
Therefore, $(u^*, \bm{w}^*, \lambda^*)$ is feasible for the incentive BiMPC \eqref{eq:multiple-lompcs-incentivized-bimpc}.
\end{proof}

The robust BiMPC \eqref{eq:robust-bimpc} defines a team-optimal solution $(u^*, \hat{w}^*)$ for the incentive hierarchical problem given by \eqref{eq:multiple-lompcs-incentivized-bimpc} and \eqref{eq:linear-convex-incentivized-lompc}.
Since the same incentive is provided to all LoMPCs, it might not be possible to obtain the team-optimal solution $(u^*, \hat{w}^*)$ for any incentive.
However, Thm.~\ref{thm:robust-bimpc-formulation} shows that we can achieve a solution $(u^*, w^*)$ that has bounded error compared to the team-optimal solution.
The robust formulation \eqref{eq:robust-bimpc} shrinks the domain of the BiMPC problem to account for this error and thus guarantees that the obtained solution $(u^*, w^*)$ satisfies the BiMPC constraints.

\hidetext{
\begin{remark} (Bounded optimality of the Robust BiMPC)
The error bound $\lVert \hat{w}^* - w^* \rVert \leq \bar{\theta}/\constsc$ from Thm.~\ref{thm:robust-bimpc-formulation}, together with the optimal value of \eqref{eq:robust-bimpc}, also provides an upper bound on the optimal value of the incentive BiMPC \eqref{eq:multiple-lompcs-incentivized-bimpc}.
This optimality bound can be used to guarantee stability (in the sense of Lyapunov) for the system.
\end{remark}
}

\begin{remark} (Explicit robust BiMPC formulation)
\label{rem:explicit-robust-bimpc-formulation}
Since the state and input constraints of the BiMPC are convex (see Sec.~\ref{subsec:hierarchical-formulation}), $\dom{f}$ is convex.
Thus, the robustness constraint \eqref{subeq:robust-bimpc-w-f-feasibility} is a convex constraint.
When the state constraints of the BiMPC problem \eqref{eq:bimpc} are polytopic or quadratic, we can explicitly write the robustness constraint \eqref{subeq:robust-bimpc-w-f-feasibility} using duality theory.
We show this explicit formulation in App.~\ref{app:robust-bimpc-formulation}.
\end{remark}

\begin{remark} (Robustness horizon)
\label{rem:robustness-horizon}
The robust BiMPC \eqref{eq:robust-bimpc} accounts for the error in the optimization variable $\hat{w}$ for the entire horizon $\horizon$.
The robustness horizon $\horizon_r > 0$ can be defined as the number of time steps for which the error in $\hat{w}$ is considered.
For $\horizon_r \leq \horizon$, we modify the robust BiMPC \eqref{eq:robust-bimpc} by redefining $\mathset{E}$ in \eqref{subeq:robust-bimpc-w-f-feasibility} as follows:
\begin{equation*}
% \label{eq:robust-bimpc-error-set-robust-horizon}
    \mathset{E} := \{(\mathbb{0}, w_r, \mathbb{0}) \in \real^{\Nu\horizon} \times \real^{\Nw\horizon_r} \times \real^{\Nw(\horizon-\horizon_r)}: \lVert w_r \rVert \leq k \bar{\theta}/\constsc\},
\end{equation*}
where $k = \max_w \{\lVert w_{0:\horizon_r-1} \rVert: \lVert w\rVert \leq 1\}$.
% Using $\mathset{E}$ (as defined above) for $\horizon_r < \horizon$ takes away the feasibility guarantee of $(u^*, \bm{w}^*, \lambda^*)$ for \eqref{eq:multiple-lompcs-incentivized-bimpc}.
Using $\horizon_r < \horizon$ expands the feasible set of \eqref{eq:robust-bimpc} but only guarantees the satisfaction of the BiMPC constraints for $\horizon_r$ time steps.
In other words, the state $x^*_k$ obtained from $(u^*, \bm{w}^*)$ might violate the BiMPC state constraints for $k > \horizon_r$.
However, since the hierarchical MPC problem is solved with state feedback at each time step (see Fig.~\ref{fig:hierarchical-mpc-formulation}), we can choose $N_r < N$, as long as \eqref{eq:robust-bimpc} is feasible.
% There is a tradeoff for different values of the robustness horizon $N_r$:
% For large values of $N_r$, the robust BiMPC \eqref{eq:robust-bimpc} can be overly conservative, while smaller values of $N_r$ can affect the feasibility of \eqref{eq:robust-bimpc}.
% For the example considered in Sec.~\todo{(ref)}, we choose $N_r < N$ and discuss the tradeoff between different values of $N_r$.
\end{remark}

\begin{remark}
\label{rem:partition-of-lompcs}
    We can extend the incentive BiMPC \eqref{eq:multiple-lompcs-incentivized-bimpc} to account for a partition of the set of LoMPCs $\mathset{M}$ as $\sqcup_j \, \mathset{M}_j$.
    Here, the BiMPC assigns a different incentive $\lambda^j$ to each set of LoMPCs $\mathset{M}_j$.
    We can formulate a robust BiMPC similar to \eqref{eq:robust-bimpc} by considering the error bounds corresponding to each partition.
\end{remark}

To summarize the results in this section, we first specify the assumptions on the incentive hierarchical MPC problem in Assumption~\ref{assum:multiple-lompcs-properties-of-gi}.
Then, we show that although incentive controllability does not hold for the case of multiple LoMPCs, we can still guarantee bounded incentive controllability, Lem.~\ref{lem:bounded-incentive-controllability}.
Finally, we use the bounded incentive controllability property to formulate a robust BiMPC, Thm.~\ref{thm:robust-bimpc-formulation}, which defines a team-optimal solution.
In the next subsection, we outline a procedure for solving the hierarchical MPC problem given by \eqref{eq:multiple-lompcs-incentivized-bimpc} and \eqref{eq:linear-convex-incentivized-lompc} using the results from Secs.~\ref{sec:linear-convex-incentives} and \ref{sec:multiple-lompcs}.

\subsection{Computation of the Hierarchical MPC Solution}
\label{subsec:computation-of-the-hierarchical-mpc-solution}

\begin{figure}[tp]
    \centering
    \includegraphics[width=0.99\columnwidth]{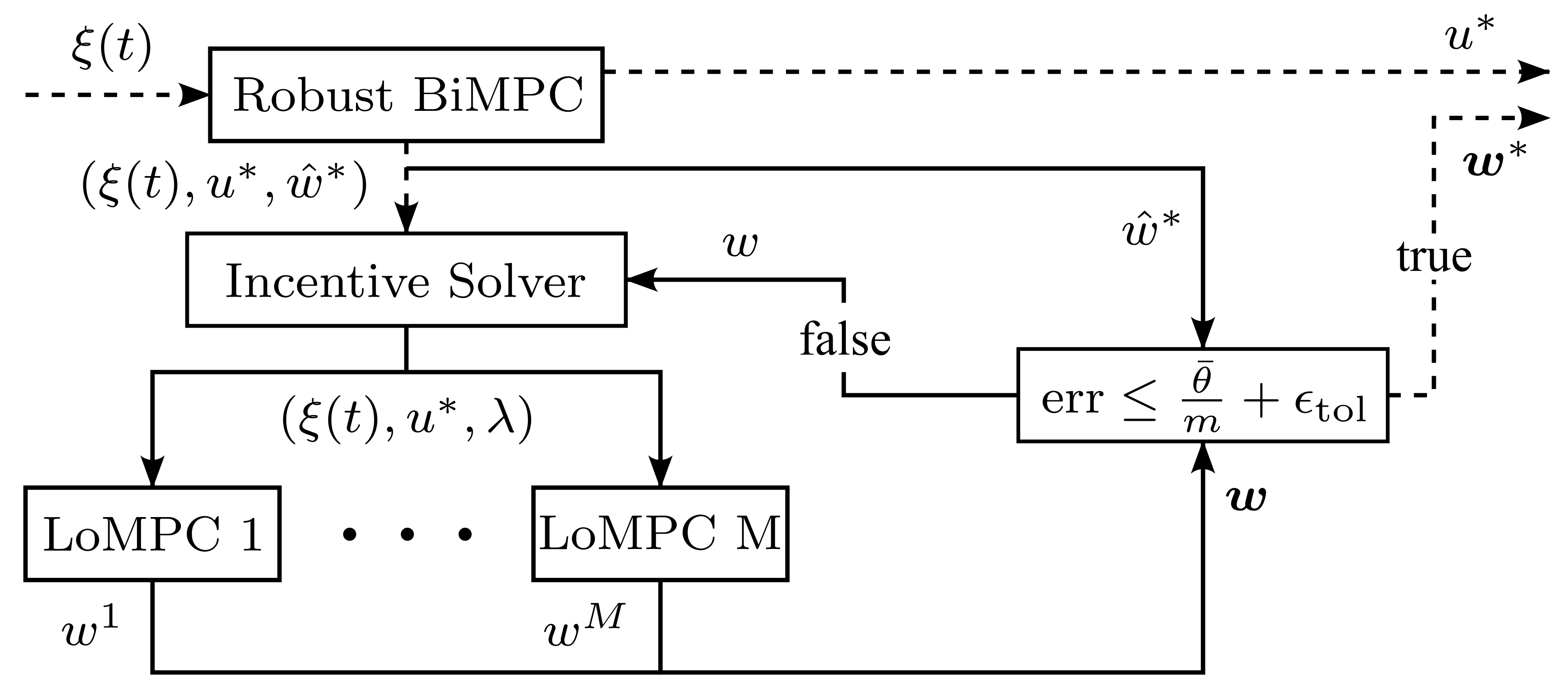}
    \caption{\changes{A framework for solving the hierarchical MPC problem given by \eqref{eq:multiple-lompcs-incentivized-bimpc} and \eqref{eq:linear-convex-incentivized-lompc} (this figure corresponds to the controller subsystem in Fig.~\ref{fig:hierarchical-mpc-formulation})}.
    At each time step, given the current state $\xi(t)$, we first solve the robust BiMPC \eqref{eq:robust-bimpc} to get the team-optimal solution $(u^*, \hat{w}^*)$.
    By the bounded incentive controllability property, Lem.~\ref{lem:bounded-incentive-controllability}, we can find an incentive $\lambda^*$ such that the average LoMPC solution $w^*$ has bounded error compared to $\hat{w}^*$.
    The \emph{incentive solver} uses the iterative method in Thm.~\ref{thm:linear-convex-incentives-iterative-method} to compute $\lambda^*$.
    The output $(u^*, \bm{w}^*)$ is guaranteed to be feasible for the incentive BiMPC \eqref{eq:multiple-lompcs-incentivized-bimpc}.
    In the flowchart, the solid lines indicate information flows that happen multiple times per time step, while the dashed lines indicate those that happen once.
    }
    \label{fig:hierarchical-mpc-solution-procedure}
\end{figure}

We propose a two-step procedure to solve the hierarchical MPC problem given by \eqref{eq:multiple-lompcs-incentivized-bimpc} and \eqref{eq:linear-convex-incentivized-lompc}.
The method is depicted in Fig.~\ref{fig:hierarchical-mpc-solution-procedure}, and Algs.~\ref{alg:iterative-method} and \ref{alg:incentivized-bimpc-solution} provide an algorithm for the solution computation.

We first describe the overall method, as shown in Fig.~\ref{fig:hierarchical-mpc-solution-procedure} and Alg.~\ref{alg:incentivized-bimpc-solution}.
At each time step, given the current state $\xi(t)$, we first solve the robust BiMPC problem \eqref{eq:robust-bimpc} to obtain a team-optimal solution $(u^*, \hat{w}^*)$ (line~\ref{alg-line:solve-robust-bimpc}, Alg.~\ref{alg:incentivized-bimpc-solution}).
The team-optimal solution is provided to an \emph{incentive solver} that iteratively computes an optimal incentive $\lambda^*$ corresponding to $\hat{w}^*$ using Thm.~\ref{thm:linear-convex-incentives-iterative-method} (line~\ref{alg-line:solve-optimal-incentive}, Alg.~\ref{alg:incentivized-bimpc-solution}).
We note that, from Thm.~\ref{thm:robust-bimpc-formulation}, the LoMPC solution $\bm{w}^*$ obtained using the incentive $\lambda^*$ is feasible for the incentive BiMPC \eqref{eq:multiple-lompcs-incentivized-bimpc}.
The optimal solution $(u^*, \bm{w}^*)$ is then used as an input to the dynamical system \eqref{eq:coupled-dynamics} to obtain the next state $\xi(t+1)$ (see Fig.~\ref{fig:hierarchical-mpc-formulation}).

Next, we describe the \emph{incentive solver}, which is used to compute an optimal incentive $\lambda^*$ given the team-optimal solution $(u^*, \hat{w}^*)$, as shown in Alg.~\ref{alg:iterative-method}.
We follow the iterative method described in Thm.~\ref{thm:linear-convex-incentives-iterative-method}.
At each iteration, we update the $w$-iterate using \eqref{subeq:linear-convex-incentives-iterative-method-const-step-gradient} to the average LoMPC solution obtained from the current incentive $\lambda$ (line~\ref{alg-line:w-update}, Alg.~\ref{alg:iterative-method}).
The $\lambda$-iterate is updated using the majorization-minimization step in \eqref{subeq:linear-convex-incentives-iterative-method-const-step-flow} (line~\ref{alg-line:lambda-update}, Alg.~\ref{alg:iterative-method}).
We iterate these update steps until the error $\lVert \hat{w}^* - w\rVert$ is less than the $\bar{\theta}/\constsc$, which is the error bound guaranteed by Lem.~\ref{lem:bounded-incentive-controllability}, up to a tolerance $\epsilon_\text{tol} > 0$.
We emphasize that the iterative method described by Alg.~\ref{alg:iterative-method} does not require the knowledge of the LoMPC functions $g^i$.
Instead, the optimal solutions from the LoMPCs are repeatedly queried to obtain an optimal incentive (see Assumption~\ref{assum:privacy-of-lompc}).

\begin{algorithm}[tp]
\small
\DontPrintSemicolon
\caption{Incentive BiMPC Solver}\label{alg:incentivized-bimpc-solution}
\Input{$\xi(t)$ \tcp*[r]{Current state}}
\Output{$(u^*, \bm{w}^*)$, $\lambda^*$ \tcp*[r]{Incentive BiMPC solution}}
\Function{
\solveIncentiveBiMPC{$\xi(t)$} \\
\tcp*[h]{Solves the incentive BiMPC problem \eqref{eq:multiple-lompcs-incentivized-bimpc} using \eqref{eq:robust-bimpc}}
}{
    $\xi_0 \gets \xi(t)$\;
    \tcp*[h]{Solve robust BiMPC \eqref{eq:robust-bimpc}}\;
    $(u^*, \hat{w}^*) \gets$ \solveRobustBiMPC{$\xi_0$}\label{alg-line:solve-robust-bimpc}\;
    \tcp*[h]{Optimal incentive from Alg.~\ref{alg:iterative-method}}\;
    $\lambda^* \gets$ \optimalIncentive{$\xi_0, u^*, \hat{w}^*$}\label{alg-line:solve-optimal-incentive}\;
    \For{$i \gets 1$ \KwTo $M$}{
    $w^{i*} \gets w^{i*}(\xi_0, u^*, \lambda^*)$ \tcp*[r]{see \eqref{subeq:multiple-lompcs-incentivized-bimpc-w-optimality}}
    }
    $\bm{w}^* \gets (w^{1*}, ..., w^{M*})$\;
    \KwRet $u^*$, $\bm{w}^*$, $\lambda^*$\;
}
\end{algorithm}

\subsection{Advantages and Disadvantages of Proposed Method}
\label{subsec:advantages-and-disadvantages-of-proposed-method}

Finally, we discuss the advantages and disadvantages of our method compared to an exact bilevel optimization solver, such as the one proposed in \cite{mintz2018control}.
The advantages of our method are as follows:
1) Our method scales well to a large number of LoMPCs.
In particular, neither the robust BiMPC computation time nor the optimal incentive computation time grows with the number of LoMPCs (the optimal incentive computation relies on the LoMPC solution for any incentive, which is performed in a distributed manner).
Also, the error bound provided by Lem.~\ref{lem:bounded-incentive-controllability} is independent of the number of LoMPCs.
In contrast, the BiMPC reformulation in \cite{mintz2018control} grows in size with the number of LoMPCs.
2) Our method only requires the knowledge of the LoMPC domain, but not the LoMPC cost function.
The disadvantages of our method are as follows:
1) Our method only guarantees the optimality of the incentive BiMPC problem \eqref{eq:multiple-lompcs-incentivized-bimpc} up to an error bound.
2) Our method places constraints on the LoMPC structure (see Assumption~\ref{assum:multiple-lompcs-properties-of-gi}) and does not provide any guarantees when the incentive $\lambda$ is constrained.
On the other hand, the BiMPC reformulation in \cite{mintz2018control} allows for constraints on the incentive $\lambda$ and computes the optimal solution of the BiMPC.

\begin{algorithm}[tp]
\small
\DontPrintSemicolon
\caption{Iterative Method for Optimal Incentives}\label{alg:iterative-method}
\Input{$\xi_0$, $(u^*, \hat{w}^*)$ \tcp*[r]{Initial state, team-optimal solution}}
\Parameter{$\epsilon_{\text{tol}}$, $\bar{\theta}$, $\constsc$}
\Output{$\lambda^*$ \tcp*[r]{$\lambda^*$ is such that \eqref{eq:bounded-incentive-controllability-wi}, \eqref{eq:bounded-incentive-controllability-error-bound} hold}}
% \Function{\avgLoMPCSolution{$\xi_0$, $u^*$, $\lambda$}}{
%     $w \gets$ $(1/M)\sum_{i \in \mathset{M}} w^{i*}(\xi_0, u^*, \lambda)$ \tcp*[r]{see \eqref{subeq:multiple-lompcs-incentivized-bimpc-w-optimality}}
%     \KwRet $w$
% }
% \Function{\errorLoMPCSolution{$\xi_0$, $u^*$, $\lambda$, $\hat{w}^*$} \\
% \tcp*[h]{Convergence error, see Lem.~\ref{lem:bounded-incentive-controllability}}}{
%     $\text{err} \gets$ $\max_{i \in \mathset{M}} \lVert \hat{w}^* - w^{i*}(\xi_0, u^*, \lambda)\rVert$\;
%     \KwRet $\text{err}$
% }
\Function{\optimalIncentive{$\xi_0$, $u^*$, $\hat{w}^*$} \\
\tcp*[h]{Solves for an optimal incentive using Thm.~\ref{thm:linear-convex-incentives-iterative-method}}}{
    $\lambda \gets \lambda_{\text{ws}}$\tcp*[r]{Warm-start solution from previous call}
    $(w, \text{err}) \gets$ \LoMPCSolution{$\xi_0$, $u^*$, $\lambda$, $\hat{w}^*$}\;
    % \While{$\lVert w - \hat{w}^*\rVert \geq \bar{\theta}/\constsc + \epsilon_{\text{tol}}$}{
    \While{$\text{err} > \bar{\theta}/\constsc + \epsilon_{\text{tol}}$}{
    \tcp*[h]{Solve \eqref{subeq:linear-convex-incentives-iterative-method-const-step-flow} with $(w^{(k)}, \lambda^{(k)}, w^*) = (w, \lambda, \hat{w}^*)$}\;
    $\lambda \gets$ \incentiveUpdate{$w$, $\lambda$, $\hat{w}^*$}\label{alg-line:lambda-update}\;
    \tcp*[h]{Update $w$, see \eqref{subeq:linear-convex-incentives-iterative-method-const-step-gradient}}\;
    $(w, \text{err}) \gets$ \LoMPCSolution{$\xi_0$, $u^*$, $\lambda$, $\hat{w}^*$}\label{alg-line:w-update}
    }
    \tcp*[h]{Regularize incentive using \eqref{eq:optimal-incentive-regularization}, Rem.~\ref{rem:optimal-incentive-regularization}}\;
    $\lambda \gets$ \incentiveRegularization{$\lambda$, $w$}\;
    \KwRet $\lambda$
}
\Function{\LoMPCSolution{$\xi_0$, $u^*$, $\lambda$, $\hat{w}^*$}}{
    \tcp*[h]{Average LoMPC solution, see \eqref{eq:linear-convex-incentivized-lompc} and \eqref{eq:parametric-form-gi}}\;
    $w \gets$ $(1/M)\sum_{i \in \mathset{M}} w^{i*}(\xi_0, u^*, \lambda)$\;
    \tcp*[h]{Convergence error, see Lem.~\ref{lem:bounded-incentive-controllability}}\;
    % $\text{err} \gets$ $\max_{i \in \mathset{M}} \lVert \hat{w}^* - w^{i*}(\xi_0, u^*, \lambda)\rVert$\;
    $\text{err} \gets$ $\lVert \hat{w}^* - w\rVert$\;
    \KwRet $(w, \text{err})$
}
\end{algorithm}

\section{Numerical Example: Dynamic Price Control for EV Charging}
\label{sec:numerical-examples}

In this section, we consider dynamic price control for electric vehicle (EV) charging (see Sec.~\ref{subsec:motivating-example}).
We mostly use the dynamical systems and costs in \cite{ma2013decentralized,zou2017efficient}, but specify when we deviate.
The set of MPC problems solved by the EVs comprises the set of LoMPCs.
The optimization problem solved by the ISO comprises the BiMPC problem.
Together, the BiMPC and the LoMPC problems form an incentive hierarchical MPC problem, where the incentive provided by the ISO is the unit price of electricity.
The Python code for the example can be found in the repository\footnote{
\label{code}
The implementation code can be found at \url{https://github.com/AkshayThiru/incentive-design-mpc}.
}.
Next, we describe the LoMPC and BiMPC problems in detail.

\subsection{LoMPC Formulation}
\label{subsec:ev-lompc-formulation}

The $i$-th EV has a state variable $y^i_k \in \real$, which denotes its battery state of charge (SoC) as a fraction of its total capacity $\Theta^i$ at time step $k$.
$y^i_k$ is called the normalized SoC of EV $i$.
The fraction of the EV battery capacity charged during time interval $k$ is given by $w^i_k \in \real$.
$w^i_k$ is called the normalized charging rate at time step $k$.
In addition to the system considered in \cite{zou2017efficient}, we assume that $w^i_k$ is bounded by $w^i_\text{max}$.
While a more realistic model of battery charging might consider $w^i_\text{max}$ as a function of the SoC, we ignore this complication.
The normalized SoC of EV $i$ evolves according to the dynamics:
\begin{equation}
\label{eq:ev-lompc-dynamics}
\begin{gathered}
    y^i_{k+1} = y^i_k + w^i_k, \\
    0 \leq w^i_k \leq w^i_\text{max} \leq 1, \quad 0 \leq y^i_k \leq y^i_\text{max} \leq 1.
\end{gathered}
\end{equation}
% where $y^i_\text{max}$ is the maximum normalized SoC of EV $i$.
We assume that the EV population comprises $500$ small and large EVs for a total population size of $M = 1000$.
Additionally, all the small EVs have the same values for $\Theta, w_\text{max}$, and $y_\text{max}$, given by $\Theta^s, w^s_\text{max}$, and $y^s_\text{max}$, respectively.
Likewise, the large EVs share the values $\Theta^l, w^l_\text{max}$, and $y^l_\text{max}$.
While this is a simplifying assumption compared to \cite{zou2017efficient}, we note that \cite{zou2017efficient} sets different prices for each EV.
On the other hand, we set identical electricity prices for large parts of the EV population.
In the rest of the subsection, we discuss the small EV LoMPC formulation; a similar formulation is used for the large EVs.

Each small EV $i$ solves a LoMPC problem with horizon $\horizon$ to determine the amount of electricity it draws during a time interval.
Following the notation in Tab.~\ref{tab:hierarchical-mpc-variable-definitions}, let $y^i = (y^i_0, ..., y^i_{\horizon})$ and $w^i = (w^i_0, ..., w^i_{\horizon-1})$.
The dynamics and constraints of small EV $i$ are given by \eqref{eq:ev-lompc-dynamics} with $w^i_\text{max} = w^s_\text{max}$ and $y^i_\text{max} = y^s_\text{max}$.
Without loss of generality, we ignore the state constraints $\mathbb{0} \leq y^i \! \leq y^s_\text{max} \mathbb{1}$ by assuming that EVs stop charging once their maximum normalized SoC $y^s_\text{max}$ is attained.
We also assume, for simplicity, that once an EV is fully charged, it is replaced by another EV with a random normalized SoC.
We also define the total battery capacity of the EV population as follows:
\begin{equation}
\label{eq:ev-lompc-b}
    B = (M/2) \Theta^s + (M/2) \Theta^l.
\end{equation}

The LoMPC of each EV has a cost function with three components, which we describe next.

\subsubsection{Battery Degradation Cost}

The charging behavior over time affects an EV battery's health.
We associate a battery degradation cost function $g^i_\text{bat}$, a function of the normalized charging rate $w^i_k$, to the rate of decline of the battery health \cite{zou2017efficient}.
The total battery degradation cost is given by
\begin{equation}
\label{eq:ev-lompc-battery-degradation-cost}
    (\Theta^i)^2 \, \textstyle{\sum}_{k=0}^{\horizon-1} \, g^i_\text{bat}(w^i_k),
\end{equation}
where the $(\Theta^i)^2$ factor normalizes the battery degradation cost term with respect to the other cost terms.
We assume that $g^i_\text{bat}$ can be any closed convex function (possibly nonsmooth) and that $g^i_\text{bat}$ is unknown to the ISO.
Additionally, all small and large EVs share the battery degradation cost functions $g^s_\text{bat}$ and $g^l_\text{bat}$, respectively (the exact forms can be found in the code\footref{code}).

\subsubsection{Charging Cost}

Each EV $i$ has a quadratic tracking cost at time $k$ given by $\delta (\Theta^i)^2(y^i_\text{max} - y^i_k)^2$, where $\delta > 0$ is a fixed parameter.
This cost function imposes a penalty for not being at the desired SoC.
The total charging cost is given by
\begin{equation}
\label{eq:ev-lompc-tracking-cost}
    \delta (\Theta^i)^2 \textstyle{\sum}_{k=1}^N \, (y^i_\text{max} - y^i_k)^2.
\end{equation}

\subsubsection{Electricity Cost}

While \cite{zou2017efficient} considers an auction mechanism, we assume that the ISO sets identical electricity prices for parts of the EV population.
% We also associate a cost with the electricity consumed by EV $i$ at time interval $k$.
Since electricity prices are provided by the ISO as incentives to the EVs, the electricity cost is the incentive cost term, as defined in \eqref{eq:linear-convex-incentivized-lompc}.
The total electricity cost for EV $i$ is given by
\begin{equation}
\label{eq:ev-lompc-electricity-price-cost}
    \Theta^i \bigl(\lambda_{l_1}^\top w^i + \lambda_{l_2}^\top (w^i_\text{max} \mathbb{1} - w^i) + (w^i)^\top \text{diag}(\lambda_q) w^i \bigr),
\end{equation}
where $\lambda_{l_1}, \lambda_{l_2}, \lambda_q \in \real^\horizon_+$
% , $\lambda_r \geq 0$,
and $\lambda = (\lambda_{l_1}, \lambda_{l_2}, \lambda_q) \in \real^{3\horizon}_+$ is the incentive.
The first term in \eqref{eq:ev-lompc-electricity-price-cost} penalizes higher electricity consumption, while the third term adds a nonlinear electricity cost.
The second term reduces prices for higher consumption.
This can be interpreted as the ISO providing compensation to the EVs for charging at a higher rate at the expense of EV battery degradation.
The electricity cost for small EV $i$ can be expressed as $\langle \lambda, \phi^s(w^i)\rangle$, where % $\phi^s: \real^\horizon \rightarrow \real^{3\horizon}$ is given by
\begin{equation}
\label{eq:ev-phi-function}
    \phi^s(w^i) = \Theta^s \bigl(w^i, w^s_\text{max} \mathbb{1} - w^i, w^i \odot w^i\bigr),
\end{equation}
where $\odot$ represents the Hadamard product.
From \eqref{eq:ev-phi-function} and Rem.~\ref{rem:linear-convex-incentive-componentwise}, $\phi^s$ is $\mathset{K}$-convex with $\mathset{K} = \real^{3\horizon}_+$, and $\lambda \in \mathset{K}^* = \real^{3\horizon}_+$.
Thus, $\langle \lambda, \phi^s(w^i)\rangle$ is a linear-convex incentive (see Def.~\ref{def:linear-convex-incentive}).

The LoMPC problem for small EV $i$ is given by the dynamics \eqref{eq:ev-lompc-dynamics} (without the state constraints) and the three cost terms discussed above.
We can verify that the small EV LoMPC matches the form in \eqref{eq:input-constrained-lqr}.
Thus, by Rem.~\ref{rem:multiple-lompcs-lompc-lqr}, the small EV LoMPC satisfies Assumption~\ref{assum:multiple-lompcs-properties-of-gi} and can be written in the parametric form \eqref{eq:parametric-form-gi} with
\begin{equation}
\begin{gathered}
    \theta^i(\bm{y}_0) = 2\delta (\Theta^s)^2 \bigl( y^s_{0, c} - y^i_0 \bigr) \mathbb{1}, \\
    y^s_{0, c} = \frac{1}{2} \biggl(\max_{i \in \text{small EVs}} \{y^i_0\} + \min_{i \in \text{small EVs}} \{y^i_0\}\biggr).
\end{gathered}
\end{equation}
Similarly, we can define the LoMPC problem for a large EV.

\subsection{LoMPC Solution Error Bounds}
\label{subsec:ev-lompc-solution-error-bounds}

\begin{figure}%[!htp]
    \centering
    \includegraphics[width=0.99\columnwidth]{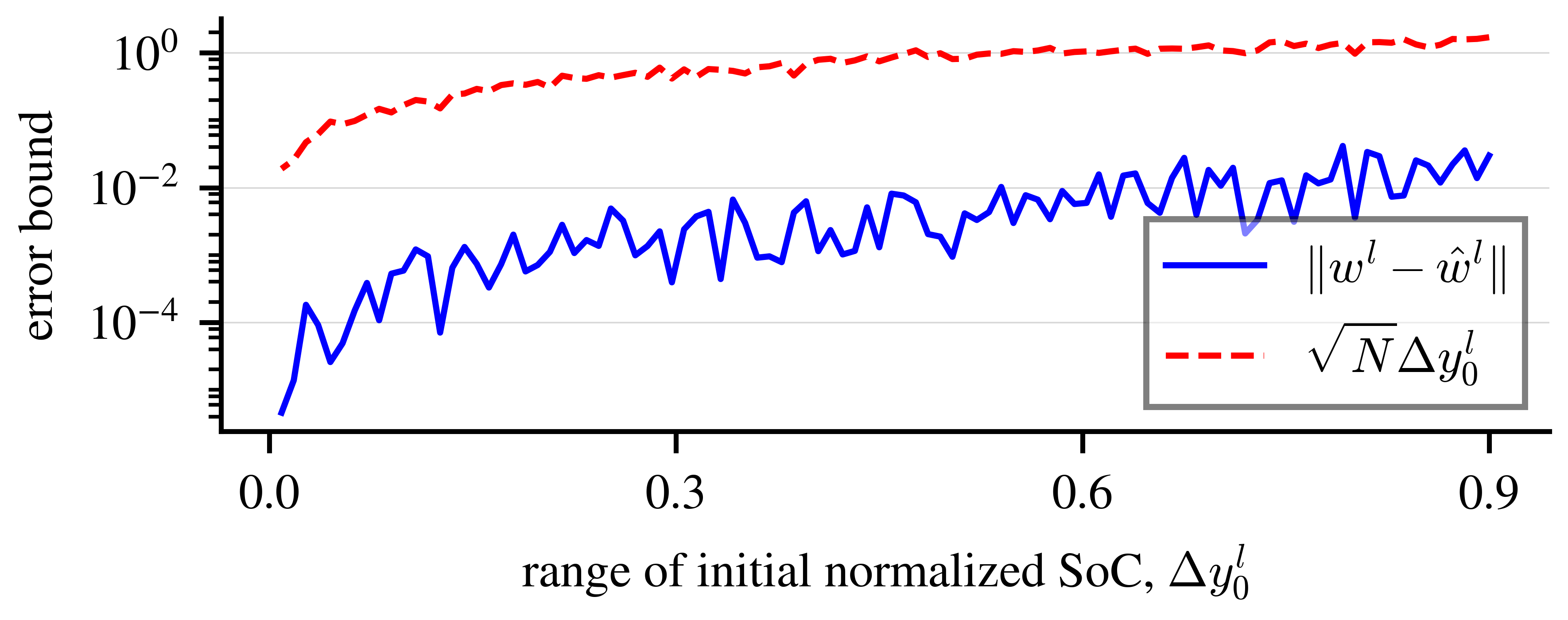}
    \caption{Verification of the LoMPC solution error bound in \eqref{eq:ev-w-error-bound} for large EVs with $M = 20$.
    For a given $\Delta y^l_0$, we randomly generate an incentive $\lambda^l$, compute the team-optimal solution $\hat{w}^l$ as the optimal solution of the EV LoMPC with $\theta^i = 0$ (see \eqref{eq:bounded-incentive-controllability-w-hat}), and compute $w^l$ using \eqref{eq:ev-w-error-bound}.
    The plot shows that for any desired average electricity consumption $\hat{w}^l$ (that is feasible), we can set the unit price of electricity $\lambda^l$ such that the actual average electricity consumption $w^l$ has a bounded error compared to $\hat{w}^l$.
    The error bound is given by $\sqrt{\horizon} \Delta y^l_0$, where $\Delta y^l_0$ is the range of normalized SoC of the large EV population.
    }
    \label{fig:ev-robustness-bound}
\end{figure}

In this subsection, we show that Assumptions~\ref{assum:properties-of-gi}, \ref{assum:linear-convex-incentive-controllability}, and \ref{assum:multiple-lompcs-properties-of-gi} hold for the EV charging example, and thus we can apply all the results derived in the paper.
In particular, we compute the error bound provided by Lem.~\ref{lem:bounded-incentive-controllability}.

Since the EV LoMPC costs are convex, Assumption~\ref{assum:properties-of-gi} is satisfied.
Moreover, the small EV LoMPC cost functions have a strong-convexity modulus of $\constsc^s = 2 \delta (\Theta^s)^2$, while those of large EVs have a strong-convexity modulus of $\constsc^l = 2 \delta (\Theta^l)^2$.

Next, we show that Assumption~\ref{assum:linear-convex-incentive-controllability} holds for both small EV and large EV incentives.
If $\phi^s(w^1) \preceq_{\mathset{K}} \phi^s(w^2)$, then from \eqref{eq:ev-phi-function}, $w^1 \leq w^2$ and $w^2 \leq w^1$ $\Rightarrow \ w^1 = w^2$.
For any $w^* \in \real^\horizon$,
\begin{equation*}
    D\phi^s(w^*)^\top = \Theta^s [I, \ -I, \ 2 \, \text{diag}(w^*)].
\end{equation*}
Then, for any $v \in \real^\horizon$, $D\phi^s(w^*)^\top\lambda = v$, where
\begin{equation*}
\lambda = (\max\{v, \mathbb{0}\}, -\min\{v, \mathbb{0}\}, \mathbb{0}) / \Theta^s \succeq_{\mathset{K}^*} 0.
\end{equation*}
Thus $\phi^s$, and similarly $\phi^l$, satisfies Assumption~\ref{assum:linear-convex-incentive-controllability}.

Finally, as discussed in Sec.~\ref{subsec:ev-lompc-formulation}, Assumption~\ref{assum:multiple-lompcs-properties-of-gi} holds for small EVs (and likewise for large EVs), with
\begin{equation}
\label{eq:ev-theta-bar}
\begin{gathered}
    \lVert \theta^i(\bm{y}_0)\rVert \leq \bar{\theta}^s = 2 \delta (\Theta^s)^2 \Delta y_0^s \sqrt{\horizon}, \\
    \Delta y_0^s = \frac{1}{2} \biggl(\max_{i \in \text{small EVs}} \{y^i_0\} - \min_{i \in \text{small EVs}} \{y^i_0\}\biggr).
\end{gathered}
\end{equation}

Since Assumptions~\ref{assum:linear-convex-incentive-controllability} and \ref{assum:multiple-lompcs-properties-of-gi} hold, we can use Lem.~\ref{lem:bounded-incentive-controllability} to provide an error bound on the small EV LoMPC solution.
Let $\hat{w}^s$ be such that $\mathbb{0} \leq \hat{w}^s \leq w^s_\text{max} \mathbb{1}$.
Then, we can find an incentive $\lambda^s \succeq_{\mathset{K}^*} 0$ such that
\begin{equation}
\label{eq:ev-w-error-bound}
\begin{gathered}
    \lVert \hat{w}^s - w^s \rVert \leq \bar{\theta}^s/\constsc^s = \sqrt{N} \Delta y_0^s, \\
    w^s = \frac{1}{M} \sum_{i \in \text{small EVs}} w^{i*}(\xi_0, \lambda^s),
\end{gathered}
\end{equation}
where $w^{i*}(\xi_0, \lambda)$ is the optimal solution of the EV $i$ LoMPC.
In particular, we note that the error bound in \eqref{eq:ev-w-error-bound} depends on the horizon $N$ and the range of initial normalized SoCs $\Delta y^s_0$, but not on the population size $M$.
We verify the dependence of the error bound on $\Delta y^s_0$ in Fig.~\ref{fig:ev-robustness-bound}.

For a given $\hat{w}^s$ such that $\mathbb{0} \leq \hat{w}^s \leq w^s_\text{max} \mathbb{1}$, an optimal incentive $\lambda^{s*}$ is computed using the incentive solver given by Alg.~\ref{alg:iterative-method}.
In Fig.~\ref{fig:ev-dual-cost-decrease}, we verify the incentive solver, derived from Thm.~\ref{thm:linear-convex-incentives-iterative-method}, by checking the dual cost decrease condition \eqref{eq:linear-convex-incentives-iterative-method-dual-cost-decrease} for each solver iteration.

\begin{figure}%[!htp]
    \centering
    \includegraphics[width=0.99\columnwidth]{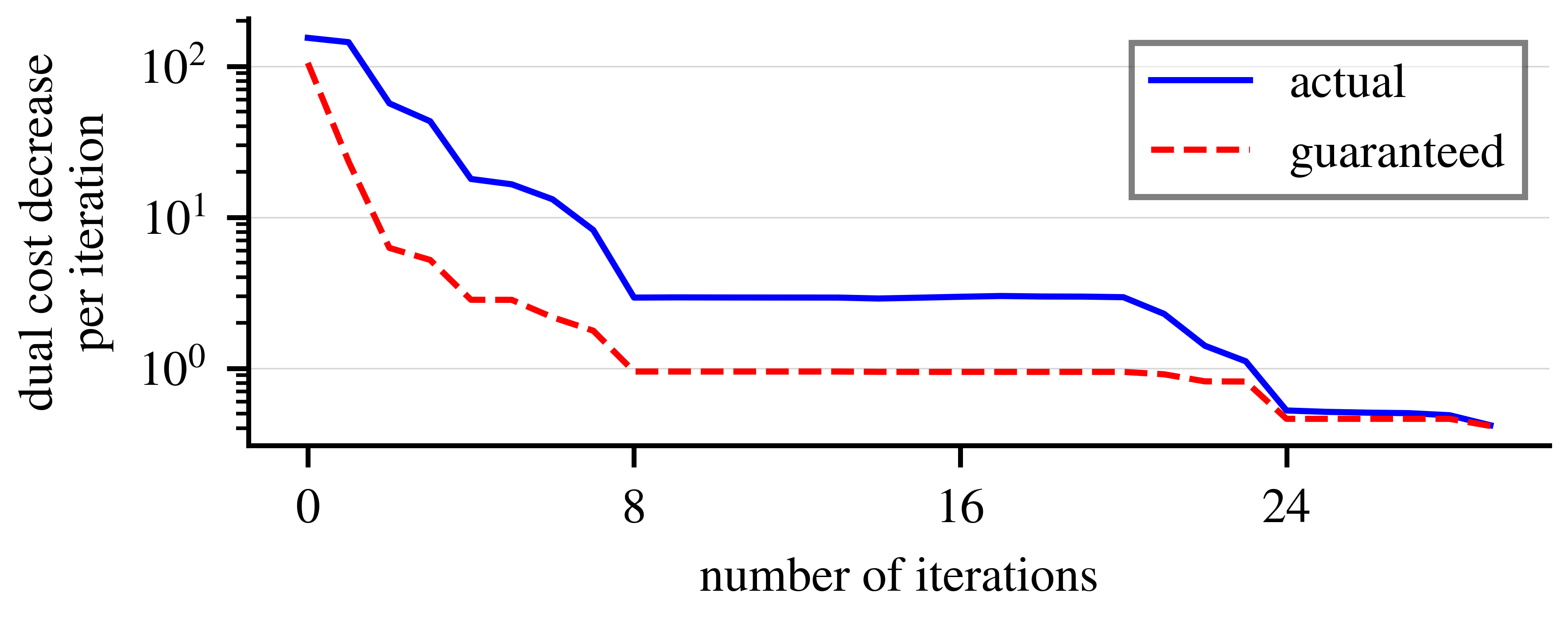}
    \caption{Verification of the incentive solver in Alg.~\ref{alg:iterative-method}.
    The incentive solver uses the majorization-minimization method to compute an optimal incentive $\lambda^*$ iteratively.
    For each iterate $\lambda^{(k)}$, the dual cost $-\tilde{g}^*$ is majorized by $-\tilde{g}^*(\cdot \,; \lambda^{(k)})$, which is then minimized (see Thm.~\ref{thm:linear-convex-incentives-iterative-method}).
    The solid line shows the actual decrease in the dual cost function $-\tilde{g}^*$ at each iteration, which is guaranteed to be greater than the decrease in $-\tilde{g}^*(\cdot \,; \lambda^{(k)})$ (the dashed line).
    The plot is shown for large EVs with $M = 200$.
    }
    \label{fig:ev-dual-cost-decrease}
\end{figure}

For the robust BiMPC \eqref{eq:robust-bimpc}, we use a robustness horizon $\horizon_r = 1$, i.e. we only account for the error in LoMPC solution at time step $0$ (see Rem.~\ref{rem:robustness-horizon}).
In Sec.~\ref{subsec:ev-bimpc-formulation}, we find conditions under which the robust BiMPC \eqref{eq:robust-bimpc} with $\horizon_r = 1$ is always feasible.
The LoMPC solution error bound at time step $0$ can be found from \eqref{eq:ev-w-error-bound} as follows:
\begin{equation}
\label{eq:ev-w0-error-bound}
    | \hat{w}^s_0 - w^s_0 | \leq \sqrt{N} \Delta y_0^s, \quad | \hat{w}^l_0 - w^l_0 | \leq \sqrt{N} \Delta y_0^l.
\end{equation}

\subsection{BiMPC Formulation}
\label{subsec:ev-bimpc-formulation}

The ISO decides the amount of electricity to be generated in each time interval, depending on the SoCs of the EV population and a forecast of the external electricity demand~\cite{zou2017efficient}.
In addition to the model in \cite{zou2017efficient}, we assume that the ISO controls a storage battery that can be charged or depleted depending on the total electricity demand.
The ISO has state variable $x_k \in \real$, which denotes the state of charge of its storage battery at the time step $k$.
The amount of electricity generated in time interval $k$ is denoted by $u^g_k \in \real$, and the electricity discharged from the storage battery in time interval $k$ is denoted by $u^b_k \in \real$.
The unit prices of electricity for small and large EVs are given by $\lambda^s, \lambda^l \in \mathset{K}^*$, respectively.
The state dynamics are as follows:
\begin{equation}
\label{eq:ev-bimpc-dynamics}
\begin{gathered}
    x_{k+1} = x_k + u^b_k, \\
    u^g_k = (M/2) (\Theta^s w^s_k + \Theta^l w^l_k) + D_k + u^b_k, \\
    0 \leq x_k \leq x_\text{max}, \\
    0 \leq u^g_k \leq u^g_\text{max}, \quad -u^b_\text{max} \leq u^b_k \leq u^b_\text{max},
\end{gathered}
\end{equation}
where $w^s$ (and likewise $w^l$), defined in \eqref{eq:ev-w-error-bound}, is the average electricity consumption for small EVs corresponding to unit price $\lambda^s$.
$D_k$ is the forecast of external electricity demand in time interval $k$~\cite{zou2017efficient}, and $x_\text{max}$ is the capacity of the storage battery.
$u^g_\text{max}$ is the maximum electricity generation in a time interval, and $u^b_\text{max}$ is the maximum charge/discharge amount for the storage battery in a time interval.

The BiMPC cost comprises electricity generation cost and aggregate charging cost~\cite{zou2017efficient}.
The electricity generation cost penalizes high electricity generation and is given by
\begin{equation}
\label{eq:ev-bimpc-generation-cost}
    \textstyle{\sum}_{k=0}^{\horizon-1} \, c^g (u^g_k)^{1.7},
\end{equation}
where $c^g > 0$ is a parameter.
The aggregate charging cost is similar to the LoMPC charging cost and is given by
\begin{align}
\label{eq:ev-bimpc-tracking-cost}
    \smash{\sum_{k=1}^N} \, \gamma^{k-N} \bigl[ \, & ( \textstyle{\sum}_{k'=1}^k w^s_{k'} - (y^s_\text{max} - y^s_{0, m}))^2 \\
    & + ( \textstyle{\sum}_{k'=1}^k w^l_{k'} - (y^l_\text{max} - y^l_{0, m}))^2 \,\bigr], \nonumber
\end{align}
where $y^s_{0, m}$ and $y^l_{0, m}$ are mean initial normalized SoCs and $\gamma > 1$ is a parameter.
% The aggregate charging cost imposes a penalty for uncharged EVs that increases exponentially over the horizon.
For the robust BiMPC formulation, we define $\hat{w}^s$ (and likewise $\hat{w}^l$) as the desired average electricity consumption for small EVs.
Then, for any $\hat{w}^s$ satisfying $\mathbb{0} \leq \hat{w}^s \leq w^s_\text{max} \mathbb{1}$, we can find an incentive $\lambda^s$ such that the error bounds in \eqref{eq:ev-w-error-bound} and \eqref{eq:ev-w0-error-bound} hold.
Using the LoMPC solution error bound \eqref{eq:ev-w0-error-bound} for robustness horizon $\horizon_r = 1$, we can write the constraints of the robust BiMPC \eqref{eq:robust-bimpc} as (see Rem.~\ref{rem:explicit-robust-bimpc-formulation}) follows:
\begin{equation}
\label{eq:ev-robust-bimpc-constraints}
\begin{gathered}
    \hat{u}^b = u^g - D - (M/2)(\Theta^s \hat{w}^s + \Theta^l \hat{w}^l), \\
    x_{k+1} = x_k + \hat{u}^b_k, \quad k \in \langle N\rangle, \\
    \mathbb{0} \leq u^g \leq u^g_\text{max} \mathbb{1}, \\
    -u^b_\text{max}\mathbb{1} + \Delta e_1 \leq \hat{u}^b \leq u^b_\text{max}\mathbb{1} - \Delta e_1, \\
    \Delta \mathbb{1} \leq \hat{x}_{1:\horizon} \leq x_\text{max} - \Delta \mathbb{1},
\end{gathered}
\end{equation}
where $e_1 = (1, 0, ..., 0)$ is a unit vector, and
\begin{equation}
\label{eq:ev-robust-bimpc-delta}
    \Delta = (M/2) \sqrt{N} (\Theta^s \Delta y^s_0 + \Theta^l \Delta y^l_0).
\end{equation}
Note that the constraint bounds in \eqref{eq:ev-robust-bimpc-constraints}, as compared to \eqref{eq:ev-bimpc-dynamics}, are reduced by an amount proportional to $\Delta$.
Moreover, if $u^g_\text{max}$ is large enough, the set of constraints \eqref{eq:ev-robust-bimpc-constraints} is feasible if, $\Delta \leq x_0 \leq x_\text{max} - \Delta$ and
\begin{equation}
\label{eq:ev-robust-constraints-feasibility}
    \Delta \leq u^b_\text{max}, \quad 2\Delta \leq x_\text{max}.
\end{equation}
We can find a feasible vector by setting $u^g$ such that $\hat{u}^b = \mathbb{0}$.

The Robust BiMPC problem \eqref{eq:robust-bimpc} is as follows:
% \begin{equation}
% \begin{split}
% \label{eq:ev-bimpc}
%     (u^{g*}, \hat{w}^{s*}, \hat{w}^{l*}) & (\xi_0) = \\
%     \argmin_{u^g, \hat{w}^s, \hat{w}^l} \ & f(u, \hat{w}^s, \hat{w}^l; \xi_0), \\
%     \text{s.t} \quad &
%     \hat{u}^b = u^g - D - (M/2)(\Theta^s \hat{w}^s + \Theta^l \hat{w}^l), \\
%     & \hat{x}_{1:\horizon} = A \hat{u}^b + x_0 \mathbb{1}, \\
%     & \mathbb{0} \leq u^g \leq u^g_\text{max} \mathbb{1}, \\
%     & -u^b_\text{max}\mathbb{1} + \Delta e_1 \leq \hat{u}^b \leq u^b_\text{max}\mathbb{1} - \Delta e_1, \\
%     & \Delta \mathbb{1} \leq \hat{x}_{1:\horizon} \leq x_\text{max} - \Delta \mathbb{1},
% \end{split}
% \end{equation}
\begin{align}
\label{eq:ev-bimpc}
    (u^{g*}, \hat{w}^{s*}, \hat{w}^{l*}) (\xi_0) = \argmin_{u^g, \hat{w}^s, \hat{w}^l} \ & f(u, \hat{w}^s, \hat{w}^l; \xi_0), \\
    \text{s.t} \quad &
    (u^g, \hat{w}^s, \hat{w}^l) \text{ satisfies \eqref{eq:ev-robust-bimpc-constraints}}, \nonumber
\end{align}
where the cost function $f$ is the sum of \eqref{eq:ev-bimpc-generation-cost} and \eqref{eq:ev-bimpc-tracking-cost} with $(w^s, w^l) = (\hat{w}^s, \hat{w}^l)$.
In the hierarchical MPC formulation (see Sec.~\ref{subsec:hierarchical-formulation}), we don't impose convexity assumptions on the BiMPC cost $f$.
Thus, more complex costs can be incorporated into the BiMPC formulation without affecting the solution.

Lastly, we describe the quantities known to the ISO.
The ISO knows the values of $N$, $\Theta^s$, $\Theta^l$, $w^s_\text{max}$, and $w^l_\text{max}$.
Notably, the battery degradation costs are private to the EVs.
Additionally, at each time step, the ISO gets the current normalized SoCs of all EVs and thus knows $\xi_0 = (x_0, \bm{y}_0)$.
These are the same as the assumptions in \cite{zou2017efficient}.
We assume that EVs are truthful when reporting their SoCs.

\begin{remark} (Differential pricing based on initial SoC)
\label{rem:ev-differential-pricing}
For smaller values of $\Delta$, the feasible set of \eqref{eq:ev-bimpc} is larger.
To reduce the value of $\Delta$, we can reduce $\Delta y^s_0$ and $\Delta y^l_0$ by differentially pricing EVs depending on their initial normalized SoC $y^i_0$.
As an example, the unit price of electricity $\lambda^s$ for a small EV $i$ with initial SoC $y^i_0 \in [0.3, 0.4)$ might be different from that of a small EV $j$ with $y^j_0 \in [0.4, 0.5)$.
Differential pricing allows for tighter control on $\Delta y^s_0, \Delta y^l_0$ for EVs in each interval, and thus the error bound $\Delta$.
In the extreme case, we set different prices for each EV, allowing us to get an error bound of $0$.
In general, we can partition the set of small and large EVs into $P$ sets depending on $y^i_0$ and set electricity prices accordingly (see Rem.~\ref{rem:partition-of-lompcs}).
For clarity of presentation, we don't consider differential pricing in the BiMPC formulation \eqref{eq:ev-bimpc}, but we do so in the implementation\footref{code}.
\end{remark}

\subsection{Simulation Setup}
\label{subsec:ev-simulation-setup}

\begin{figure}%[!htp]
    \centering
    \includegraphics[width=0.99\columnwidth]{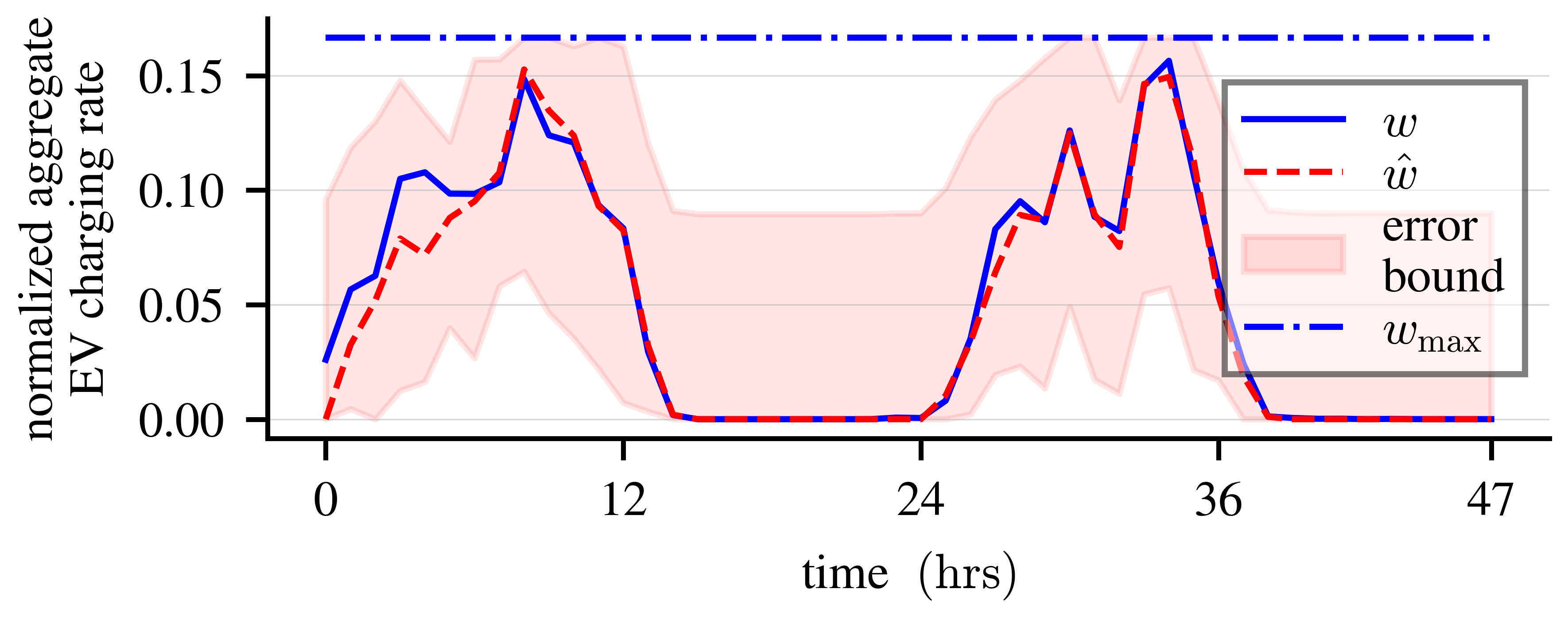}
    \caption{The total electricity consumption by the EVs over $48$ hrs, normalized by $B$ (see \eqref{eq:ev-lompc-b}).
    The solid line shows the actual electricity consumption $w$, while the dotted line shows the predicted electricity consumption $\hat{w}$ (the team-optimal solution from the BiMPC).
    The colored region shows the error bound; the actual consumption $w$ is guaranteed to be within the error bound range around the predicted consumption $\hat{w}$.
    The actual consumption reduces during peak external demand and rises when external demand is low.
    }
    \label{fig:ev-aggregate-charging-rate}
\end{figure}

The external electricity demand vector $D$, shown in Fig.~\ref{fig:ev-demand-electricity-generation}, is obtained for a day in August 2024 in the Midcontinent ISO region of North America and scaled down by a factor of $4000$.
The external demand is at its peak during the afternoon and at its lowest during the early morning.
The simulations are shown for a duration of $48$ hrs, with MPC horizon length $\horizon = 12$ hrs and an EV population size of $1000$.
All the results are normalized with respect to the total EV battery capacity $B$, as defined in \eqref{eq:ev-lompc-b}.

All the convex optimization problems (the robust BiMPC in \eqref{eq:ev-bimpc}, the incentive solver $\lambda$ update step in line \ref{alg-line:lambda-update} Alg.~\ref{alg:iterative-method}, and the EV LoMPCs) are formulated using CVXPY \cite{diamond2016cvxpy} in parametric form and solved using the CLARABEL solver \cite{goulart2024clarabel}.
The simulation parameters and the Python code for all the results can be found in the repository\footref{code}.

\subsection{Results}
\label{subsec:ev-results}

\begin{figure}%[!htp]
    \centering
    \includegraphics[width=0.99\columnwidth]{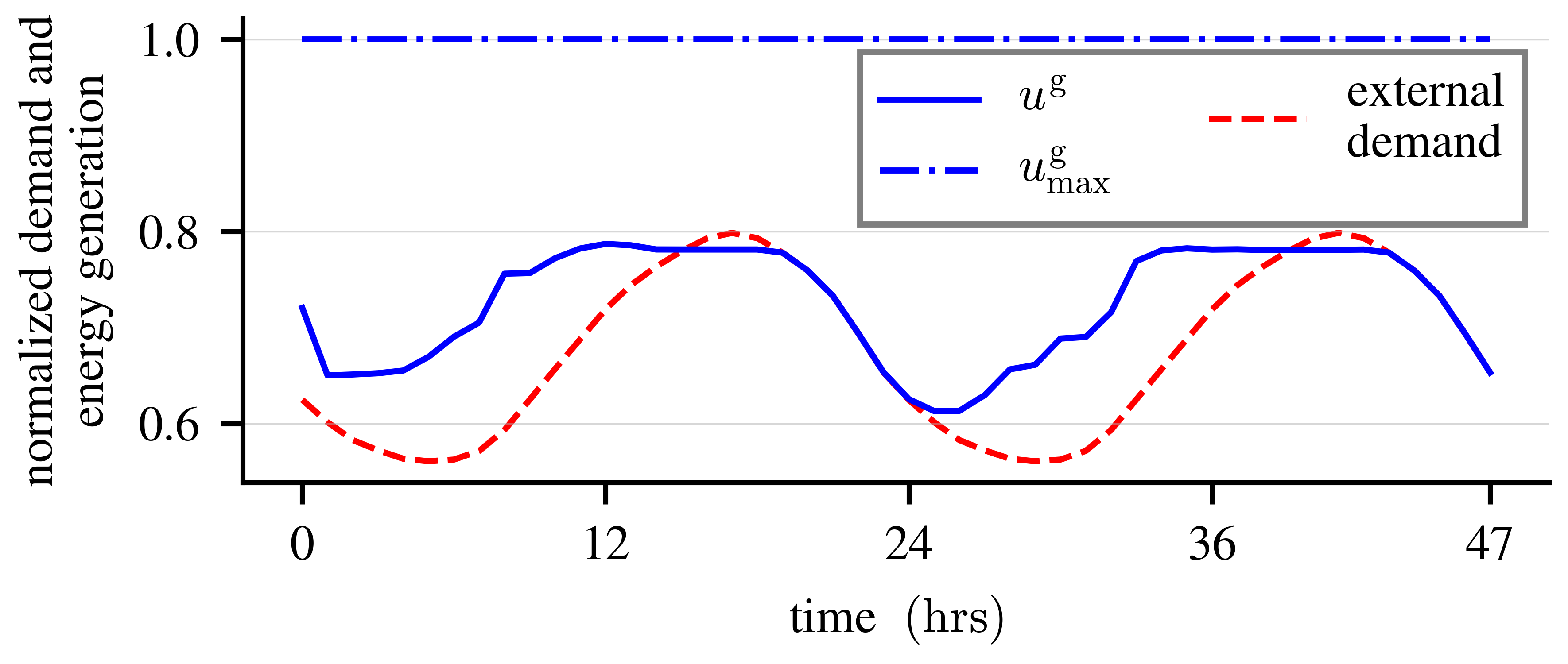}
    \caption{Electricity generation by the ISO over $48$ hrs, normalized by $B$ (see \eqref{eq:ev-lompc-b}).
    The solid line shows the amount of electricity generated $u^g$, while the dotted line shows the external demand.
    The total electricity supply from the ISO always fulfills the total demand.
    The unit price of electricity is set by the ISO so that the total EV electricity consumption fills the overnight demand valley~\cite{ma2013decentralized,zou2017efficient}.
    }
    \label{fig:ev-demand-electricity-generation}
\end{figure}

\begin{figure}%[!htp]
    \centering
    \includegraphics[width = 0.99\linewidth]{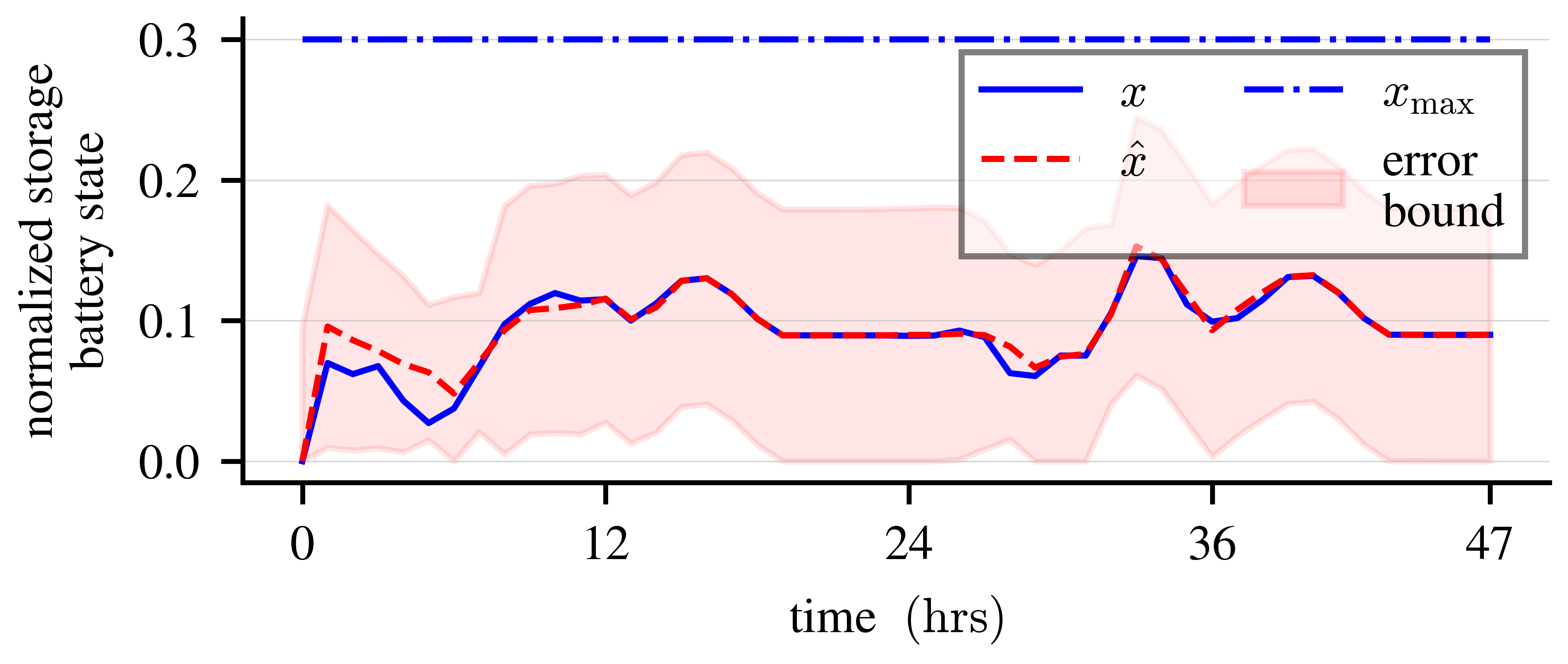}
    \caption{The SoC of the storage battery over $48$ hrs, normalized by $B$ (see \eqref{eq:ev-lompc-b}).
    The solid line shows the actual SoC, while the dotted line shows the predicted SoC (from the BiMPC).
    The colored region shows the error bound; the actual SoC is guaranteed to lie within the error bound range of the predicted SoC.
    We note that the plot only shows the usable SoC range of the storage battery; most batteries have a minimum SoC, which is used as a baseline for the plot.
    The ISO stores the excess electricity generated in a storage battery.
    The storage battery is charged during low external demand and depleted during peak external demand to reduce electricity generation costs.
    Since the total EV electricity consumption is uncertain, the ISO always maintains a nonzero storage battery SoC.
    }
    \label{fig:ev-storage-battery-state}
\end{figure}

The solution trajectories obtained from the simulation are shown in Figs.~\ref{fig:ev-aggregate-charging-rate}, \ref{fig:ev-demand-electricity-generation}, and \ref{fig:ev-storage-battery-state}.
Fig.~\ref{fig:ev-demand-electricity-generation} shows that the ISO sets the unit price of electricity to ensure that electricity generation is evenly spread throughout the day.
Fig.~\ref{fig:ev-aggregate-charging-rate} shows the total electricity consumption by the EV population.
We can observe that EV electricity consumption is close to zero during the afternoon and high during the early morning.
Fig.~\ref{fig:ev-storage-battery-state} shows the state of charge of the storage battery.
The ISO charges the storage battery during low external demand and depletes it during peak external demand.
Moreover, the ISO always maintains a nonzero storage battery SoC to ensure that unexpected EV electricity demands can be satisfied.

During the simulation, a total of $3793$ small EVs and $1715$ large EVs are fully charged, which is $38.7\%$ and $29.2\%$ of the maximum possible value (at the highest charging rate), respectively.
On average, the incentive solver requires $14.7$ iterations and $22.7$ iterations for convergence for small and large EVs, respectively.

Ideally, the total electricity consumption by the EVs is such that there are no spikes in electricity generation.
This type of electricity consumption is called valley-filling~\cite{ma2013decentralized,zou2017efficient}.
From Figs.~\ref{fig:ev-aggregate-charging-rate} and \ref{fig:ev-demand-electricity-generation}, we can see that spikes in electricity generation are avoided by shifting EV consumption loads towards periods of low external demand.
% However, valley-filling consumption can result in high battery degradation costs for the EV population.
In our formulation, we allow EVs to set their optimal charging rates while also ensuring that the goals of the ISO are also met.
% From Fig.~\ref{fig:ev-demand-electricity-generation}, we can see that valley-filling is achieved.

\section{Conclusions}
\label{sec:conclusions}

In this paper, we presented a control framework for incentive hierarchical MPCs with multiple LoMPCs.
We defined team-optimal solutions and incentive controllability for the incentive hierarchical MPC.
We defined the dual of the incentive LoMPC cost function and, using tools from convex analysis, established a connection between the minima of the dual function and the set of optimal incentives.
We showed that, under some assumptions, the hierarchical MPC problem with a linear-convex incentive structure is incentive controllable, and the optimal incentives can be computed iteratively.
We also proved that, for the case of multiple LoMPCs, the team-optimal solution can be attained up to an error bound and provided an algorithm for computing optimal incentives.
Notably, the proposed algorithm scales well with the number of LoMPCs and does not require knowing the LoMPC cost function.
Finally, we considered dynamic electricity price control for large-scale EV charging.
We used our algorithm to show that an ISO can set the electricity prices to approximately minimize a social cost while also guaranteeing constraint satisfaction.
Our BiMPC formulation resulted in an electricity consumption pattern that fills the overnight demand valley.

\appendices

%%%%%%%%%%%%%%%%%%%%
% PROOFS
%%%%%%%%%%%%%%%%%%%%
\section{Proofs}
\label{app:proofs}

\subsection{Proof of Lem.~\ref{lem:linear-convex-incentive-conjugate-grad}}
\label{subapp:proof-linear-convex-incentive-conjugate-grad}

Since $\langle \lambda, \phi(w)\rangle$ is linear in $\lambda$ for each $w$ and $\bar{g}^*$ is the pointwise minimum of a family of affine functions in $\lambda$, $-\bar{g}^*$ is convex~\cite[Sec.~3.2.3]{boyd2004convex}.
Since, for each $\lambda \succeq_{\mathset{K}^*} 0$, $g + \langle \lambda, \phi(\cdot)\rangle$ is $\constsc$-strongly convex (by Assumption~\ref{assum:properties-of-gi}) with a closed convex domain, \eqref{eq:linear-convex-dual} has a unique minimum and $-\bar{g}^*(\lambda) < \infty$ for each $\lambda \in \mathset{K}^*$.
So, $\dom{\bar{g}^*} = \mathset{K}^*$.

By Danskin's Theorem \cite[Prop.~B.25]{bertsekas1997nonlinear}%
% (also see \cite[Prop.~A.22]{bertsekas1971control})
, $-\bar{g}^*$ is subdifferentiable on $\setint(\mathset{K}^*)$ with subdifferential given by
% Since $-\bar{g}^*$ is a proper convex function, it is subdifferentiable on the interior of its domain, $\setint(\mathset{K}^*)$ (see Sec.~\ref{subsubsec:subdifferential-of-a-convex-function}).
% The subdifferential of $-\bar{g}^*$ is given by \cite[Prop.~A.22]{bertsekas1971control} as,
\begin{equation*}
    \partial (-\bar{g}^*)(\lambda) = \bigl\{ -\phi(\bar{w}): \bar{w} \in \textstyle{\argmin}_w \{g(w) + \langle \lambda, \phi(w)\rangle\} \bigr\}.
\end{equation*}
Using the fact that \eqref{eq:linear-convex-dual} has a unique minimum for all $\lambda \in \mathset{K}^*$, we get that $\bar{g}^*$ is differentiable on $\setint(\mathset{K}^*)$ and that its gradient is given by \eqref{eq:linear-convex-incentive-conjugate-grad}.
If $\lambda \in \setbd(\mathset{K}^*)$ (the boundary of $\mathset{K}^*$), $-\phi(w^*(\lambda))$ is a subgradient of $-\bar{g}^*$.
For ease of notation, we assign $\nabla \bar{g}^*(\lambda) = \phi(w^*(\lambda))$ when $\lambda \in \setbd(\mathset{K}^*)$.

Next, we prove the continuity of $w^*(\cdot)$.
Let $\{\lambda^{(k)} \} \rightarrow \bar{\lambda}$, $\lambda^{(k)} \in \mathset{K}^*$, $\lVert \lambda^{(k)} \rVert \leq M$, and $w^{(k)} = w^*(\lambda^{(k)})$.
First, we show that $\{w^{(k)}\}$ is bounded.
Choose any $w' \in \dom{g}$ and $v \in \partial g(w')$.
Then, by $\constsc$-strong convexity of $g$ and $\mathset{K}$-convexity of $\phi$ (see Sec.~\ref{subsubsec:strong-convexity-and-lipschitz-smoothness}),
\begin{align*}
    g(w) + \langle \lambda^{(k)}, \phi(w)\rangle & \geq g(w') + \langle v, w - w'\rangle + \frac{m}{2} \lVert w - w'\rVert^2 \\
    & \text{\phantom{$\geq$}} + \langle \lambda^{(k)}, \phi(w') + D\phi(w') [w - w']\rangle, \\
    & \geq g(w') - M \lVert \phi(w')\rVert + \frac{m}{2} \lVert w - w'\rVert^2 \\
    & \text{\phantom{$\geq$}} - (\lVert v\rVert + M \lVert D \phi(w') \rVert) \lVert w - w'\rVert.
\end{align*}
Since $g(w') + \langle \lambda^{(k)}, \phi(w')\rangle \leq g(w') + M \lVert \phi(w')\rVert$, we have
% \begin{equation*}
%     \lVert w^k - w' \rVert \leq M' := \frac{ \lVert v\rVert + M \lVert D \phi(w') \rVert + \sqrt{\constsc M \lVert \phi(w')\rVert}}{(\constsc/2)}.
% \end{equation*}
% % So, $g(w) + \langle \lambda^k, \phi(w)\rangle$ is uniformly lower bounded by a radially unbounded function.
% Thus $\{w^k \}$ is bounded, meaning we can rewrite \eqref{eq:linear-convex-incentive-conjugate-grad-w} as,
% \begin{equation*}
%     w^k = w^*(\lambda^k) = \argmin_{w: \lVert w - w'\rVert \leq M'} \{g(w) + \langle \lambda^k, \phi(w)\rangle\}.
% \end{equation*}
% Applying the Maximum Theorem~\cite[Sec.~E.3]{ok2007real} to the above equation, it follows that $\lim_{k \rightarrow \infty} w^*(\lambda^k) = w^*(\bar{\lambda})$, meaning $w^*(\cdot)$ is continuous.
\begin{equation*}
    \lVert w^{(k)} - w' \rVert \leq \frac{ \lVert v\rVert + M \lVert D \phi(w') \rVert + \sqrt{\constsc M \lVert \phi(w')\rVert}}{(\constsc/2)}.
\end{equation*}
Thus $\{w^{(k)} \}$ is bounded.
Let $\bar{w}$ be a limit point of $\{w^{(k)}\}$ and $\{w^{(l)}\}$ be the corresponding subsequence.
Then,
\begin{align*}
    \bar{g}^*(\bar{\lambda}) & \leq g(\bar{w}) + \langle \bar{\lambda}, \phi(\bar{w})\rangle = \lim_{l \rightarrow \infty} \bigl(g(w^{(l)}) + \langle \lambda^{(l)}, \phi(w^{(l)})\rangle \bigr), \\
    & = \lim_{l \rightarrow \infty} \bar{g}^*(\lambda^{(l)}) \leq \lim_{l \rightarrow \infty} \bigl(g(w^*(\bar{\lambda})) + \langle \lambda^{(l)}, \phi(w^*(\bar{\lambda}))\rangle\bigr), \\
    & = g(w^*(\bar{\lambda})) + \langle \bar{\lambda}, \phi(w^*(\bar{\lambda}))\rangle = \bar{g}^*(\bar{\lambda}).
\end{align*}
So, any limit point $\bar{w}$ of $\{w^{(k)}\}$ minimizes $g(w) + \langle \bar{\lambda}, \phi(w)\rangle$.
Therefore, there is a unique limit point of $\{w^{(k)}\}$ and $\{w^{(k)}\} \rightarrow w^*(\bar{\lambda})$, i.e. $w^*(\cdot)$ is continuous.
\hidetext{Alternatively, we can use the Maximum Theorem~\cite[Sec.~E.3]{ok2007real}.}

\subsection{Proof of Lem.~\ref{lem:linear-convex-lipschitz-smooth-dual}}
\label{subapp:proof-linear-convex-lipschitz-smooth-dual}

From Lem.~\ref{lem:linear-convex-incentive-conjugate-grad}, $-\bar{g}^*$ is a convex function.
Since $\bar{w} = w^*(\bar{\lambda})$, the following optimality condition must hold~\cite[Thm.~23.8]{rockafellar1997convex}:
\begin{equation*}
    0 \in \partial g(\bar{w}) + \{D \phi(\bar{w})^\top \bar{\lambda}\}.
\end{equation*}
So, $\exists v \in \partial g(\bar{w})$ such that $v = -D \phi(\bar{w})^\top \bar{\lambda}$.
$g$ is $\constsc$-strongly convex under Assumption~\ref{assum:properties-of-gi}, and so for all $w \in \dom{g}$ we have that (see Sec.~\ref{subsubsec:strong-convexity-and-lipschitz-smoothness}% and \cite[Cor.~3.2.1]{nesterov2018lectures}
)
\begin{equation*}
    g(w) \geq g(\bar{w}) + \langle v, w - \bar{w}\rangle + (\constsc/2) \lVert w - \bar{w}\rVert^2.
\end{equation*}
Under Def.~\ref{def:linear-convex-incentive}, for any $\lambda \in \mathset{K}^*$, $\langle \lambda, \phi(\cdot) \rangle$ is convex and differentiable.
By \eqref{eq:subdifferential-definition}, we have that for all $w \in \dom{g}$,
\begin{equation*}
\begin{split}
    \langle \lambda, \phi(w)\rangle & \geq \langle \lambda, \phi(\bar{w})\rangle + \langle \lambda, D \phi(\bar{w})[w - \bar{w}]\rangle, \\
    & = \langle \lambda, \phi(\bar{w})\rangle + \langle D \phi(\bar{w})^\top \lambda, w - \bar{w}\rangle.
\end{split}
\end{equation*}
Adding the above two inequalities and using the fact that $v = -D \phi(\bar{w})^\top \bar{\lambda}$, we get that for all $w \in \dom{g}$ and $\lambda \in \mathset{K}^*$,
\begin{equation*}
\begin{split}
    g(w) + \langle \lambda, \phi(w)\rangle & \geq g(\bar{w}) + \langle \lambda, \phi(\bar{w})\rangle + (\constsc/2) \lVert w - \bar{w}\rVert^2\\
    & \text{\phantom{$\geq$}} + \langle D \phi(\bar{w})^\top (\lambda - \bar{\lambda}), w - \bar{w}\rangle.
\end{split}
\end{equation*}
Note that for a fixed $\lambda$, the RHS of the above inequality is quadratic in $w$.
Minimizing both the LHS and the RHS over $w \in \real^{\Nw\horizon}$ and using \eqref{eq:linear-convex-dual}, we get that for all $\lambda \in \mathset{K}^*$, % (after minimizing the quadratic function in the RHS),
\begin{equation*}
\begin{split}
    \bar{g}^*(\lambda) & \geq g(\bar{w}) + \langle \lambda, \phi(\bar{w})\rangle - \lVert D \phi(\bar{w})^\top (\lambda - \bar{\lambda}) \rVert^2 / (2\constsc), \\
    & = \bar{g}^*(\bar{\lambda}) + \langle \lambda - \bar{\lambda}, \phi(\bar{w})\rangle - \lVert D \phi(\bar{w})^\top (\lambda - \bar{\lambda}) \rVert^2 / (2\constsc), \!\!
\end{split}
\end{equation*}
which is the same as \eqref{eq:linear-convex-lipschitz-smooth}.

Let $\Lambda \subset \mathset{K}^*$ be any compact set.
We show that the restriction $-\bar{g}^*|_{\Lambda}$ is Lipschitz smooth.
By continuity of the function $w^*(\cdot)$ defined in \eqref{eq:linear-convex-incentive-conjugate-grad-w} (Lem.~\ref{lem:linear-convex-incentive-conjugate-grad}), $w^*(\Lambda)$ is a compact set.
By continuous differentiability of $\phi$ (Def.~\ref{def:linear-convex-incentive}), $\lVert D\phi(w)\rVert$ is bounded on $w^*(\Lambda)$.
Then, Lipschitz smoothness of $-\bar{g}^*|_{\Lambda}$ follows from \eqref{eq:linear-convex-lipschitz-smooth} and \cite[Thm.~2.1.5]{nesterov2018lectures}.

\subsection{Proof of Thm.~\ref{thm:linear-convex-incentives-iterative-method}}
\label{subapp:proof-linear-convex-incentives-iterative-method}

% Consider the function $\tilde{g}^*(\lambda) = \bar{g}^*(\lambda) - \langle \lambda, \phi(w^*)\rangle$.
From Lem.~\ref{lem:linear-convex-incentive-conjugate-grad}, $-\tilde{g}^*$ is a differentiable convex function on $\mathset{K}^*$.
The gradient of $\tilde{g}^*$ can be computed using \eqref{eq:linear-convex-incentive-conjugate-grad} as follows:
\begin{equation*}
    \nabla (-\tilde{g}^*)(\lambda) = -\phi(w^*(\lambda)) + \phi(w^*).
\end{equation*}
By the incentive controllability assumption, Assum.~\ref{assum:linear-convex-incentive-controllability}, $\exists \lambda^* \in \mathset{K}^*$ such that $w^*(\lambda^*) = w^*$.
So, $\nabla (-\tilde{g}^*)(\lambda^*) = 0$, i.e., $\lambda^*$ is a minimizer of $-\tilde{g}^*$, and $-\tilde{g}^*$ is lower bounded.
Next, using \eqref{eq:linear-convex-lipschitz-smooth} with $\bar{\lambda} = \lambda^{(k)}$ and $\bar{w} = w^{(k)}$, we get that
\begin{align}
\label{eq:linear-convex-incentive-iterative-method-proof-majorization}
    % -\tilde{g}^*(\lambda) & \leq -\tilde{g}^*(\lambda^k) + \langle \lambda - \lambda^k, \phi(w^*) - \phi(w^k)\rangle \\
    % & \text{\phantom{$\leq$}} + \frac{1}{2\constsc} \lVert D \phi(w^k)^\top (\lambda - \lambda^k) \rVert^2 + \epsilon_k \lVert \lambda - \lambda^k \rVert^2, \\
    -\tilde{g}^*(\lambda) & \leq -\tilde{g}^*(\lambda; \lambda^{(k)}), \\
    & \leq -\hat{g}^*(\lambda; \lambda^{(k)}) := -\tilde{g}^*(\lambda; \lambda^{(k)}) + \epsilon^{(k)} \lVert \lambda - \lambda^{(k)} \rVert^2, \nonumber
\end{align}
% where $\epsilon_k \lVert \lambda - \lambda^k \rVert^2$ is added for regularization.
Thus, $-\hat{g}^*(\cdot \,; \lambda^{(k)})$ is a differentiable convex function that majorizes $-\tilde{g}^*(\lambda)$, with $-\hat{g}^*(\lambda^{(k)}; \lambda^{(k)}) = -\tilde{g}^*(\lambda^{(k)})$.
% So, \eqref{eq:linear-convex-incentives-iterative-method-const-step} is a majorization-minimization scheme where we first majorize $-\tilde{g}^*(\lambda)$ at iterate $\lambda^k$ using $-\tilde{g}(\lambda; \lambda^k)$, and then minimize $-\tilde{g}^*(\lambda; \lambda^k)$ to obtain $\lambda^{k+1}$.
Using Lem.~\ref{lem:linear-convex-lipschitz-smooth-dual}, and the assumption that $\{\epsilon^{(k)}\}$ and $\{\lambda^{(k)} \}$ are bounded, $-\hat{g}^*(\cdot \,; \lambda^{(k)})$ is Lipschitz smooth, with a constant independent of $k$.
Thus, by the convergence of gradient descent for Lipschitz smooth functions, any limit point $\lambda^*$ of $\{\lambda^{(k)} \}$ is a stationary point of $-\tilde{g}^*$ \cite[Prop.~2.3.2]{bertsekas1997nonlinear}\hidetext{$\ $(also see \cite[Sec.~5]{pilanci2018ee364b})}.
Because $\dom(-\tilde{g}^*) = \mathset{K}^*$, by the necessary and sufficient condition for optimality of convex problems \cite[Prop.~2.1.2(a)]{bertsekas1997nonlinear},
\begin{align*}
    & \nabla (-\tilde{g}^*)(\lambda^*) = \phi(w^*) - \phi(w^*(\lambda^*)) \in \mathset{K}, \\
    & \Rightarrow \phi(w^*) \succeq_{\mathset{K}} \phi(w^*(\lambda^*)).
\end{align*}
By Assumption~\ref{assum:linear-convex-incentive-controllability}, this implies that $w^* = w^*(\lambda^*)$ for all limit points $\lambda^*$ of $\{ \lambda^{(k)}\}$.
By continuity of $w^*(\cdot)$ (Lem.~\ref{lem:linear-convex-incentive-conjugate-grad}), $\lim_{k \rightarrow \infty} w^{(k)} = \lim_{k \rightarrow \infty} w^*(\lambda^{(k)}) = w^*(\lambda^*) = w^*$.

The dual cost decrease guarantee \eqref{eq:linear-convex-incentives-iterative-method-dual-cost-decrease} follows from \eqref{eq:linear-convex-incentive-iterative-method-proof-majorization} applied to $\lambda^{(k+1)}$ and the equality $-\tilde{g}^*(\lambda^{(k)}) = -\hat{g}^*(\lambda^{(k)}; \lambda^{(k)})$.

Let $\lambda^* \in \mathset{K}^*$ be an optimal incentive for $w^*$.
Then, by Lem.~\ref{lem:linear-convex-incentive-conjugate-grad}, $\nabla (-\tilde{g}^*)(\lambda^*) = 0$.
Using \eqref{eq:subdifferential-definition}, Lem.~\ref{lem:linear-convex-incentive-conjugate-grad}, and Lem.~\ref{lem:linear-convex-lipschitz-smooth-dual}, we have that for any $\lambda \in \mathset{K}^*$,
\begin{equation*}
-\tilde{g}^*(\lambda^*) \leq -\tilde{g}^*(\lambda) \leq -\tilde{g}^*(\lambda^*) + \frac{1}{2\constsc} \lVert D\phi(w^*)^\top (\lambda - \lambda^*) \rVert^2.
\end{equation*}
Thus, if $D\phi(w^*)^\top (\bar{\lambda} - \lambda^*) = 0$ for some $\bar{\lambda} \in \mathset{K}^*$, then $-\tilde{g}^*(\lambda^*) = -\tilde{g}^*(\bar{\lambda})$, i.e., $\bar{\lambda}$ minimizes $-\tilde{g}^*(\lambda)$.
Thus, $\bar{\lambda}$ is an optimal incentive.

\section{Robust BiMPC Formulation}
\label{app:robust-bimpc-formulation}

To write the robust BiMPC formulation \eqref{eq:robust-bimpc} for the problem structure defined in \eqref{eq:bimpc}, we can explicitly write the constraints \eqref{subeq:robust-bimpc-w-g-feasibility} and \eqref{subeq:robust-bimpc-w-f-feasibility}.
We assume, for now, that the state constraint sets $\mathset{X}$ and $\mathset{X}_\Omega$ in \eqref{eq:bimpc} are polytopic.
% In other words, we assume that the state constraints $x_k \in \mathset{X}$ for all $k \in \langle \horizon \rangle$ and $x_\horizon \in \mathset{X}_\Omega$ of the BiMPC \eqref{eq:bimpc} are encoded by
% \begin{equation*}
%     \{x \in \real^{\Nx(\horizon+1)}: C x \leq d\}.
% \end{equation*}
In other words, we assume that the state constraints $x_k \in \mathset{X}$ for all $k \in \langle \horizon \rangle$ and $x_\horizon \in \mathset{X}_\Omega$ of the BiMPC \eqref{eq:bimpc} are encoded by $\{x \in \real^{\Nx(\horizon+1)}: C x \leq d\}$.
%
% Given $(\xi_0, u, \hat{w})$, the constraint \eqref{subeq:robust-bimpc-w-g-feasibility} can be expressed using the LoMPC problem \eqref{eq:ith-lompc} as follows (recall from Rem.~\ref{rem:multiple-lompcs-no-state-constraints} that the LoMPCs cannot have any state constraints):
% \begin{equation*}
%     \hat{w}_k \in \mathset{W}, \quad k \in \langle \horizon \rangle.
% \end{equation*}
Given $(\xi_0, u, \hat{w})$, the constraint \eqref{subeq:robust-bimpc-w-g-feasibility} can be expressed using the LoMPC problem \eqref{eq:ith-lompc} as $\hat{w}_k \in \mathset{W}, \ k \in \langle \horizon \rangle$ (recall from Rem.~\ref{rem:multiple-lompcs-no-state-constraints} that the LoMPCs cannot have any state constraints).
The domain of the BiMPC cost function, $\dom{f}$, in the BiMPC constraint \eqref{subeq:robust-bimpc-w-f-feasibility} consists of dynamics constraints \eqref{subeq:bimpc-dynamics} and state and input constraints \eqref{subeq:bimpc-stage-constraints}, \eqref{subeq:bimpc-terminal-constraints}.
% The input constraint in \eqref{subeq:bimpc-stage-constraints} can be directly written as follows:
% \begin{equation*}
%     u_k \in \mathset{U}, \quad k \in \langle \horizon \rangle.
% \end{equation*}
The input constraint in \eqref{subeq:bimpc-stage-constraints} can be directly written as $u_k \in \mathset{U}, \ k \in \langle \horizon \rangle$.

Given $(x_0, u, \hat{w})$, the dynamics constraint \eqref{subeq:bimpc-dynamics} can be expressed in batch form as follows \cite[Sec.~8.2]{borrelli2017predictive}:
\begin{equation*}
    x = \bar{A}^u x_0 + \bar{B}^u_1 u + \bar{B}^u_2 \hat{w},
\end{equation*}
where $\bar{A}^u \in \real^{\Nx(\horizon+1) \times \Nx}$, $\bar{B}^u_1 \in \real^{\Nx(\horizon+1) \times \Nu\horizon}$, and $\bar{B}^u_2 \in \real^{\Nx(\horizon+1) \times \Nw\horizon}$ are the batch matrices constructed from $A^u$, $B^u_1$, and $B^u_2$.
The robustness constraint \eqref{subeq:robust-bimpc-w-f-feasibility} requires that $C x \leq d$ is satisfied for all $x$ of the form
\begin{equation*}
    x = \bar{A}^u x_0 + \bar{B}^u_1 u + \bar{B}^u_2 \hat{w} + \bar{B}^u_2 \tilde{w},
\end{equation*}
where $\lVert \tilde{w}\rVert \leq \bar{\theta}/\constsc$.
This is equivalent to
\begin{equation*}
    \sup_{\tilde{w}: \lVert \tilde{w}\rVert \leq \bar{\theta}/\constsc} C_j (\bar{A}^u x_0 + \bar{B}^u_1 u + \bar{B}^u_2 \hat{w} + \bar{B}^u_2 \tilde{w}) \leq d_j, \quad \forall j
\end{equation*}
where $C_j, d_j$ are the rows of $C$ and $d$ (see \cite[Sec.~6.4]{boyd2004convex}).
Thus, the robustness constraint can be written as
\begin{equation*}
    C_j (\bar{A}^u x_0 + \bar{B}^u_1 u + \bar{B}^u_2 \hat{w}) + (\bar{\theta}/\constsc)\lVert C_j \bar{B}^u_2\rVert_* \leq d_j, \quad \forall j,
\end{equation*}
where $\lVert \cdot \rVert_*$ is the dual norm~\cite[App.~A.1.6]{boyd2004convex}.
Combining all the above constraints, the constraints of the robust MPC \eqref{eq:robust-bimpc} can be written as
\begin{subequations}
\begin{align}
    & \hat{w}_k \in \mathset{W}, \ u_k \in \mathset{U}, \quad k \in \langle \horizon \rangle, \\
    & \hat{x} = \bar{A}^u x_0 + \bar{B}^u_1 u + \bar{B}^u_2 \hat{w}, \\
    & C_j \hat{x} + (\bar{\theta}/\constsc)\lVert C_j \bar{B}^u_2\rVert_* \leq d_j, \quad \forall j.
\end{align}
\end{subequations}
If some state constraint sets $\mathset{X}$ and $\mathset{X}_\Omega$ in \eqref{eq:bimpc} are given as quadratic constraints, the S-procedure can be used to reformulate the robustness constraint \eqref{subeq:robust-bimpc-w-f-feasibility} \cite[App.~B]{boyd2004convex}.

\bibliographystyle{IEEEtran}
% \section*{References}
% \vspace{-15pt}
\bibliography{references}

\end{document}